\newtheorem{thm}{Theorem}
\newtheorem{prop}{Proposition}
\newtheorem{cor}{Corollary}[thm]
\def\BibTeX{{\rm B\kern-.05em{\sc i\kern-.025em b}\kern-.08em
T\kern-.1667em\lower.7ex\hbox{E}\kern-.125emX}}
\begin{document}
\title{%
\vspace{-1.3cm}
{\fontsize{12}{18}\selectfont   \normalfont To Appear in the 45th IEEE Symposium on Security and Privacy, May 20-23, 2024.}\\[3ex]
Multi-Instance Adversarial Attack on GNN-Based Malicious Domain Detection}
\author{\IEEEauthorblockN{Mahmoud Nazzal\IEEEauthorrefmark{1},
Issa Khalil\IEEEauthorrefmark{4},
Abdallah Khreishah\IEEEauthorrefmark{1}, 
NhatHai Phan\IEEEauthorrefmark{1}, and
Yao Ma\IEEEauthorrefmark{1}}
\IEEEauthorblockA{\IEEEauthorrefmark{1}New Jersey Institute of Technology,
Newark, NJ 07102, USA\\}
\IEEEauthorblockA{\IEEEauthorrefmark{4}Qatar Computing Research Institute (QCRI), Hamad Bin Khalifa University (HBKU), Doha, Qatar\\
Email: mn69@njit.edu, ikhalil@hbku.edu.qa,\{abdallah,phan,yao.ma\}@njit.edu 
}
}

\maketitle

\begin{abstract}
Malicious domain detection (MDD) is an open security challenge that aims to detect if an Internet domain name is associated with cyber attacks. Many techniques have been applied to tackle this problem, among which graph neural networks (GNNs) are deemed one of the most effective approaches. GNN-based MDD employs domain name system (DNS) logs to represent Internet domains as nodes in a graph, dubbed domain maliciousness graph (DMG) and trains a GNN model to infer the maliciousness of Internet domains by leveraging the maliciousness of already identified ones. As this method heavily relies on the ``publicly'' accessible DNS logs to build DMGs, it creates a vulnerability for adversaries to manipulate the features and edges of their domain nodes within these graphs.
The current body of literature primarily focuses on threat models that involve manipulating individual adversary (attacker) nodes. Nonetheless, adversaries usually create numerous domains to accomplish their attack objectives, aiming to reduce costs and evade detection. Hence, they aim to remain undetected across as many domains as possible. In this work, we call the attack that manipulates several nodes in the DMG concurrently \textit{a multi-instance evasion attack}. To the best of our knowledge, this type of attack has not been explored in the prior art. 
We present both theoretical and empirical evidence to show that the existing single-instance evasion techniques for GNN-based MDDs are inadequate to launch multi-instance evasion attacks. Therefore, we propose an inference-time, multi-instance adversarial attack, dubbed MintA, against GNN-based MDD. MintA optimizes node perturbations to enhance the evasiveness of a node and its neighborhood. MintA only requires black-box access to the target model to launch the attack successfully. In other words, MintA does not require any knowledge of the MDD model's parameters, architecture, or information on non-adversary nodes. We formulate an optimization problem that satisfies the attack objectives of MintA and devise an approximate solution for it. We evaluate MintA on a state-of-the-art GNN-based MDD technique using real-world data, and our experiments demonstrate an attack success rate of over 80\%.
The findings of this study serve as a cautionary note for security experts, highlighting the vulnerability of GNN-based MDD to practical attacks that can impede the effectiveness and advantages of this approach.
\end{abstract}

\begin{IEEEkeywords}
Adversarial attack, malicious domain detection, DNS logs, inference time attack.
\end{IEEEkeywords}

\section{Introduction}
\label{Section1}
\par Internet domains form a foundation for adversaries to launch various cyber attacks. For example, adversaries can use Internet domains to disseminate malware \cite{abraham2010overview}, facilitate command and control (C\&C) communications \cite{zeidanloo2009botnet}, and host scam, phishing, and brand squatting web pages \cite{nabeel2022brand}. Malicious domain detection (MDD) refers to the problem of deciding whether a given Internet domain is used to launch malicious activities. 
Considerable efforts have been made in developing various MDD methodologies \cite{zhauniarovich2018survey}. Among the different MDD methods, those that utilize domain name system (DNS) data possess unique advantages, such as scalability \cite{khalil2017killing}, rich domain features, and public accessibility \cite{zhauniarovich2018survey}. Existing DNS-based MDD methods can be broadly categorized into classification-based and inference-based approaches \cite{nabeel2020following,li2021dydom,wang2022handom}. Classification-based MDD relies on local domain features \cite{antonakakis2010building,bilge2014exposure}. In contrast, inference-based MDD integrates relations among domains and their local properties to achieve timely and accurate inference of domain maliciousness \cite{manadhata2014detecting,khalil2016discovering,khalil2017killing,nabeel2020following,khalil2018domain,sun2019hindom,li2021dydom,zhang2021attributed,li2022heterogeneous}.

\par In this paper, we focus on the inference-based MDD approach given its superior performance compared with other methods \cite{sun2019hindom,li2021dydom,zhang2021attributed,li2022heterogeneous}. Inference-based MDD is based on the \textit{guilt-by-association} principle \cite{khalil2018domain,khalil2020method,xia2021identifying}, i.e., if a domain is connected to a group of known malicious domains, then it is likely to be malicious as well. The inference-based MDD process comprises two primary phases. The initial phase involves creating a domain association graph known as the ``domain maliciousness graph (DMG).'' DNS data presents a valuable resource for entities engaged in MDD\footnote{Let us refer to a party performing MDD as an \textit{MDD entity}.} to construct the DMG. The second phase involves utilizing an inference technique to identify the maliciousness of unknown domains in the DMG by using a small set of labeled domains as a reference. A DMG (e.g., Fig. \ref{system_model}) can be either a homogeneous graph with domain nodes and their relationships or a heterogeneous graph with multiple node and link types. For example, a DMG may consist of domain and client nodes \cite{rahbarinia2015segugio, manadhata2014detecting,lee2014gmad}, domain and IP address nodes \cite{khalil2016discovering,khalil2018domain}, or domain, IP address, and client nodes \cite{sun2019hindom,li2021dydom,zhang2021attributed,li2022heterogeneous}. 

\par Regarding the inference phase of MDD, techniques like belief propagation, graph neural networks (GNNs), or heterogeneous graph neural networks (hetGNNs) are typically employed. Belief propagation is a commonly utilized inference method. However, its accuracy drops significantly when the labeled training data is insufficient, and it cannot make inferences regarding isolated domain nodes \cite{satorras2021neural}. Recent research has explored more advanced inference methods using GNNs and hetGNNs \cite{sun2019hindom,sun2020hgdom,nabeel2021method,li2021dydom,zhang2021attributed,li2022heterogeneous}. GNNs have been shown to offer significant advantages over belief propagation. Firstly, GNNs employ data-driven training to learn mechanisms for message passing and aggregation across graph nodes in an end-to-end fashion \cite{zhang2020factor}. Secondly, GNNs are more adaptable to larger graphs with complex structures not initially seen \cite{yoon2019inference}. Moreover, GNNs demonstrate superior performance when the labeled training data is limited \cite{sun2019hindom,li2021dydom,zhang2021attributed,li2022heterogeneous}.

\par Despite the effectiveness of GNN-based models in accurately and quickly detecting malicious domains and extensive research on their vulnerability to adversarial attacks \cite{dai2018adversarial,zugner2018adversarial,zang2020graph}, multi-instance attacks have not been fully explored in the context of GNN-based MDD. In this scenario, an adversary (attacker) can only manipulate the features and relationships of the domains under its control (e.g., the orange subgraph in Fig. \ref{system_model}) to craft an adversarial attack to bypass the MDD model. This is possible due to the heavy reliance of MDD models on publicly available DNS logs to construct their DMGs, leaving room for malicious intervention.

\par \textbf{Prior approaches and their limitations.} The current body of literature primarily focuses on threat models that involve manipulating individual adversary nodes. However, adversaries usually create many domains to accomplish their attack objectives, aiming to reduce costs and evade detection. Hence, they aim to remain undetected across as many domains as possible. Existing approaches either target a specific node for evasion (e.g., \cite{dai2018adversarial,chen2018fast,wu2019adversarial,zugner2018adversarial,chang2020restricted,wang2020evasion})  or attack the model's overall performance through untargeted (availability) attacks (e.g., \cite{xu2019topology,ma2019attacking,ma2020towards,ma2022adversarial}). We demonstrate that repeatedly applying single-instance attacks to achieve multi-instance evasion is suboptimal and degenerates the attack's overall performance. On the other hand, availability attacks are easy to detect and do not achieve the evasion goal of the adversary.

\par \textbf{Overview of our approach.} Based on the previous discussion, we have formulated the adversarial attack in GNN-based MDD as a multi-instance evasion attack. To execute this attack, we design MintA as a multi-instance attack algorithm. MintA first constructs a surrogate model using only black-box access to the target model. It is worth noting that MintA is practical as it does not require knowledge of the target model's parameters, architecture, or other nodes in the DMG. After constructing the surrogate model, MintA employs it to identify the suitable feature and edge perturbations that will collectively evade the adversary nodes while minimizing the effect on other nodes in the DMG. This is achieved through a two-objective optimization problem at each adversary node. The first objective aims to maximize the model's loss at each adversary node to avoid detection, while the second objective aims to maximize the model's loss at the neighboring nodes to evade detection. To obtain an approximate solution, we maximize the weighted average of these two objectives. Once the appropriate edge/feature perturbations are identified, MintA implements them by making name and domain resolution edits to its nodes. This is achieved through domain registration services and IP resolution edits. 

\par \textbf{Summary of contributions.} Our contribution in this study includes: (i) Demonstrating the vulnerability of GNN-based MDD and proposing MintA, a novel algorithm for the multi-instance adversarial attack that only requires black-box access to the target model and with no knowledge of the DMG graph beyond the adversary nodes. To the best of our knowledge, this is the first practical attack against GNN-based MDD. (ii) We present a method for implementing the optimized adversarial perturbations by simple domain name and IP-resolution manipulations. (iii) We conduct extensive experiments on real-world data to evaluate the effectiveness of our proposed attack. Our results show an attack success rate of over 80\%. Also, MintA is shown to bypass outlier detection and graph purification-based defenses. 

\par \textbf{Paper outline.} Section \ref{Section2} presents relevant preliminaries. The threat model of adversarial attack in MDD is presented in Section \ref{Section3}. Section \ref{Section4} presents the proposed MintA algorithm. Experiments and results are presented in Section \ref{Section5}. Section \ref{Section6} summarizes related work. We present a discussion in Section \ref{Section7} with the conclusions in Section \ref{Section8}.

\begin{figure}[t]
\centering
\resizebox{0.99\columnwidth}{!}{
\includegraphics[width=12cm]{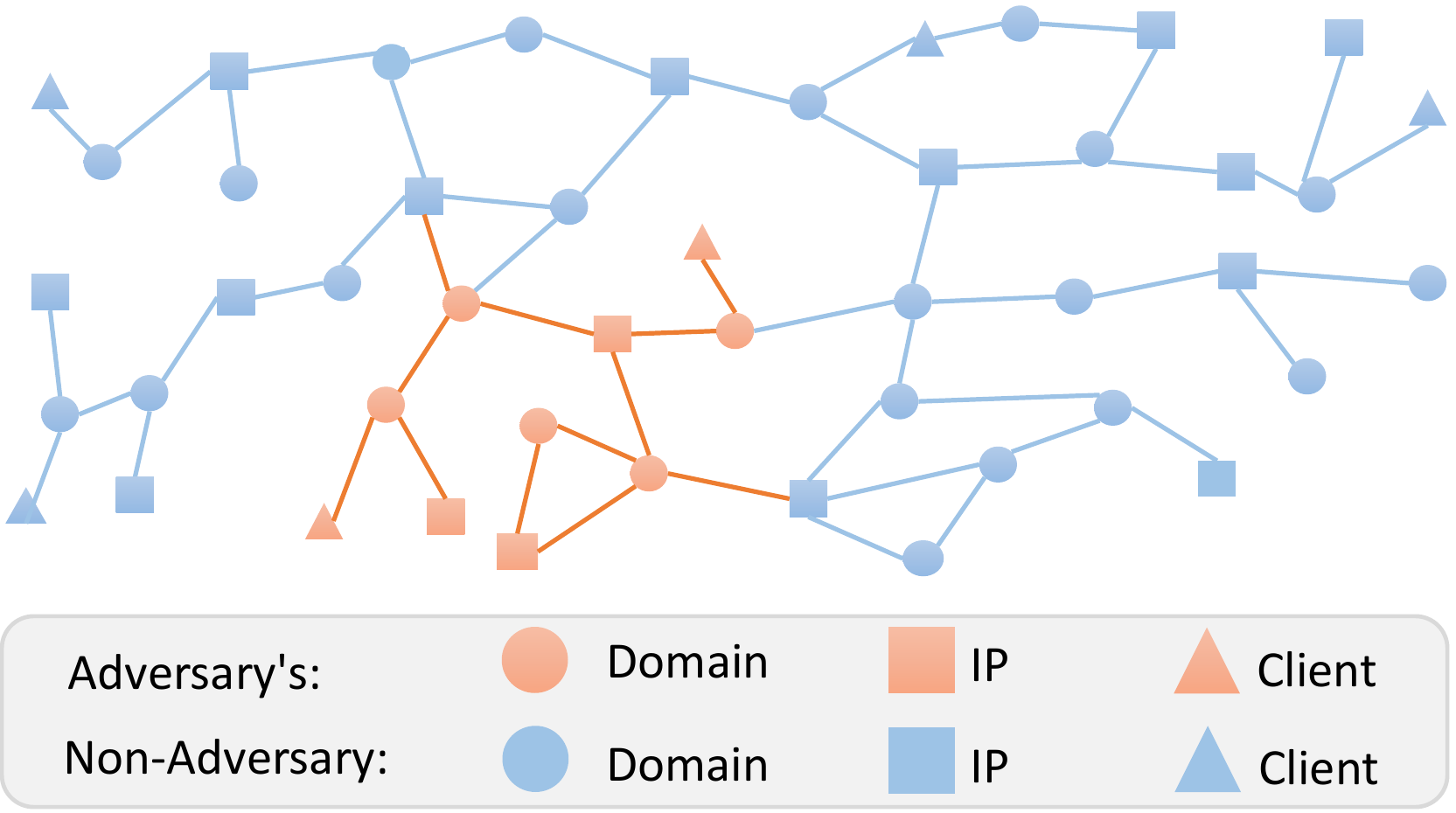}}
\caption{Based on DNS logs, the MDD entity constructs a DMG containing the adversary's subgraph (in orange) along with non-adversary subgraphs (in blue).}
\label{system_model}
\end{figure}
\section{Background}
\label{Section2}
 
\par \textbf{Graph Neural Networks.} Graph neural networks (GNNs) \cite{welling2016semi} extend deep learning to graph data by using neural network layers for handling messages passing across graph edges and their aggregation. A GNN transforms graph information into a set of low-dimensional node embeddings calculated based on local node features and graph topology. GNNs have achieved state-of-the-art performances in a wide set of applications ranging from fraud detection to drug discovery \cite{wu2020comprehensive}. A heterogeneous graph or a heterogeneous information network (HIN) is a graph with multiple node and/or edge types. Heterogeneous GNNs (hetGNNs) extend GNNs to HINs. Some hetGNNs project the HIN on a graph space to eliminate heterogeneity \cite{hu2020heterogeneous}, while others decompose a given HIN into meta-paths\footnote{A meta-path is a composition of relations linking two nodes.} which are then encoded to get node representations \cite{wang2019heterogeneous,fu2020magnn}. In a variety of applications, hetGNNs achieve state-of-the-art performances \cite{liu2018heterogeneous,li2022sybilflyover,jiang2021mafi,hu2020gfd}.

\par \textbf{GNN-based MDD inference.} The state-of-the-art MDD approaches employ heterogeneous DMGs \cite{sun2019hindom,li2021dydom,li2022heterogeneous,sun2020deepdom,zhang2021attributed}. Still, they differ in the types of nodes and edges assumed in their DMGs and the corresponding ways of achieving message passing and aggregation. First, HinDom \cite{sun2019hindom} uses meta-paths with HINs of clients, domains, and IP addresses excluding node attributes. Subsequently, this setting is extended to attributed HINs. Node attributes are obtained as character-level domain properties in HGDom \cite{sun2020hgdom} and the 21 FANCI domain properties \cite{schuppen2018fanci} in DeepDom \cite{sun2020deepdom}. Similarly, Zhang et al. \cite{zhang2021attributed} define a HIN of domains, IP addresses, and clients, where node type-aware feature transformation and edge type-aware message aggregation are employed. More recent works such as \cite{li2022heterogeneous} and \cite{wang2022handom} adopt attention mechanisms to achieve message passing and aggregation. These approaches show the potential of GNNs, particularly hetGNNs, in achieving timely and accurate MDD.

\par \textbf{From DNS logs to DMG.} DNS is a key resource for constructing DMGs. As an example, Fig. \ref{network_schema}(a) shows the network schema of a heterogeneous DMG. Domain, IP address, and client node types are included where domain features such as the 21 FANCI features \cite{schuppen2018fanci} serve as the domain node attributes. Also, the following edge types exist. 
\begin{itemize}[leftmargin=*]
 \item domain-\textit{query}-client: linking a client node to a domain node it queries
 \item domain-\textit{apex}-domain: linking two domain nodes if they share the same apex domain
 \item domain-\textit{resolve}-IP: linking a domain node to an IP node it resolves to
 \item domain-\textit{similar}-domain: linking two domain nodes if the n-gram character-level similarity between their names exceeds a preset threshold of 0.8 
 \cite{kumarasinghe2022pdns} 
\end{itemize}

\par Let us consider an example of using DNS to construct a heterogeneous DMG according to the above network schema. Consider domain names \textit{``www.b.rwth-aachen.de''}, \textit{``writes.bnxd.rwth-aachen.de''}, and \textit{``dekh1her76avy0qnelivijwd1.ddns.net''} represented by domain-type nodes $v_1$, $v_2$, and $v_3$, respectively, as shown in Fig. \ref{network_schema}(b). Let us assume further that a given DNS log shows that the domains $v_1$, $v_2$, and $v_3$ resolve to IP addresses $i_1$, $i_2$, and $i_1$, respectively, and they are queried by clients $c_1$, $c_2$, and $c_3$, respectively. Recalling the four edge types mentioned above, the following edges exist. First, $v_1$ and $v_3$ are linked to $i_1$ whereas $v_2$ is linked to $i_2$, all with \textit{resolve} edges. Second, $v_1$ through $v_3$ are linked to $c_1$ through $c_3$ with \textit{query} edges, respectively. Third, $v_1$ is connected to $v_2$ with a \textit{similar} edge due to their high character-level similarity and with an \textit{apex} edge since they share the same apex address (\textit{``rwth-aachen.de''}). Therefore, there is a direct mapping between the contents of DNS logs and the shape of the constructed DMG.



\begin{figure}[!htb]
\centering
\resizebox{0.99\columnwidth}{!}{
\begin{tabular}{cc}
\includegraphics[width=8cm]{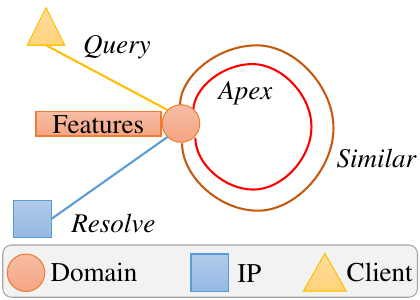}&
\includegraphics[width=6cm]{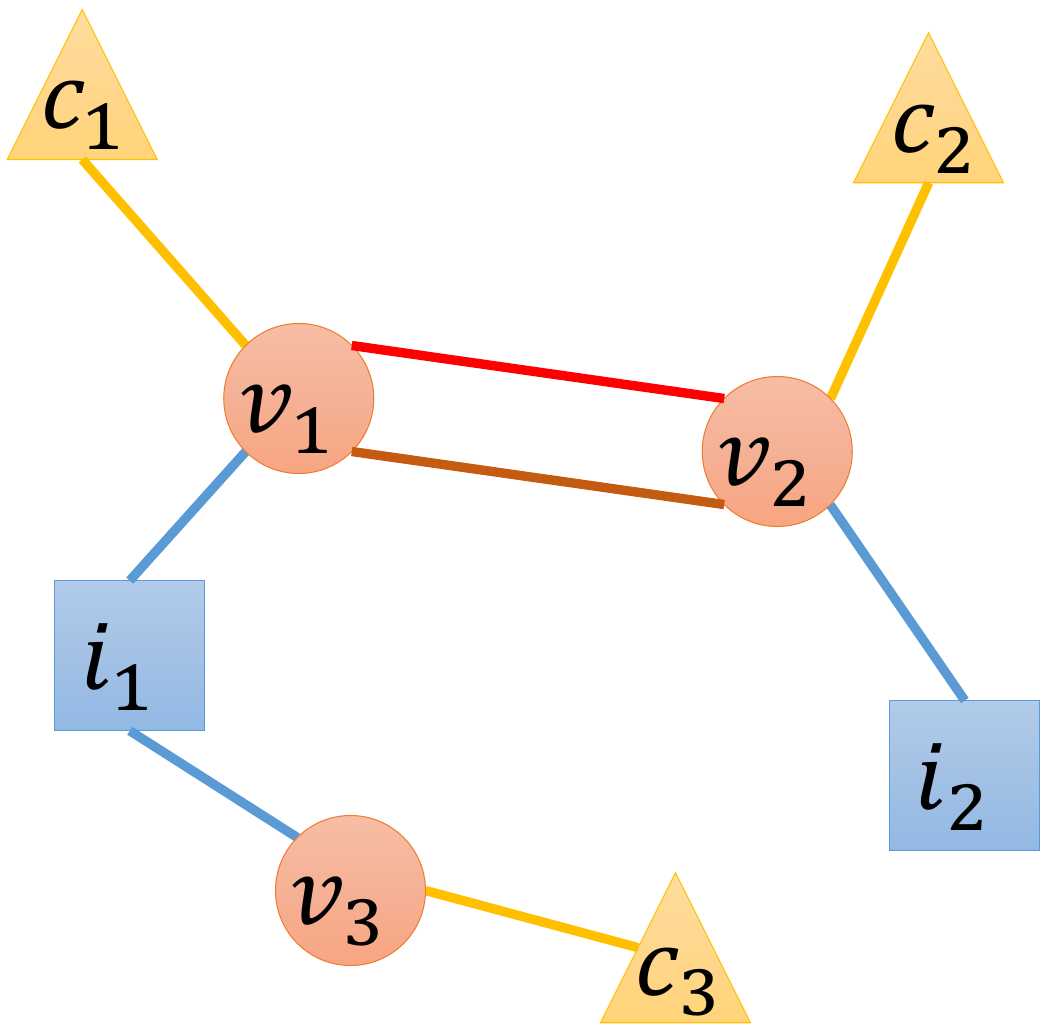}
\\
\Large{(a)}&
\Large{(b)}
\end{tabular}}
\caption{(a): Network schema of a heterogeneous DMG showing node and edge types. (b): An example DMG formed by associations extracted from DNS logs.}
\label{network_schema} 
\end{figure}

\section{Threat Model}
\label{Section3}

\par We characterize the threat model in terms of the adversary’s goals, knowledge, and capabilities. The goal of the adversary is to evade the detection of its nodes by the MDD model with budgeted perturbations. As for knowledge, the adversary has partial knowledge of its actual subgraph residing in the DMG (the orange-colored subgraph in Fig.~\ref{system_model}). Namely, the adversary knows its nodes (domains and IP addresses) but does not know the topology or the node features in its actual subgraph in the DMG. This information requires knowledge of the edge types, and node attributes the target MDD entity assumes. However, the adversary can still assume the usage of commonly used edge types and node attributes to obtain an estimate of its actual subgraph. 

While different GNN-based MDD models may have different node and edge structures in their DMGs, they often utilize common edge types, such as domain-\textit{similar}-domain or domain-\textit{resolve}-IP, that are derived from DNS logs (e.g., \cite{sun2019hindom,sun2020deepdom,sun2020hgdom,zhang2021attributed,li2022heterogeneous,wang2022handom}). However, the surrogate model employed to optimize the perturbations is a simplified homogeneous model that does not differentiate between edge types. Additionally, common node features, like whether a domain name contains digits or has a ``www'' prefix, are utilized by multiple works in the field \cite{sun2020deepdom,sun2020hgdom, zhang2021attributed,li2022heterogeneous, wang2022handom}. Thus, an adversary can assume the existence of these commonly used edges and features, even without precise knowledge of the DMG structure. In our problem formulation, we assume the adversary knows only an estimate of its subgraph. We hypothesize that perfectly knowing the remaining portions of the DMG and incorporating them into the solution may improve the attack's success, but the improvement would not be significant. This is because adversary nodes exhibit denser mutual connections among themselves compared to their connections with non-adversary nodes, as established in the following section.
Moreover, the adversary does not know the target model's architecture, parameters, or training mechanism. It can only query the target model for a certain number of domain nodes to obtain labeled data to train a surrogate model. Thus, this is a black box attack. As for capabilities, the adversary can only manipulate specific editable node features and edges in its subgraph by modifying the names of its domains and their IP resolution relationships, as shown in the next section. 

\par A general overview of the proposed attack setting is presented in Fig. \ref{attack_crafting}. The adversary queries the target MDD model to obtain labeled data and trains a surrogate model, as shown in Fig. \ref{attack_crafting}(a). Then, the adversary constructs an estimate of its subgraph in the DMG assuming the existence of commonly used edge types and node features. With this subgraph estimate, the adversary uses the surrogate model along with the proposed MintA (detailed in the next section) to obtain optimized feature (and/or edge) perturbations to the nodes in its subgraph, as shown in Fig. \ref{attack_crafting}(b). Next, the adversary implements these perturbations by manipulating its domains' names and IP resolutions, as shown in Section \ref{Section4}.1. The modifications made by the adversary will be observable in the DNS logs, and eventually, the desired subgraph will emerge in the DMG constructed by the MDD entity.

\begin{table}[!h]
\centering
\caption{Flipping a feature vector by a simple name edit.}
\begin{tabular}{|l|c|c|}
\hline
Feature & \textit{before} & \textit{after} \\\hline
\textit{Has a WWW prefix} & 1 & 0 \\\hline
\textit{Contains single-character subdomains} & 1 & 0 \\\hline
\textit{Is exclusive prefix repetition} & 0 & 1 \\\hline
\textit{Contains digits} & 0 & 1 \\\hline 
\end{tabular}
\label{table1}
\end{table}


\begin{figure}[!t]
\centering
\resizebox{0.999\columnwidth}{!}{
\begin{tabular}{c}
\includegraphics[width=25cm]{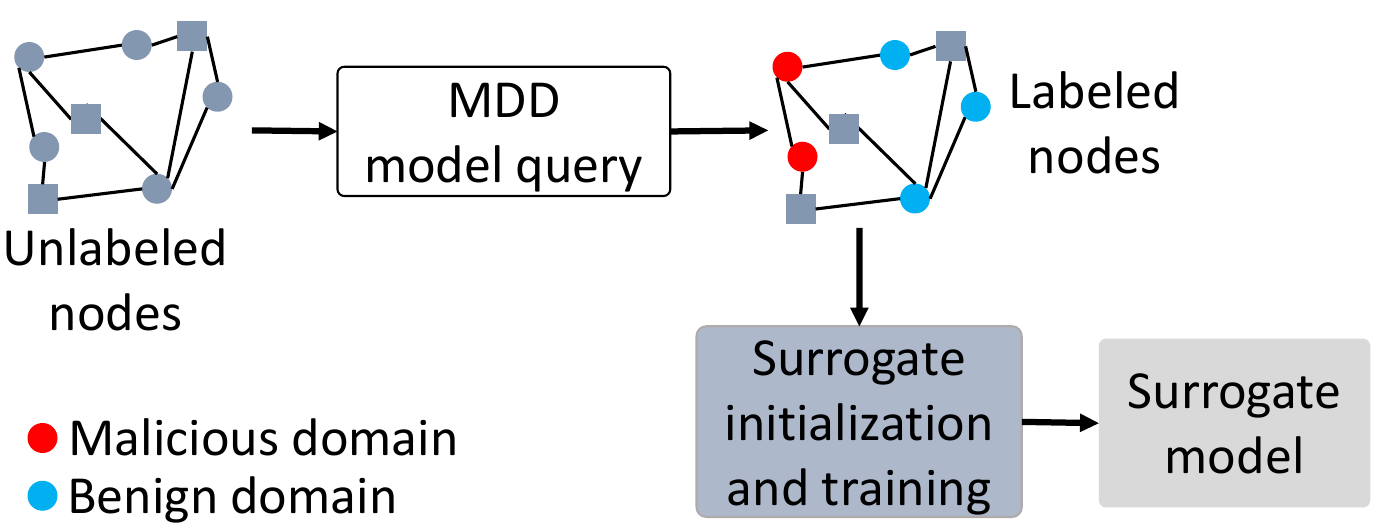}
\\
\Huge{(a)}
\\
\includegraphics[width=25cm]{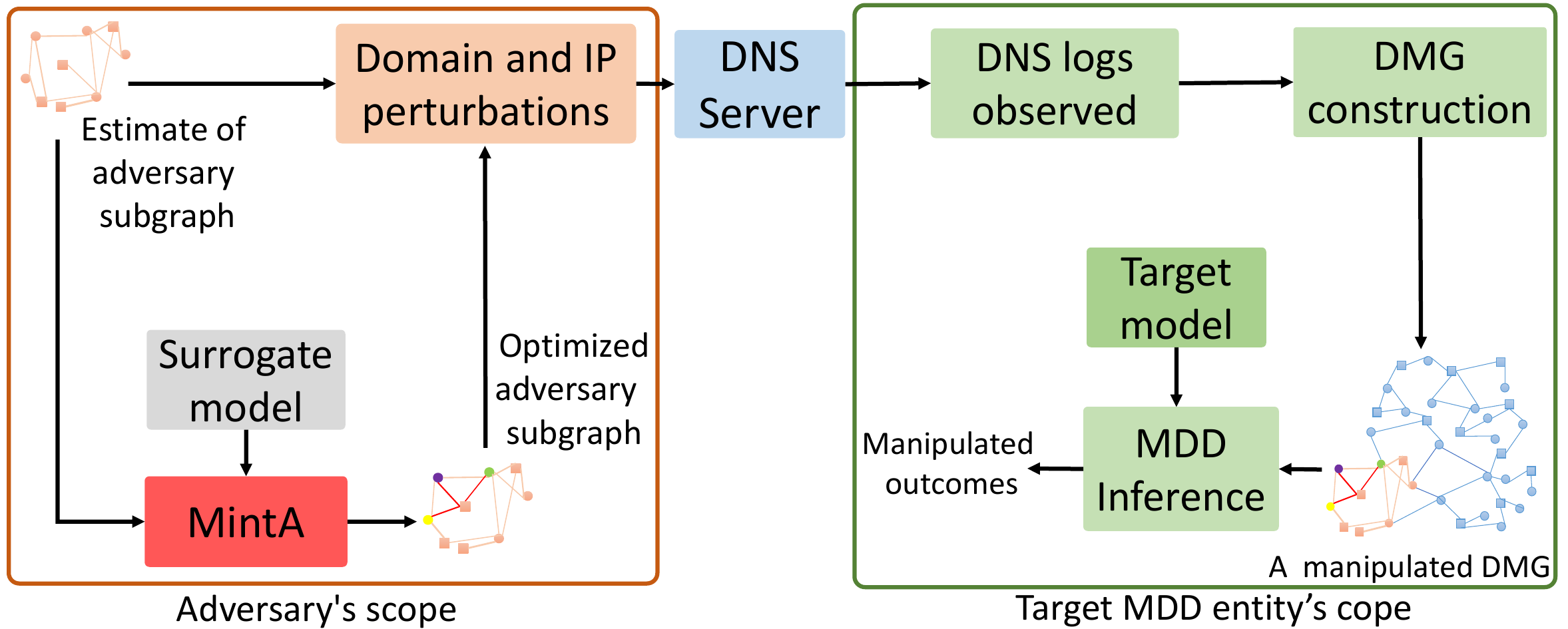}
\\
\Huge{(b)}
\end{tabular}}
\caption{An overview of the attack; surrogate model training in (a), and crafting the attack in (b).}
\label{attack_crafting} 
\end{figure}

\begin{figure}[!t]
\centering
\resizebox{0.95\columnwidth}{!}{
\begin{tabular}{cc}
\includegraphics[width=12cm]{figs/graph_perturb_exmaple1.pdf}& \includegraphics[width=12cm]{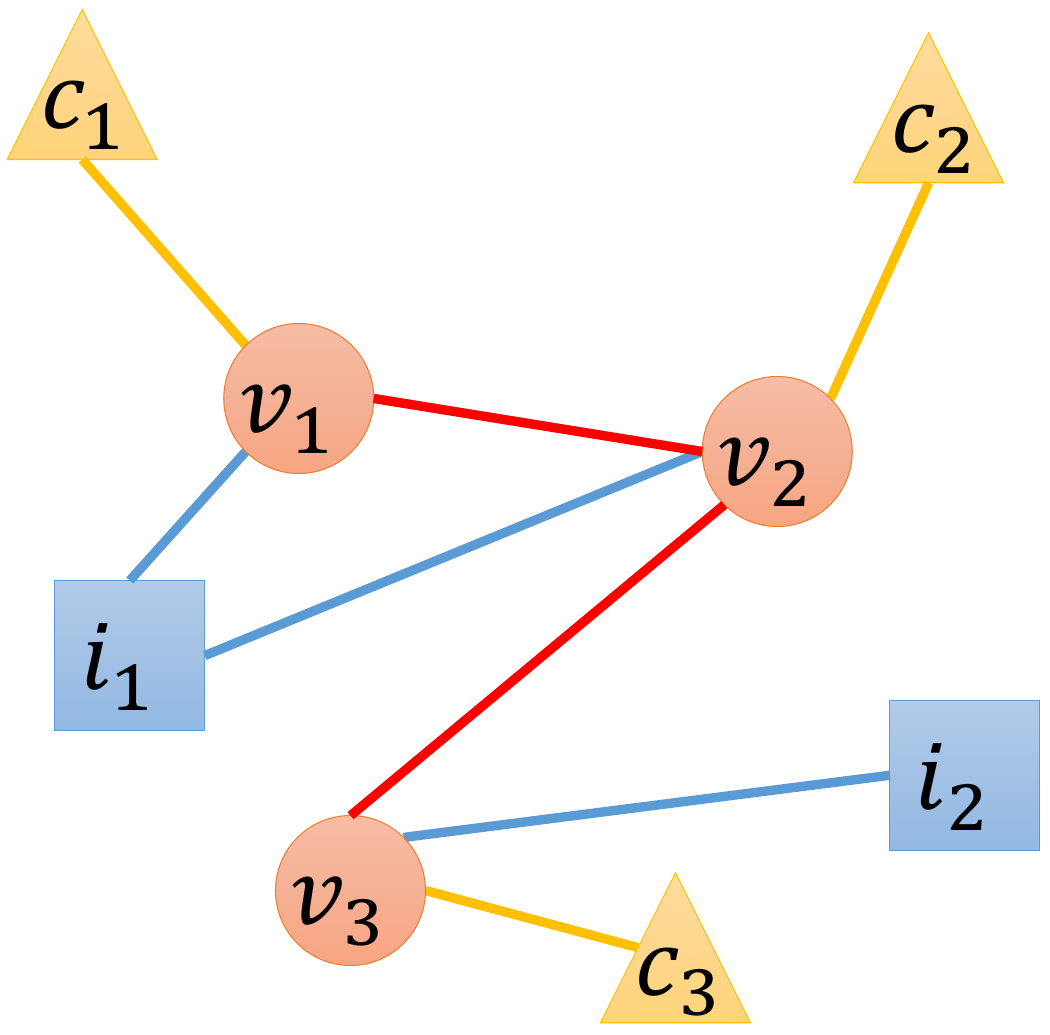}\\
\Huge{(a)} & \Huge{(b)}
\end{tabular}}
\caption {An example DMG constructed based on DNS in (a) and the required perturbed version in (b).}
\label{graph_perturb_exmaple} 
\end{figure}

\section{The Proposed Attack}
\label{Section4}

\par In this section, we first present how an adversary can implement intentional perturbations in a target DMG through DNS. Then, we analyze the conditions for an adversary to craft a successful adversarial attack. Next, we present MintA as an algorithm for optimizing the perturbations applied to the adversary's nodes.

\subsection{Practical implementation of feature and edge perturbations}

\par An Internet domain encompasses numerous features. However, altering some of these attributes can prove challenging, if not impossible. For example, changing the top-level domain (TLD) of a given domain from ``.com'' to ``.edu'' is not possible without institutional accreditation which is required for registering a domain name with this TLD. Among the 21 FANCI features, we select the following features as editable: \textit{Domain name length}, \textit{Has a www prefix}, \textit{Contains a single-character subdomain}, \textit{Is exclusive prefix repetition}, and \textit{Contains digits}. As for edges, we select the domain-\textit{resolve}-IP, domain-\textit{apex}-domain, and domain-\textit{similar}-domain edges \cite{sun2019hindom,sun2020deepdom,sun2020hgdom,zhang2021attributed,li2022heterogeneous} as editable.

\par To illustrate how an adversary can perturb edges and features in its subgraph on a DMG by domain name and IP resolution manipulations, let us recall the example domain names \textit{``www.b.rwth-aachen.de''}, \textit{``writes.bnxd.rwth-aachen.de''}, and \textit{``dekh1her76avy0qnelivijwd1.ddns.net''} and their respective node representations, $v_1$, $v_2$, and $v_3$ appearing in Fig. \ref{graph_perturb_exmaple}(a). Next, let us consider feature perturbation on $v_1$ as an example with binary features for simplicity. For $v_1$, a 4-bit binary representation of the features \textit{Has a www prefix}, \textit{Contains a single-character subdomain}, \textit{Is exclusive prefix repetition}, and \textit{Contains digits} is [1100] since $v_1$ has a www prefix (1), contains a single-character subdomain (1), does not have an exclusive prefix repetition (0), and does not contain digits (0). The domain name owner can flip the feature vector to be [0011] by changing its name to \textit{``bnxd3.bnxd3.rwth-aachen.cis''}\footnote{This modified name has no www prefix (0), does not contain a single-character subdomain (0), has an exclusive prefix repetition (1), and contains digits (1), thus the feature vector is [0011].}. Table \ref{table1} lists the feature vectors of this domain before and after the modification.

\par Assume that the adversary's goal is to modify the topology of the constructed graph shown in Fig. \ref{graph_perturb_exmaple}(a) into the one in Fig. \ref{graph_perturb_exmaple}(b). This can be implemented as follows.
\begin{itemize}[leftmargin=*]
 \item Swap the \textit{resolve} edges between $v_2$ and $v_3$: by changing the resolution of domain $v_2$ to IP $i_1$, and that of $v_3$ to IP $i_2$.
 \item Drop the \textit{similar} edge between $v_1$ and $v_2$: by changing node $v_1$'s domain name from \textit{``www.b.rwth-aachen.de''} to \textit{``www.b.sokj-bbchin.de''} thereby dropping the character-level similarity between $v_1$ and $v_2$ to below 0.8.
 \item Add an \textit{apex} edge between $v_2$ and $v_3$: by changing node $v_3$'s domain name from \textit{``dekh1her76avy0qnelivijwd1.ddns.net''} to \textit{``dekh1her76avy0qnelivijwd1.ddns.rwth-aachen.de''} so that both share the same ``rwth-aachen.de'' apex address.
\end{itemize}

\begin{figure}[!bht]
\centering
\resizebox{0.9555\columnwidth}{!}{
\includegraphics{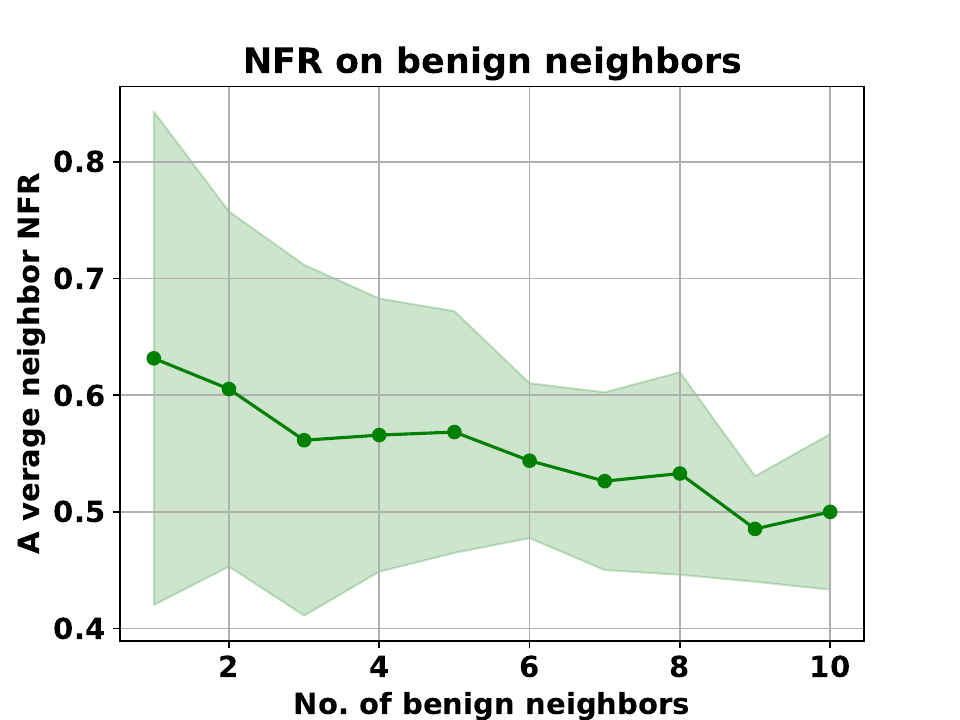}
}
\caption{The impact of evading mutually-isolated adversary nodes on their benign neighbors (NFR: negative flip rate).}
\label{essential_motivation}
\end{figure}

\subsection{Multi-instance adversarial attack}
\par In the following, we identify the requirements for an adversary to perform a successful adversarial attack against GNN-based MDD models.

\par \textbf{1. The adversary owns multiple domains.} To maintain agility in the face of individual domain blacklisting or takedown, adversaries usually operate across multiple domains \cite{kreibich2009spamcraft}. This belief is supported by empirical evidence, such as the study conducted by Bahnsen et al. \cite{bahnsen2018deepphish}, which analyzed a practical phishing attack dataset and identified three adversaries owning 19, 106, and 309 domains, respectively, to host a total of 102, 1,007, and 7,927 phishing URLs. In addition, Hao et al. \cite{hao2016predator} advocate for registering multiple domains in bulk as a means for adversaries to reduce costs.

\par \textbf{2. Interconnected adversary domains for efficient evasion in bulk.} Interconnected domains allow the adversary to conduct attacks across those domains while maintaining a low profile. Conversely, isolated domains necessitate an individualized approach to attacking them, which can be more easily detectable. Such attacks may impact non-adversary domains connected to the adversary's domains, which exist as nodes in a DMG but are neither known nor exploited by the adversary. To investigate this scenario, we conduct the following experiment.

\par We randomly sample 100 domain nodes from a given test graph such that they are not mutually connected while they have edges with other nodes. These domain nodes are regarded as the adversary's nodes. Next, we use the integrated gradients adversarial attack (IG-ADV) proposed by Wu et al. \cite{wu2019adversarial} to evade the detection of each adversary domain individually. We then assess the effects of attacking each domain node on its benign neighbors on the graph. We quantify the impact of such an approach using the average neighbor negative flip rate (NFR). NFR represents the percentage of benign neighbors that turn malicious due to the evasion attempt of the malicious node. Fig. \ref{essential_motivation} shows the average neighbor NFR for domain nodes of the same number of benign neighbors. As depicted in the figure, an adversarial attack aimed at evading detection on a specific domain node has a detrimental impact on its benign neighbors. This negative effect makes the attack more easily detectable, compelling adversaries to avoid using mutually isolated domains. Instead, it is advisable for an adversary to have a subgraph of connected domains, as highlighted in orange in Fig. \ref{system_model}.

\par \textbf{3. No Interference among adversary domains.} It is essential to ensure that an attack aimed at evading detection on a specific domain node should not undermine the evasion strategy employed by other connected domain nodes. We make the observation that perturbations at a given adversary node optimized to maximize the loss function at this node to evade the detection may contradict the evasion of other adversarial nodes in its $k$-hop neighborhood. We carry out a detailed analysis to demonstrate this observation in Appendix \ref{appendixA}.1. This observation asserts that attempting to evade the detection of individual nodes belonging to the adversary, without considering the coordinated evasion of its entire subgraph, as typically done in existing targeted adversarial attacks, undermines the adversary's objective of evading detection altogether. 


\par The following experiment is conducted to practically examine the above observation. We consider an adversary with 100 domain nodes. Then, we apply the IG-ADV attack \cite{wu2019adversarial} as a representative of targeted attacks on each domain node individually and gradually till attacking all the nodes. Meanwhile, we calculate the attack success rate (ASR) as the percentage of adversary domain nodes converted from malicious into benign\footnote{This differs from our definition of the ASR in the other experiments, where it is the ratio of malicious domains evaded from the overall number of malicious domains.}. Fig. \ref{motivation_exp} illustrates the relationship between the average ASR and the number of attacked nodes. While a single adversarial attack can successfully evade detection for a single node with an ASR of over 80\%, it remains unclear whether these attacks can still be effective when targeting multiple connected nodes. As the number of attacked nodes increases, the overall ASR decreases significantly, indicating that individual attacks can harm the evasion attempts of their connected nodes.\footnote{It is noted that IG-ADV is operated with the same constraints either when attacking single or multiple nodes. Thus, the degradation in ASR is due to overlooking the impact on connected nodes and is not due to constraints.}


\begin{figure}[!htb]
\centering
\resizebox{0.9555\columnwidth}{!}{
\includegraphics{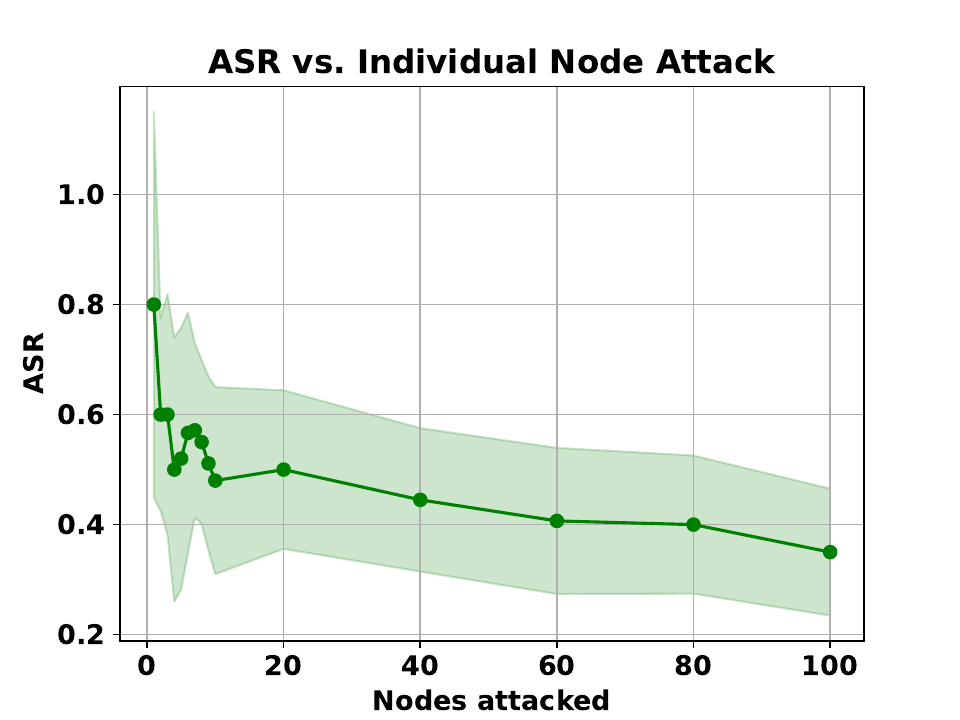}
}
\caption{The attack success rate (ASR) of attacked nodes with sequentially applying a targeted adversarial attack.}
\label{motivation_exp} 
\end{figure}

\par The preceding discussion asserts that the current adversarial attack methods are insufficient to fulfill the attackers' needs to carry out effective multi-node detection evasive attacks. Therefore, we devise a two-objective optimization problem for this multi-instance detection evasive attack. This problem aims to determine the best feature perturbation at a specific node $i$, denoted as $\Delta \boldsymbol{X'}_i$, which would help evade the detection of both node $i$ and its neighboring nodes $j \in \mathcal{N}(i)$, where $\mathcal{N}(i)$ is the set of direct neighbors to $i$. In this study, we acknowledge the possibility of an adversary strategically optimizing perturbations across all nodes to achieve their attack objectives. Nevertheless, our approach focuses on optimizing perturbations at individual nodes, specifically node $i$, and assessing their impact on its direct neighbors, denoted as $j$. We find this strategy adequate for crafting a coordinated attack across the adversary's nodes because targeting the direct neighbors of node $i$ allows the attack impact to propagate effectively throughout the graph.


\par Since the GNN-based MDD model is unknown to adversaries, we use a 2-hop linearized graph convolutional network (GCN) surrogate model \cite{wu2019simplifying}. This surrogate, known as the simplified GCN model, is commonly used in adversarial attacks \cite{zugner2018adversarial,li2021adversarial,chen2022understanding,zhu2022resisting} due to its simplicity and the transferability of attacks conducted on it to different GNN architectures. This surrogate model has the same task as the target model (node classification) and is trained on a different dataset that is labeled by querying the target model. The sole similarity between the actual and surrogate models lies in the graph task they carry out, which is node classification. The surrogate model is shown to work well through validation, and attacking it is expected to transfer to the unknown target MDD model. 


\par Let us use $\mathcal{G}=(\mathcal{V},\mathcal{E})$ to denote the DMG constructed by the MDD entity where $\mathcal{V}$ is the set of nodes and $\mathcal{E}$ is their edges. Equivalently, $\mathcal{G}$ can be written as $\mathcal{G}=(\boldsymbol{A}, \boldsymbol{X})$ where $\boldsymbol{A}$ is the adjacency matrix and $\boldsymbol{X}$ is the node attribute matrix. The adversary knows only its node set which we denote by $\mathcal{V'} \subset \mathcal{V}$ and some edges connecting them $\mathcal{E'} \subset \mathcal{E}$. Let us denote by $\mathcal{G}'=(\boldsymbol{A'}, \boldsymbol{X'})$ the adversary subgraph with adjacency matrix $\boldsymbol{A'}$ and node feature matrix $\boldsymbol{X'}$. We can write messages at the second layer of the surrogate model as $\boldsymbol{H'}^{(2)}=\boldsymbol{\hat{A'}}^2 \boldsymbol{X'} \boldsymbol{W}$, where $\boldsymbol{\hat{A'}}=\boldsymbol{D'}^{-\frac{1}{2}}(\boldsymbol{A'}+\boldsymbol{I}) \boldsymbol{D'}^{-\frac{1}{2}}$ is the normalized symmetric adjacency matrix, $\boldsymbol{D'}$ is a diagonal matrix of node degrees \cite{kipf2016semi}, and $\boldsymbol{W}$ denotes the coefficients of the surrogate model. For simplicity, let $\boldsymbol{B}$ denote $\boldsymbol{\hat{A'}}^2$. Further, let us assume a sigmoid loss function, and ignore it in the loss calculation as done in \cite{zugner2018adversarial}. Thus, the change in the loss function of the surrogate model at node $i$ due to a perturbation $\Delta \boldsymbol{X}_i$ can be written as follows.
\begin{equation} \label{eq8}
\Delta \mathcal{L}_i\left(\boldsymbol{A'}, \Delta \boldsymbol{X'}_i ; \boldsymbol{W}, i\right)=
\|\boldsymbol{B} \Delta \boldsymbol{X'}_i \boldsymbol{W}\|_2^2.
\end{equation}
\noindent Similarly, the change in loss at node $j \in \mathcal{N}(i)$ is:
\begin{equation} \label{eq9}
\Delta \mathcal{L}_j\left(\boldsymbol{A'}, \boldsymbol{X'}+\Delta \boldsymbol{X'}_i; \boldsymbol{W}, j\right)=\|
\boldsymbol{B} (\Delta \boldsymbol{X'}_i+\boldsymbol{H'}_j) \boldsymbol{W}\|_2^2,
\end{equation}
\noindent where $\mathcal{L}_i$ ($\mathcal{L}_j$) is the loss function value at node $i$ ($j$), and $\boldsymbol{H'}_j$ is the message at node $j$.


\par Considering a given adversary's node $i$, and its direct neighbors $j \in \mathcal{N}(i)$ in its neighborhood set $\mathcal{N}(i)$ known to the adversary (in its subgraph), the formulation of the proposed feature optimization problem is expressed in (\ref{eq1}). It is noted that $i$ may have other neighbors known to the MDD entity but not known to the adversary.
\begin{multline}
\Delta \boldsymbol{X'}^*_i =\underset{\Delta \boldsymbol{X'}_i}{\operatorname{argmax}}~ \alpha\Delta \mathcal{L}_i\left(\boldsymbol{A'}, \boldsymbol{X'}+\Delta \boldsymbol{X'}_i ; \boldsymbol{W}, i\right)
\\+\sum_{j \in \mathcal{N}(i)} \beta \frac{1}{d_j} \Delta \mathcal{L}_j\left(\boldsymbol{A'}, \boldsymbol{X'}+\Delta \boldsymbol{X'}_i ; \boldsymbol{W}, j\right),~\forall~i \in \mathcal{V}',\label{eq1}
\end{multline}
\noindent where $\alpha$ and $\beta$ are weighted average parameters to control the trade-off between maximizing the local loss at node $i$ and maximizing the loss at the neighbors $j \in \mathcal{N}(i)$.



\par A solution to the problem in (\ref{eq1}) is obtained by maximizing a weighted average of $F_1= \Delta \mathcal{L}_i\left(\boldsymbol{A'}, \boldsymbol{X'}+\Delta \boldsymbol{X'}_i ; \boldsymbol{W}, i\right)$ and $F_2= \sum_{j \in \mathcal{N}(i)} \frac{1}{d_j} \Delta \mathcal{L}_j\left(\boldsymbol{A'}, \boldsymbol{X'}+\Delta \boldsymbol{X'}_i; \boldsymbol{W}, j\right)$. With the use of the surrogate model, the magnitude of the change in loss $\Delta \mathcal{L}_i$ due to $\Delta \boldsymbol{X'}_i$ can be written as $ \|\mathcal{L}_i(\boldsymbol{A'}, \boldsymbol{X'}+\Delta \boldsymbol{X'}_i; \boldsymbol{W},i)- \mathcal{L}_i(\boldsymbol{A'}, \boldsymbol{X'}; \boldsymbol{W},i)\|=\|\boldsymbol{B} \boldsymbol{X'}_i \boldsymbol{W}\|_2^2$. The optimization of $F_1$ with respect to $\Delta \boldsymbol{X'}_i$ can be written as follows.
\begin{equation}\label{eq2}
\underset{\Delta \boldsymbol{X'}_i}{\operatorname{argmax}} ~F_1=\underset{\Delta \boldsymbol{X'}_i}{\operatorname{argmax}}\left\|\boldsymbol{B} \Delta \boldsymbol{X'}_i \boldsymbol{W}\right\|_2^2.
\end{equation} 
\noindent Similarly, we can view optimizing $F_2$ as follows.
\begin{equation}\label{eq3}
\underset{\Delta \boldsymbol{X'}_i}{\operatorname{argmax}} ~F_2=\underset{\Delta \boldsymbol{X'}_i}{\operatorname{argmax}}\sum_{j\in \mathcal{N}(i)} \left\|\boldsymbol{B}\left[\Delta \boldsymbol{X'}_i+\boldsymbol{H'}_j\right] \boldsymbol{W}\right\|_2^2.
\end{equation}
\noindent From (\ref{eq2}), and Corollary \ref{corollary} in Appendix \ref{appendixA}.1, an optimal $\Delta \boldsymbol{X'}_i$ for maximizing $F_1$ is the one that maximizes the sum of its inner products with the columns in the model coefficient matrix $\boldsymbol{W}$. Equivalently, it is a perturbation vector that maximizes $\boldsymbol{\Phi}_i=\boldsymbol{W}^T \Delta \boldsymbol{X'}_i$, where $\boldsymbol{\Phi}_i$ is equal to $\boldsymbol{W}^T$. Then, equivalently, an optimal $\Delta \boldsymbol{X'}_i$ is the one that maximizes the quantity $\|\boldsymbol{\Phi}_i \boldsymbol{X'}_i\|_2^2$. So the problem of perturbation optimization becomes a matrix-vector inner product maximization problem, which we can solve using Theorem \ref{theorem2}, detailed below.

\begin{thm}
\label{theorem2}
The following problem:
\begin{equation}\label{eq7}
\begin{gathered}
\underset{\Delta \boldsymbol{\boldsymbol{X'}_i}}{\operatorname{argmax}}~\|\boldsymbol{\Phi}_i \Delta\boldsymbol{ \boldsymbol{X'}_i}\|_2^2, \\
\text { s.t. } \|\Delta \boldsymbol{ \boldsymbol{X'}_i}\|_2^2=\epsilon,
\end{gathered}
\end{equation}
\noindent where $\epsilon$ is a threshold representing the budget of the perturbation, has a closed-form solution, which is the eigenvector corresponding to the largest eigenvalue of $\boldsymbol{\Phi}_i^T \boldsymbol{\Phi}_i$.
\end{thm}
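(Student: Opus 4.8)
The plan is to treat this as a constrained quadratic maximization and solve it by the method of Lagrange multipliers, exploiting the fact that $\boldsymbol{\Phi}_i^T \boldsymbol{\Phi}_i$ is symmetric and positive semidefinite. First I would rewrite the objective in quadratic form: since $\|\boldsymbol{\Phi}_i \Delta \boldsymbol{X'}_i\|_2^2 = \Delta \boldsymbol{X'}_i^T \boldsymbol{\Phi}_i^T \boldsymbol{\Phi}_i \Delta \boldsymbol{X'}_i$, setting $\boldsymbol{M} = \boldsymbol{\Phi}_i^T \boldsymbol{\Phi}_i$ reduces the problem to maximizing $\Delta \boldsymbol{X'}_i^T \boldsymbol{M} \Delta \boldsymbol{X'}_i$ subject to the equality constraint $\Delta \boldsymbol{X'}_i^T \Delta \boldsymbol{X'}_i = \epsilon$.

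Next I would form the Lagrangian $\mathcal{J}(\Delta \boldsymbol{X'}_i, \lambda) = \Delta \boldsymbol{X'}_i^T \boldsymbol{M} \Delta \boldsymbol{X'}_i - \lambda(\Delta \boldsymbol{X'}_i^T \Delta \boldsymbol{X'}_i - \epsilon)$ and set its gradient with respect to $\Delta \boldsymbol{X'}_i$ to zero. Using the symmetry of $\boldsymbol{M}$, this yields $2\boldsymbol{M} \Delta \boldsymbol{X'}_i - 2\lambda \Delta \boldsymbol{X'}_i = 0$, i.e.\ the eigenvalue equation $\boldsymbol{M} \Delta \boldsymbol{X'}_i = \lambda \Delta \boldsymbol{X'}_i$. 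Hence every stationary point of the constrained problem is an eigenvector of $\boldsymbol{M} = \boldsymbol{\Phi}_i^T \boldsymbol{\Phi}_i$, with the Lagrange multiplier $\lambda$ equal to the associated eigenvalue. Evaluating the objective at any such stationary point and substituting $\boldsymbol{M} \Delta \boldsymbol{X'}_i = \lambda \Delta \boldsymbol{X'}_i$ gives $\Delta \boldsymbol{X'}_i^T \boldsymbol{M} \Delta \boldsymbol{X'}_i = \lambda \, \Delta \boldsymbol{X'}_i^T \Delta \boldsymbol{X'}_i = \lambda \epsilon$. Since $\epsilon > 0$ is fixed, the objective value at a feasible eigenvector is directly proportional to its eigenvalue, so the maximum is attained by the eigenvector corresponding to the largest eigenvalue of $\boldsymbol{\Phi}_i^T \boldsymbol{\Phi}_i$, scaled to satisfy $\|\Delta \boldsymbol{X'}_i\|_2^2 = \epsilon$.

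The main obstacle, and really the only subtlety, is arguing that this stationary point is a \emph{global} maximum rather than merely a critical point. I would close this gap with the Rayleigh-quotient argument, which also certifies existence of an optimizer: because $\boldsymbol{M}$ is symmetric positive semidefinite, the spectral theorem provides an orthonormal eigenbasis $\{\boldsymbol{u}_1,\dots,\boldsymbol{u}_n\}$ with real nonnegative eigenvalues $\lambda_1 \ge \cdots \ge \lambda_n \ge 0$. Expanding an arbitrary feasible perturbation as $\Delta \boldsymbol{X'}_i = \sum_k c_k \boldsymbol{u}_k$ with $\sum_k c_k^2 = \epsilon$, the objective becomes $\sum_k \lambda_k c_k^2 \le \lambda_1 \sum_k c_k^2 = \lambda_1 \epsilon$, with equality precisely when all the mass lies in the $\lambda_1$-eigenspace. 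This establishes that the eigenvector for the largest eigenvalue is the global maximizer and yields the claimed closed form, completing the proof.
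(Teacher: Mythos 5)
Your proof is correct and follows essentially the same route as the paper's: form the Lagrangian for the constrained quadratic, set the gradient to zero to obtain the eigenvalue equation $\boldsymbol{\Phi}_i^T \boldsymbol{\Phi}_i \, \Delta \boldsymbol{X'}_i = \lambda \, \Delta \boldsymbol{X'}_i$, and pick the eigenvector of the largest eigenvalue. Your concluding Rayleigh-quotient argument via the spectral theorem is a strict improvement in rigor, since the paper merely asserts that selecting the largest eigenvalue maximizes the objective, without certifying (as you do by expanding a feasible point in the orthonormal eigenbasis and bounding the objective by $\lambda_1 \epsilon$) that this stationary point is the global maximum.
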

\par The proof of Theorem \ref{theorem2} is in Appendix \ref{appendixA}.2. According to Theorem \ref{theorem2}, to solve (\ref{eq2}), an optimal $\Delta \boldsymbol{X'}_i$ with respect to $F_1$ is the principal eigenvector of the matrix $\boldsymbol{\Phi}_i \boldsymbol{\Phi}_i^T =\boldsymbol{W}^T \boldsymbol{W}$, denoted by $\boldsymbol{e}_i$. Also, as shown in our analysis of the need for a coordinated subgraph attack in Appendix \ref{appendixA}.1, to solve (\ref{eq3}) for a given $j \in \mathcal{N}(i)$, $\Delta \boldsymbol{X'}_i$ with respect to $F_2$, is the principal eigenvector of the matrix $
\boldsymbol{\Phi}_j \boldsymbol{\Phi}^T_j$, where $\boldsymbol{\Phi}_j=(\boldsymbol{W}-\boldsymbol{H'}_j)$. Let us denote this solution by $\boldsymbol{e}_j$. Since $\boldsymbol{e}_i$ optimizes $F_1$ and $\boldsymbol{e}_j$ optimizes $F_2$ $~\forall~j \in \mathcal{N}(i)$, we can approximately meet both objective functions by a perturbation which is the weighted average of these solutions as shown in (\ref{eq6}).
\begin{equation}\label{eq6}
\Delta \boldsymbol{X'}_i^*=\alpha\boldsymbol{e}_i+\sum_j \beta d_j \boldsymbol{e}_j.
\end{equation}
\noindent The perturbation obtained in (\ref{eq6}) maximizes the weighted average of the loss functions at the node itself and its neighbors where $\alpha$ and $\beta$ control their relative importance, respectively.


\begin{algorithm}[t!]
\caption{MintA-feature perturbation.}
\label{Algorithm1}
\begin{algorithmic}[1]{
\algsetup{linenosize=\small}
\renewcommand{\algorithmicrequire}{\textbf{Input:}} 
\renewcommand{\algorithmicensure}{\textbf{Output:}}
\REQUIRE An adversary subgraph $\mathcal{G'}:(\boldsymbol{A'} \in \mathbb{R}^{n \times n},\boldsymbol{X'} \in \mathbb{R}^{n \times k})$, a feature perturbation budget $k_f$, and a set of editable node features $p$.
\ENSURE A modified adversary subgraph $\mathcal{G'}^*:(\boldsymbol{A'},\boldsymbol{X'}^*)$ 
\STATE{Obtain labeled data by querying the target model.}
\STATE{Train a surrogate model.}
\STATE{Initialize total feature perturbation $\Delta \boldsymbol{X'}\gets\mathbf{0}$, $i=1$.}
\STATE{While $i\leq n ~ \text{AND} ~ \|\Delta \boldsymbol{X'}\|_2 \leq k_f$}
\STATE{Obtain an optimal perturbation $\Delta \boldsymbol{X'}_i^*$ according to (\ref{eq6})}
\STATE{Modify the editable features in $\boldsymbol{X'}_i$ ($p$) to best match $\boldsymbol{X'}_i^*=\boldsymbol{X'}_i+\Delta \boldsymbol{X'}_i^*$ }
\STATE{Update $\Delta \boldsymbol{X'}=\Delta \boldsymbol{X'}+\Delta\boldsymbol{X'}_i^*$.}
\STATE{Increment $i$.}
\RETURN {$\boldsymbol{X'}^*=\boldsymbol{X'}+\Delta \boldsymbol{X'}$}
}
\end{algorithmic}
\end{algorithm}


\par The adversary possesses nodes that are present in the DMG, and these nodes may be connected to non-adversary nodes. Such neighboring non-adversary nodes are susceptible to the perturbations made by their adversary neighbors. Therefore, it is worthwhile to examine the consequences of the adversary's perturbations on the non-adversary neighbors of their nodes. Proposition \ref{prop1} compares this impact to situations where attacks are optimized for individual nodes.


\begin{prop}
\label{prop1}
\par Consider an adversary node $i$, connected to adversary nodes $j \in \mathcal{N}(i)$ and a non-adversary node $l$. On node $l$, the effect of a perturbation optimized by maximizing the loss over the node $i$ and its direct neighbors $j \in \mathcal{N}(i)$ is smaller than the effect of optimizing the loss on only $i$.
\end{prop}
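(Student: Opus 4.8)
The plan is to reduce Proposition~\ref{prop1} to the optimality property already established in Theorem~\ref{theorem2}, together with the observation that the effect on a node outside the objective depends on the perturbation only through a single scalar magnitude. First I would make precise what ``the effect on node $l$'' means. Since the surrogate propagates features through $\boldsymbol{H'}^{(2)}=\boldsymbol{B}\boldsymbol{X'}\boldsymbol{W}$ and only the row of $\boldsymbol{X'}$ belonging to node $i$ is edited, the induced change in node $l$'s representation is the single linear image $b_{li}\,\Delta\boldsymbol{X'}_i\boldsymbol{W}$, where $b_{li}$ is the propagation coefficient from $i$ to $l$ (the corresponding entry of the normalized two-hop adjacency), a fixed nonnegative quantity determined by the graph topology alone. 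Hence the magnitude of the effect on $l$ is $b_{li}^{2}\,\|\Delta\boldsymbol{X'}_i\boldsymbol{W}\|_2^2$, which factors as a fixed topological coefficient $b_{li}^{2}\ge 0$ times the ``signal strength'' $\|\Delta\boldsymbol{X'}_i\boldsymbol{W}\|_2^2$ of the perturbation; the direction of $\Delta\boldsymbol{X'}_i$ enters the effect on $l$ only through this scalar.

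Next I would compare the two strategies under a common budget, taking both the single-node and the coordinated perturbations to satisfy $\|\Delta\boldsymbol{X'}_i\|_2^2=\epsilon$, so that each can be written as $\sqrt{\epsilon}\,\hat{\boldsymbol{u}}$ for a unit vector $\hat{\boldsymbol{u}}$. The signal strength then equals $\epsilon\,\hat{\boldsymbol{u}}^{\top}\boldsymbol{W}^{\top}\boldsymbol{W}\hat{\boldsymbol{u}}$, a Rayleigh quotient of $\boldsymbol{W}^{\top}\boldsymbol{W}$. By Theorem~\ref{theorem2}, the single-node optimum is $\hat{\boldsymbol{u}}=\boldsymbol{e}_i$, the principal eigenvector, attaining the maximal value $\epsilon\,\lambda_{\max}(\boldsymbol{W}^{\top}\boldsymbol{W})$. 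The coordinated perturbation of~(\ref{eq6}), namely $\alpha\boldsymbol{e}_i+\sum_j\beta d_j\boldsymbol{e}_j$ after normalization to the same budget, points in a direction $\hat{\boldsymbol{u}}'\neq\boldsymbol{e}_i$ whenever the neighbor solutions $\boldsymbol{e}_j$ are not all collinear with $\boldsymbol{e}_i$, and for any such direction the Rayleigh quotient is strictly below $\lambda_{\max}$.

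Combining the two steps finishes the argument: since $b_{li}^{2}$ is a fixed nonnegative scalar, multiplying the Rayleigh inequality $\hat{\boldsymbol{u}}'^{\top}\boldsymbol{W}^{\top}\boldsymbol{W}\hat{\boldsymbol{u}}'\le\lambda_{\max}$ through by $b_{li}^{2}\epsilon$ gives that the effect on $l$ under the coordinated attack is at most the effect under the single-node attack, with strict inequality in the generic case where the $\boldsymbol{e}_j$ are not parallel to $\boldsymbol{e}_i$. Intuitively, the coordinated objective diverts part of the budget onto directions that raise the loss at the adversary neighbors $j\in\mathcal{N}(i)$; because $l$ is not among the nodes whose loss is being maximized, this diversion can only shrink the propagated signal reaching $l$.

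The step I expect to require the most care is the first: justifying that the effect on $l$ depends on the perturbation only through the scalar $\|\Delta\boldsymbol{X'}_i\boldsymbol{W}\|_2^2$ and a fixed coefficient. This rests on the perturbation being localized to node $i$'s features, so that the two-hop propagation to $l$ is the single map $\Delta\boldsymbol{X'}_i\mapsto b_{li}\Delta\boldsymbol{X'}_i\boldsymbol{W}$. If instead one measures the effect through the full neighbor-loss expression in~(\ref{eq9}), one must expand the square and verify that the constant baseline term and the linear cross term are common to both strategies, so the comparison again reduces to the quadratic term $\|\boldsymbol{B}\Delta\boldsymbol{X'}_i\boldsymbol{W}\|_2^2$. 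Holding both perturbations to the same budget $\epsilon$ is the other place where the argument could otherwise fail, since without that normalization the inequality could be defeated by mere rescaling.
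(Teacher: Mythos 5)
Your proposal is correct at the paper's level of rigor, but it reaches the inequality by a genuinely different mechanism than the paper's Appendix~\ref{appendixA}.3 proof. The shared skeleton is the same: the perturbation is localized to row $i$, so its contribution at $l$ is a fixed topological coefficient times a scalar magnitude, and Theorem~\ref{theorem2} identifies the single-node optimum as a principal eigenvector. But the paper never renormalizes the coordinated perturbation of (\ref{eq6}): it takes each $\boldsymbol{e}$ as a unit-norm eigenvector scaled by the budget, measures the effect on $l$ as $\frac{1}{d_l}$ times the norm of the perturbation vector itself (no $\boldsymbol{W}$-amplification), and obtains strictness from the triangle inequality --- the weighted average $\alpha\boldsymbol{e}(\boldsymbol{W})+\beta\sum_j\frac{1}{d_j}\boldsymbol{e}(\boldsymbol{W}-\boldsymbol{H'}_j)$ is strictly shorter than the budget because the constituent eigenvectors cannot all align (alignment would force equal messages $\boldsymbol{H'}_j$), so the ratio falls strictly below $\alpha+\beta=1$. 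You instead equalize the two perturbations at the common budget $\epsilon$ and locate the loss of effect in the \emph{direction}: the normalized coordinated direction differs from $\boldsymbol{e}_i$, so its Rayleigh quotient on $\boldsymbol{W}^T\boldsymbol{W}$ is strictly below $\lambda_{\max}$. The two routes are complementary. The paper's version degenerates to equality if the adversary rescales the combined perturbation back up to the full budget --- a case the budget constraint in Algorithm~\ref{Algorithm1} would permit and which your argument handles cleanly; conversely, your argument requires the effect on $l$ to be measured after multiplication by $\boldsymbol{W}$ and needs the top eigenvalue of $\boldsymbol{W}^T\boldsymbol{W}$ to be simple (or the combined direction to leave the top eigenspace) for strictness, hypotheses the paper's norm-based comparison does not invoke.

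One slip in your closing caveat: when the effect is measured through the full neighbor loss in (\ref{eq9}), the linear cross term is \emph{not} common to both strategies --- it depends on the direction of $\Delta \boldsymbol{X'}_i$, which is exactly what differs between the two attacks --- so the claimed reduction to the quadratic term does not go through as stated. Your primary definition of the effect as $b_{li}^{2}\,\|\Delta\boldsymbol{X'}_i \boldsymbol{W}\|_2^2$ sidesteps this and matches the paper's own message-magnitude convention, so the main argument is unaffected; just drop or correct the cross-term remark.
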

\noindent The proof of Proposition \ref{prop1} is in Appendix \ref{appendixA}.3.


\subsection{The proposed MintA algorithm}
\par The above perturbation optimization forms the foundation of our proposed MintA attack. The proposed MintA algorithm is described in Algorithm \ref{Algorithm1}. First, the adversary trains a surrogate model (Steps 1-2). Then, it constructs an estimate of its attributed adversary subgraph. After that, the proposed attack is performed collaboratively on adversary nodes by optimizing their perturbations according to (\ref{eq6}) and within a given feature perturbation budget $k_f$ (Steps 3-5). Next, the optimal feature perturbation is approximated by manipulating the editable node features to best fit the desired optimized values (Step 6). The final modified feature matrix of the adversary nodes is obtained accordingly. MintA is also applicable to edge perturbations as well. This can be done by selecting edge edits that can best match the average of the objective functions in (\ref{eq6}). Algorithm \ref{Algorithm2} presents the edge perturbation steps.

\begin{algorithm}[t!]
\caption{MintA-edge perturbation.}
\label{Algorithm2}
\begin{algorithmic}[1]{
\algsetup{linenosize=\small}
\renewcommand{\algorithmicrequire}{\textbf{Input:}} 
\renewcommand{\algorithmicensure}{\textbf{Output:}}
\REQUIRE An adversary subgraph $\mathcal{G'}:(\boldsymbol{A'} \in \mathbb{R}^{n \times n},\boldsymbol{X'} \in \mathbb{R}^{n \times k})$, an edge perturbation budget $k_e$.
\ENSURE A modified adversary subgraph $\mathcal{G'}^*:(\boldsymbol{A'}^*,\boldsymbol{X'})$ 
\STATE{Obtain labeled data by querying the target model.}
\STATE{Train a surrogate model.}
\STATE{Initialize the number of edge flips $i=0$.}
\STATE{While $i\leq k_e$}
\STATE{Select the node pair having the maximum value of the average of $F_1$ and $F_2$ appearing in (\ref{eq2}) and (\ref{eq3}).}
\STATE{Edge flip: edit the name or change the IP resolution to implement the edge edit.}
\STATE{Increment $i$.}
\RETURN {$\boldsymbol{A'}^*$}
}
\end{algorithmic}
\end{algorithm}

\begin{table}[htb]
\centering
\caption{A summary of research questions and answers.}
\begin{tabular}{|c|l|l|}
\hline
Q & Property investigated & Key Result \\ \hline
1 & Effectiveness (adversary nodes) & High ASR \& Low NFR \\ \hline
2 & Stealthiness (non-adversary nodes) & Low ASR \& Low NFR \\ \hline
3 & Robustness: outlier detection & High robustness \\ \hline
4 & Robustness: graph purification & High robustness \\ \hline
5 & Baseline comparisons & Higher ASR \& Lower NFR \\ \hline
6 & Costs & Low and scalable \\ \hline
\end{tabular}
\label{summarytable}
\end{table}

\section{Experiments}
\label{Section5}
\par We conduct experiments to evaluate MintA's performance. The source code and data are available on the link:
\href{https://github.com/mahmoudkanazzal/MintA}{https://github.com/mahmoudkanazzal/MintA}. Our evaluation aims to answer the questions summarized in Table~\ref{summarytable}.

\begin{table}[htbp]
 \centering
 \caption{Key statistics of the dataset used (Client and IP nodes have no labels).}
 \begin{tabular}{|l|l|l|l|}
 \hline
Node Type & Eidsiva & PDNS & Total \\ \hline
Clients & 12035 & 0 & 12035 \\\hline
Domains (total) & 12035 & 465,905 & 477940 \\\hline Domains (benign) & 5,000 & 4,963 & 9,963 \\\hline Domains (malicious) & 5,000 & 20,354 & 25,354 \\\hline
IPs & 8000 & 73,593 & 81593\\ \hline
 \end{tabular}
 \label{table2}
\end{table}

\subsection{The setup and dataset}

\par In our experiments, we consider Sun et al.'s algorithm \cite{sun2020deepdom} as a target GNN-based MDD model. The approach in \cite{sun2020deepdom} treats MDD as a semi-supervised node classification problem. The DMG in \cite{sun2020deepdom} is a heterogeneous graph created according to the network schema in Fig. \ref{network_schema}(a) with the 21 FANCI features \cite{schuppen2018fanci} as its domain node attributes. It is noted that \cite{sun2020deepdom} and all the other works in \cite{manadhata2014detecting,khalil2016discovering,sun2019hindom,li2021dydom,zhang2021attributed,li2022heterogeneous} do not share datasets or artifacts as their graphs have private enterprise data. Therefore, we obtain a heterogeneous DMG with a similar structure by merging a domain-IP graph \cite{kumarasinghe2022pdns} with the Eidsiva enterprise graph, obtained from Eidsiva Bredband (a broadband telecom operator in Norway), collected by \cite{rismyhr2020graph}, and available at \cite{rismyhr}. Table \ref{table2} lists key statistics of the merged dataset. It is noted that we use another dataset from \cite{marques2021dns} to query the target MDD model for obtaining data to train the surrogate model. By doing that, the surrogate model is trained independently from the training and testing DMGs of the target model. Experiments are conducted on a Lambda GPU Workstation with 128 GB of RAM, two GPUs each of 10 GB RAM, and an I9 CPU of 10 cores at a clock speed of 3.70 GHz.

\begin{figure}[htb]
\centering
 \resizebox{0.999\columnwidth}{!}{
\begin{tabular}{cc}
\includegraphics[width=12cm]{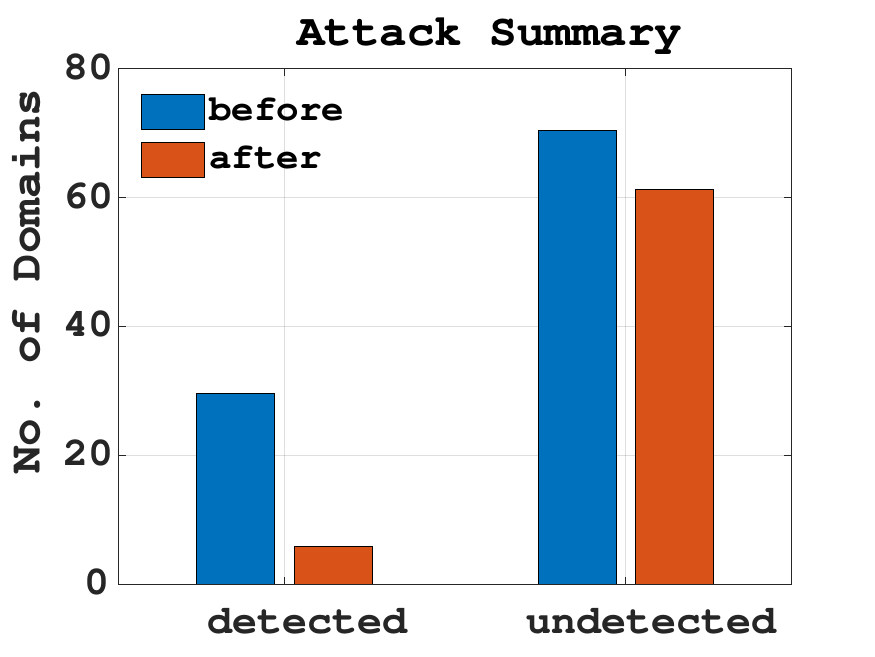}
&
\includegraphics[width=12cm]{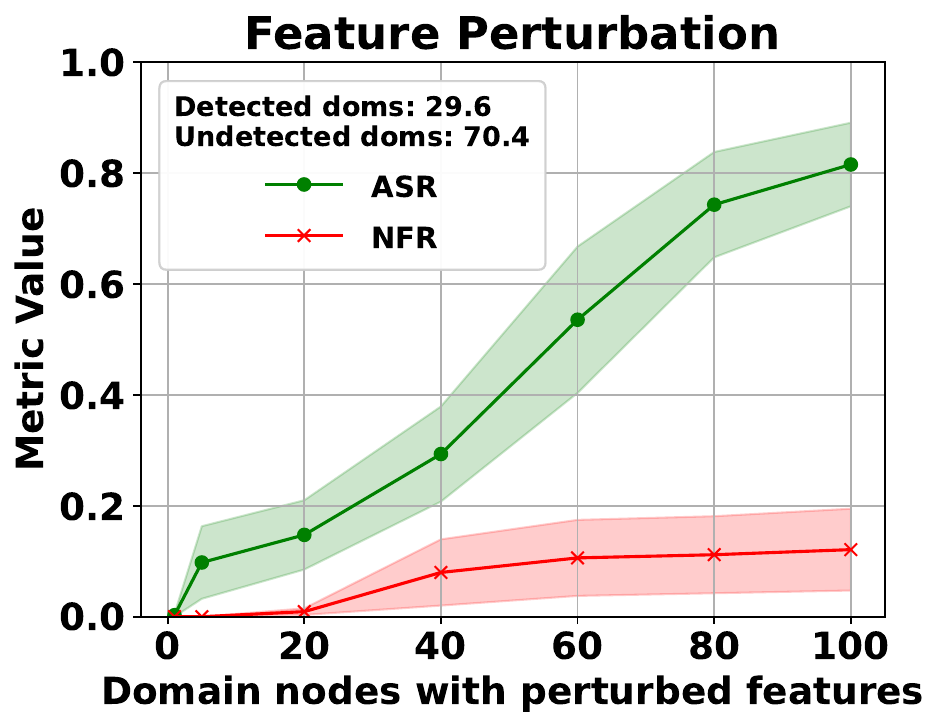}
 \\
 \Huge{(a)} & \Huge{(b)} \\
\multicolumn{2}{c}{
\includegraphics[width=12cm]{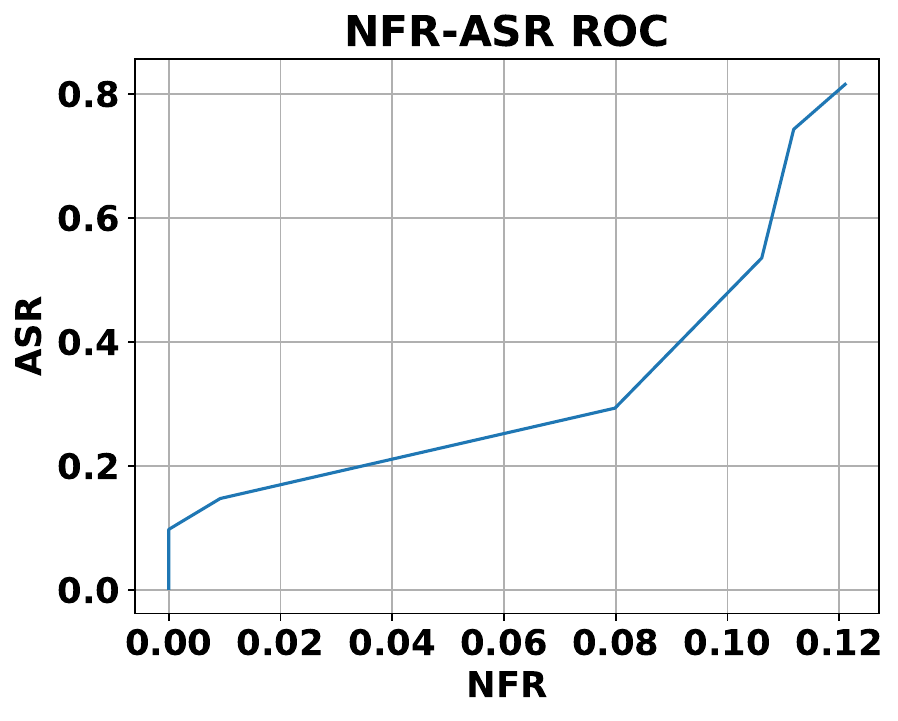}}
\\
\multicolumn{2}{c}{
\Huge{(c)}}
\end{tabular}}
\caption{With feature perturbation and \textit{created adversary subgraphs}; attack summary when 100 domains are attacked in (a), ASR, NFR, and ROC when less than 100 domains are attacked in (b) and (c), respectively.}
\label{feature_res_mydoms} 
\end{figure}

\par To conduct experiments with the proposed attack, it is essential to model the adversary's intervention in the DMG using what we refer to as the ``adversary's subgraph''. While subgraphs of domains owned by real adversaries would be ideal for this purpose, unfortunately, such data is not accessible. As a result, we resort to the following modeling approaches and incorporate them into our experiments. 
\begin{itemize}[leftmargin=*]
\item A \textit{created adversary modeling approach:} We create a set of domain nodes, which will serve as the adversary's nodes (Registered in Feb. 2023 and intended to persist for a few months.) The MDD entity then utilizes the DNS logs to construct a DMG that incorporates these nodes. We use a free service provided by \url{https://profreehost.com} to create and host ULRs associated with the registered domain names, which will be dedicated to emulating the adversary's domain nodes. This process entails the creation of URLs, their free hosting, and the linkage of domain names to these URLs. By doing so, the adversary can construct its subgraph based on the knowledge of these domain names and their respective host IP addresses. It is noteworthy that this approach closely reflects the constraints faced by actual adversaries. However, it has some limitations as the target MDD might not accurately classify the URLs associated with these domains as malicious. To address this limitation, we also employ the following additional complementary adversary modeling approach.
\item A \textit{sampled adversary modeling approach:} In a given inference DMG, we acquire adversary nodes by randomly sampling connected nodes solely from the set of domain nodes that are labeled as malicious. We incorporate these adversary nodes into the test set to enable the MDD model to infer their malicious nature. This approach effectively models the adversary as a collection of interconnected malicious nodes.
\end{itemize}

\subsection{Performance evaluation of MintA}
\par In the following experiments, we address Q1: Does MintA meet the adversary's goals? We evaluate the performance of MintA in terms of two metrics. First is the attack success rate (ASR) which is the percentage of undetected malicious adversary domains due to the attack. Second, is a negative flip rate (NFR) which is the percentage of undetected adversary domains before applying MintA that are classified as malicious after we conduct MintA. This represents the side effect of the attack. It is noted that for the adversary to maximize the effectiveness of the attack, all of its domains need to be attacked (i.e., 100 in our case). Nonetheless, in our experiments, we examine the ASR and NFR for the cases of attacking less than 100 domains. We also present the ASR-NFR trade-off in a ROC curve. We repeat each experiment for 30 trials and report the average values of these metrics with their 95\% confidence intervals. In each trial, we consider a different realization of the MDD model, the adversary nodes, and the attack.

\begin{figure}[htb]
\centering
 \resizebox{0.999\columnwidth}{!}{
\begin{tabular}{ccc}
\includegraphics[width=12cm]{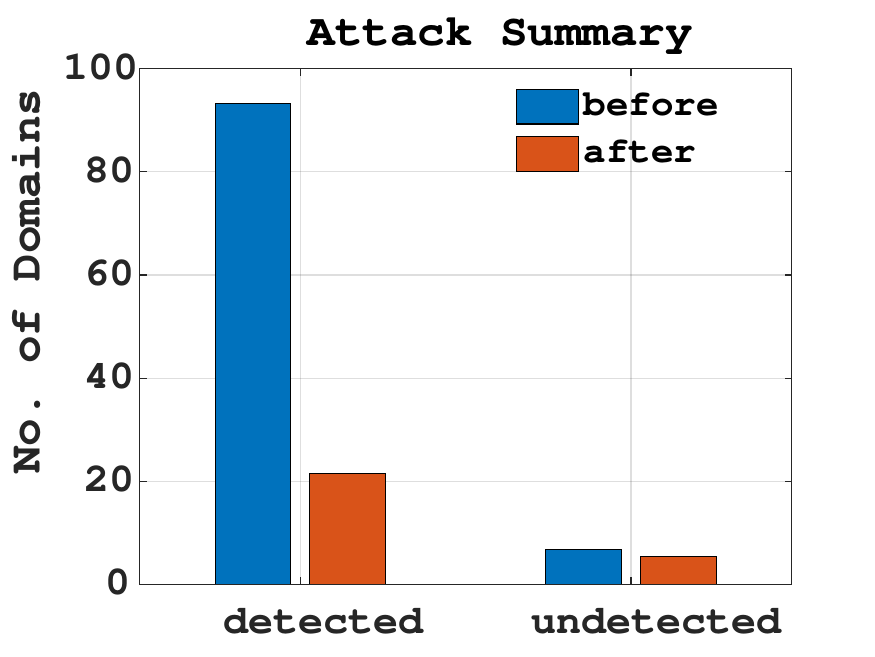}
&
 \includegraphics[width=12cm]{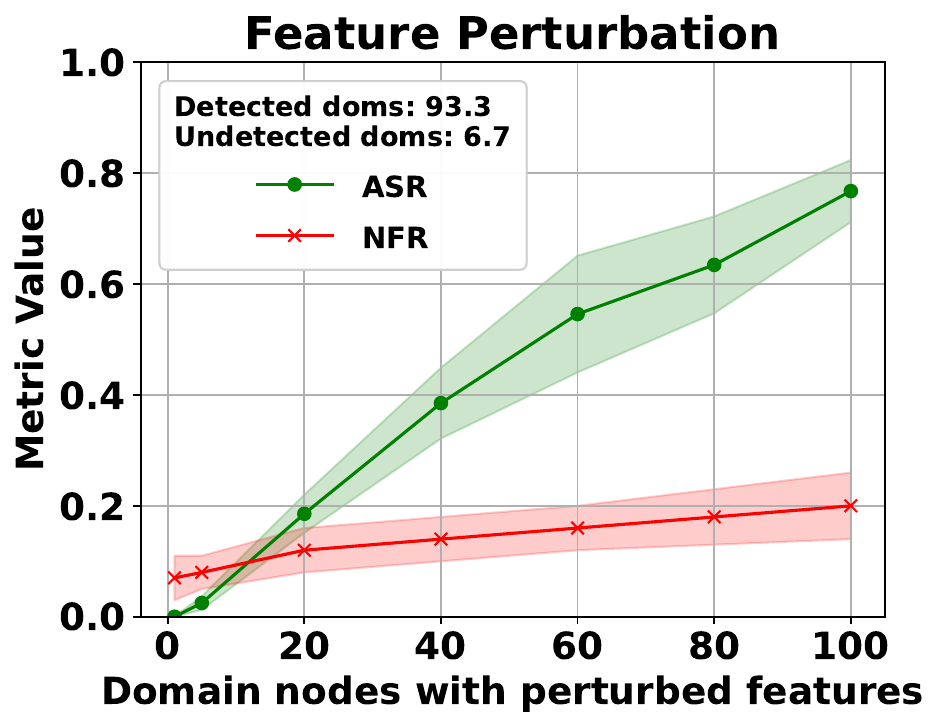}
 \\
\Huge{(a)} & \Huge{(b)}
 \\
\multicolumn{2}{c}{
\includegraphics[width=12cm]{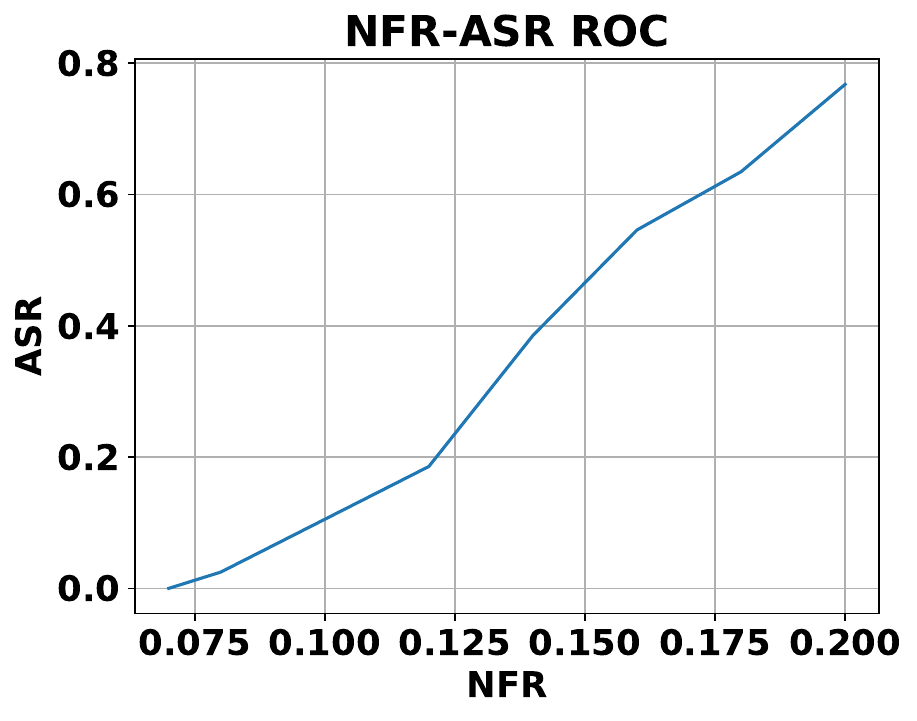}}
\\
\multicolumn{2}{c}{
 \Huge{(c)} }
\end{tabular}}
\caption{With feature perturbation and \textit{sampled adversary subgraphs}; attack summary when 100 domains are attacked (a), ASR, NFR, and ROC when less than 100 domains are attacked in (b) and (c), respectively.}
\label{feature_res_ras} 
\end{figure}

\subsubsection{Feature perturbation}
\par In this experiment, we record the ASR and NFR after the adversary activates the feature perturbation attack (Algorithm \ref{Algorithm1}). We set the feature perturbation budget $k_f$ to the sum of dynamic ranges of editable features (domain name length feature range is 5 and binary features' range is 1, so the budget is 5.38 times the number of nodes involved in the attack). First, we consider the \textit{created adversary subgraph}. We found that before activating the attack, around 29.6\% of the domains created on average are detected as malicious by the target MDD model, and the remaining 70.4\% on average are not detected\footnote{This rationale is justifiable since these newly created domains do not currently host real cyber-attacks. As a result, they lack exploitable associations, such as connections with compromised clients, that could potentially expose their existence.}. The attack summary is shown in Fig. \ref{feature_res_mydoms}(a). It shows that the attack evades the detection of 23.7 (0.80 $\times$ 29.6) adversary domains on average out of the 29.6\% detectable adversary domains. The attack results in the detection of around 9.15 (0.13 $\times$ 70.4) domains on average from the undetectable domains. This shows that the attack is successful because the number of evaded domains is significantly larger than the ones detected after launching MintA. For the cases of attacking less than 100 domains, the ASR and NFR are shown in Fig \ref{feature_res_mydoms}(b) and the ROC curve is in Fig. \ref{feature_res_mydoms}(c). 

\par Next, we repeat the above experiment with \textit{sampled adversary subgraphs}. Before launching the MintA attack, we found out that 93.3\% of the domains sampled, on average, are detected as malicious by the target MDD model, and the remaining 6.7\% on average are not detected. For this experiment, the attack summary shown in Fig. \ref{feature_res_ras}(a) shows that the attack evades the detection of 71.8 (0.77 $\times$ 93.3) adversary domains on average out of the 93.3\% detectable adversary domains. The attack results in the detection of about 1.3 (0.2 $\times$ 6.7) domains on average from the undetectable domains. Therefore, this attack is successful and is even more successful than in the case of the \textit{created adversary subgraph} since more domains are evaded and fewer are made detectable. For the cases of attacking less than 100 domains, the ASR and NFR are shown in Fig. \ref{feature_res_ras}(b) and the ROC curve is in Fig. \ref{feature_res_ras}(c).

\subsubsection{Edge perturbation}
\par In this experiment, we evaluate edge perturbation attacks (Algorithm \ref{Algorithm2}). We allocate the budget for edge perturbation, denoted as $k_e$, to be equivalent to the number of edges that will be swapped among the adversary domain nodes involved in the attack. This decision is based on the fact that the adversary possesses knowledge and control solely over the edges connected to its own nodes. We select the \textit{apex}, \textit{resolve}, and \textit{similar} edges to perturb. Perturbing these edges can be implemented as specified in Section \ref{Section4}.1. 

\par First, we consider the \textit{created adversary subgraph}. The results with \textit{apex} edge perturbation are shown in the top row of Fig. \ref{edge_perturb_res_mydoms}. For this attack, the attack summary plot is shown in Fig. \ref{edge_perturb_res_mydoms}(a). This plot shows that the attack evades the detection of 21.9 (0.74 $\times$ 29.6) adversary domains, on average, out of the 29.6\% detectable adversary domains. The attack results in the detection of around 8.5 (0.12 $\times$ 70.4) domains, on average, from the undetectable domains. This shows that the attack is successful because the number of evaded domains is significantly larger than the ones detected from the undetectable domains. Compared with feature perturbation, the attack has less ASR and NFR. For the cases of attacking less than 100 domains, the ASR and NFR are shown in Fig \ref{edge_perturb_res_mydoms}(b) and the ROC curve is in Fig. \ref{edge_perturb_res_mydoms}(c).

\begin{figure}[htb]
\centering
 \resizebox{0.999\columnwidth}{!}{
\begin{tabular}{ccc}
\includegraphics[width=12cm]{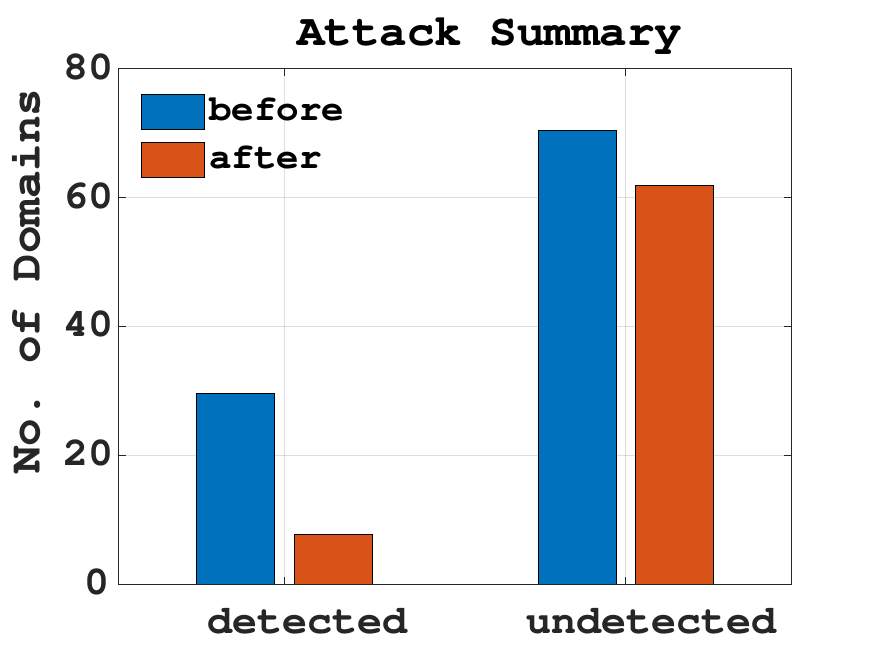}&
\includegraphics[width=12cm]{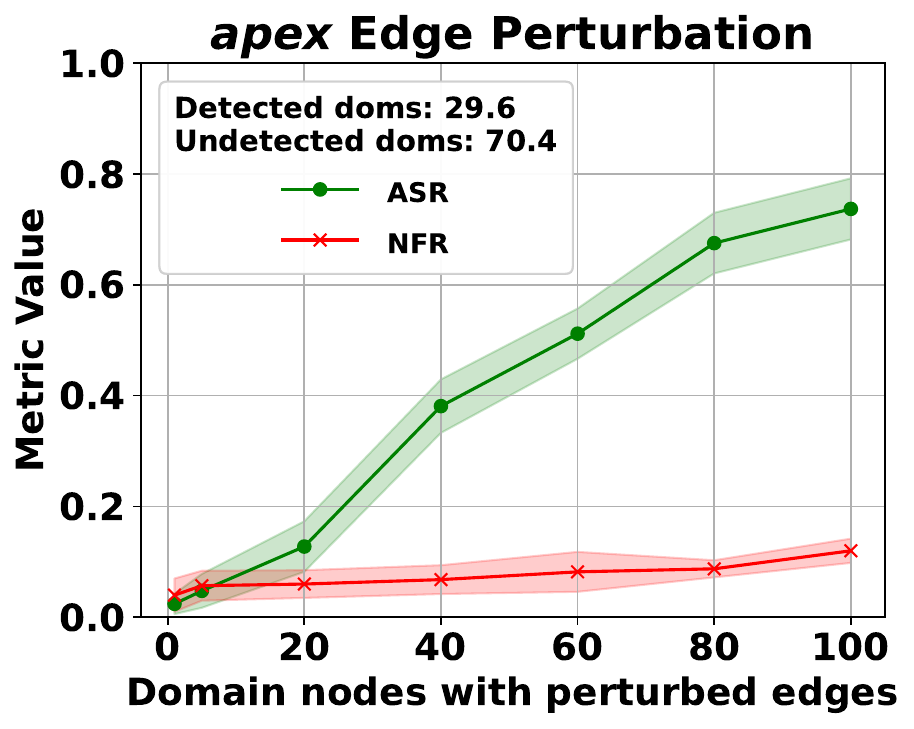}&
\includegraphics[width=12cm]{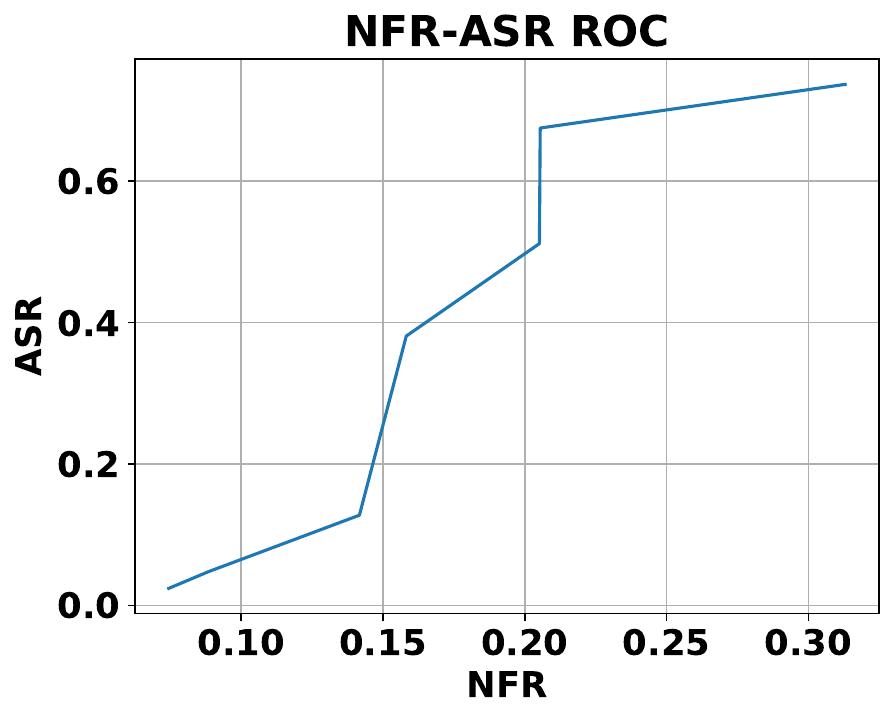}
\\
\Huge{(a)} & \Huge{(b)} & \Huge{(c)}
\\
\includegraphics[width=12cm]{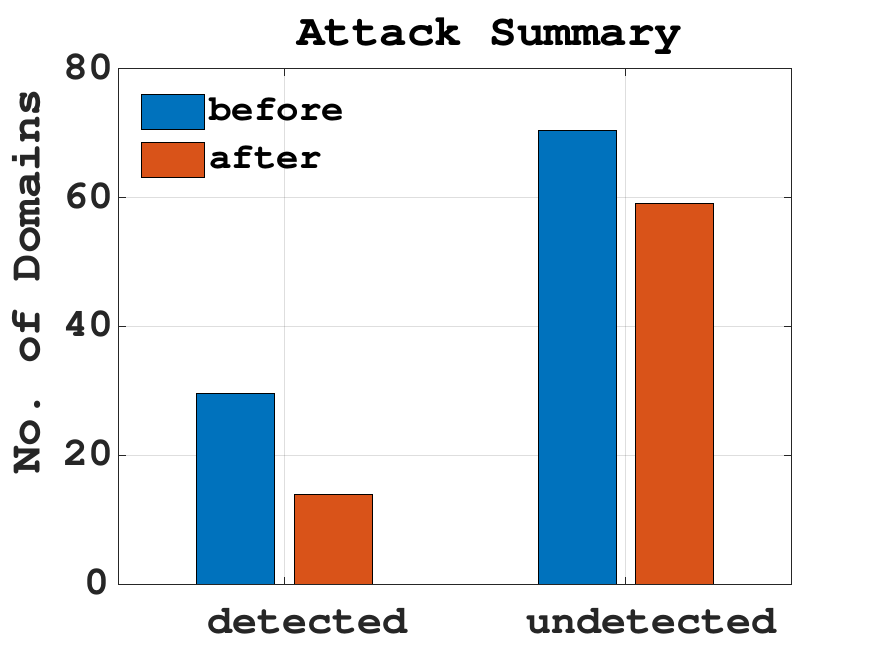}&
\includegraphics[width=12cm]{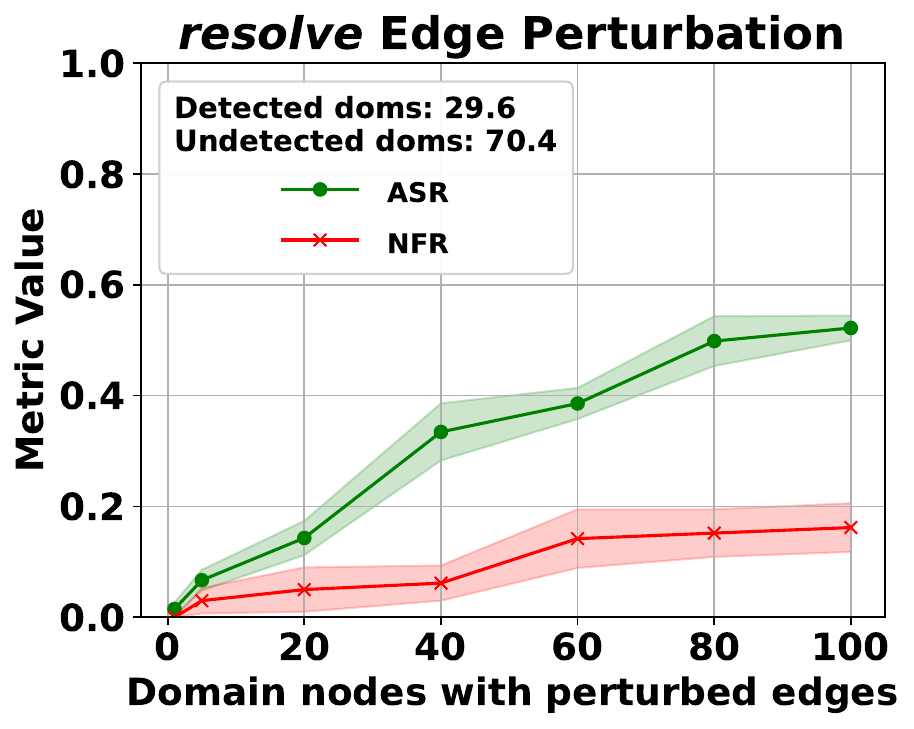}&
\includegraphics[width=12cm]{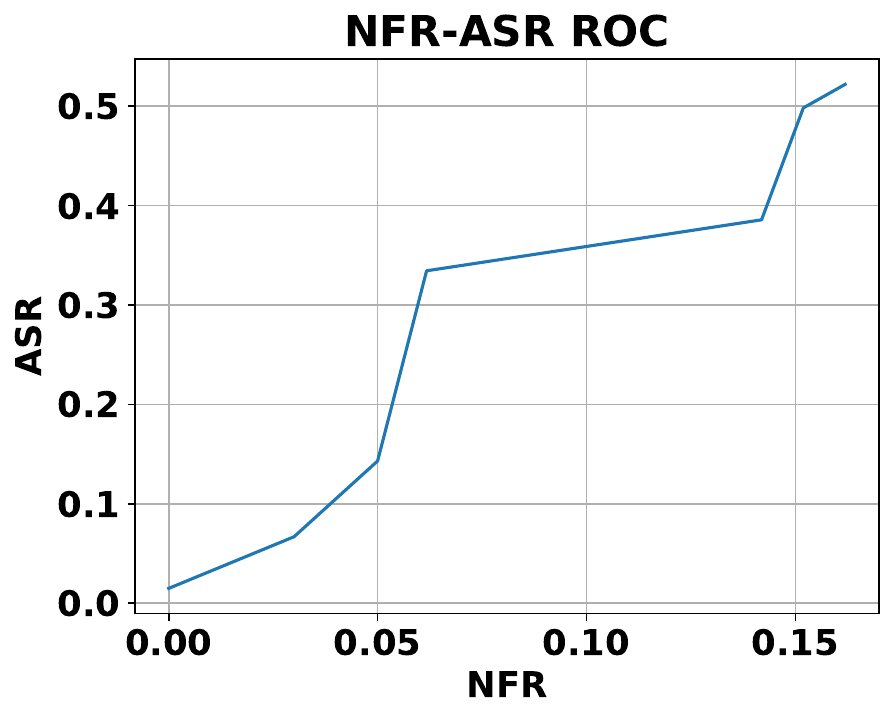}
\\
\Huge{(d)} &\Huge{(e)} & \Huge{(f)} 
\\
\includegraphics[width=12cm]{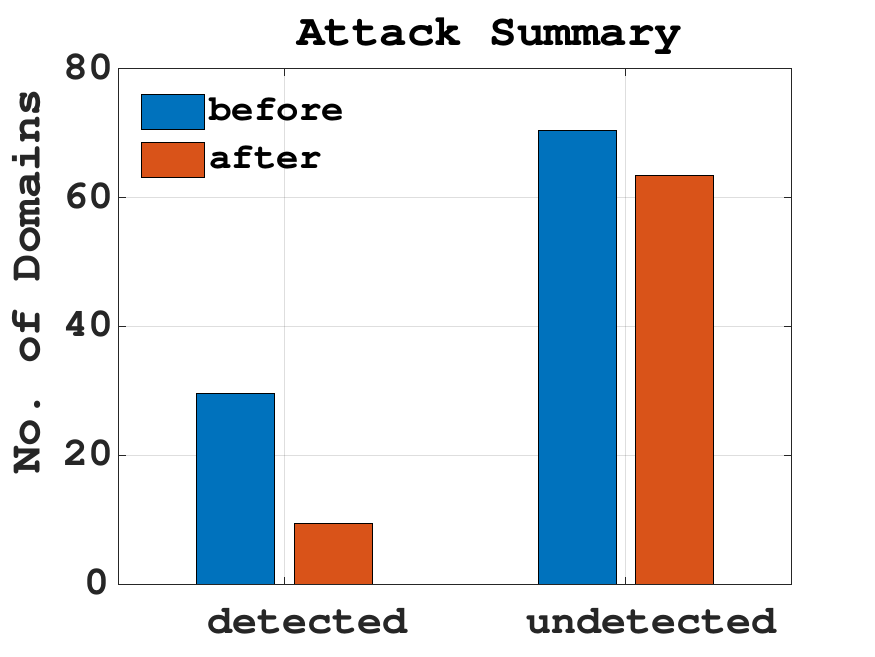}&
 \includegraphics[width=12cm]{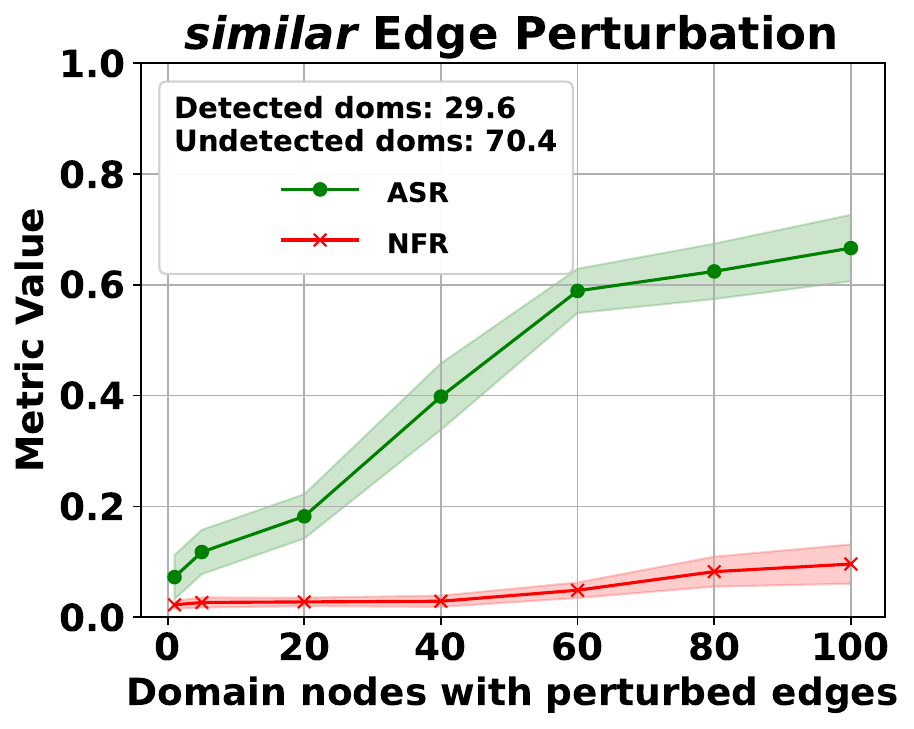}&
\includegraphics[width=12cm]{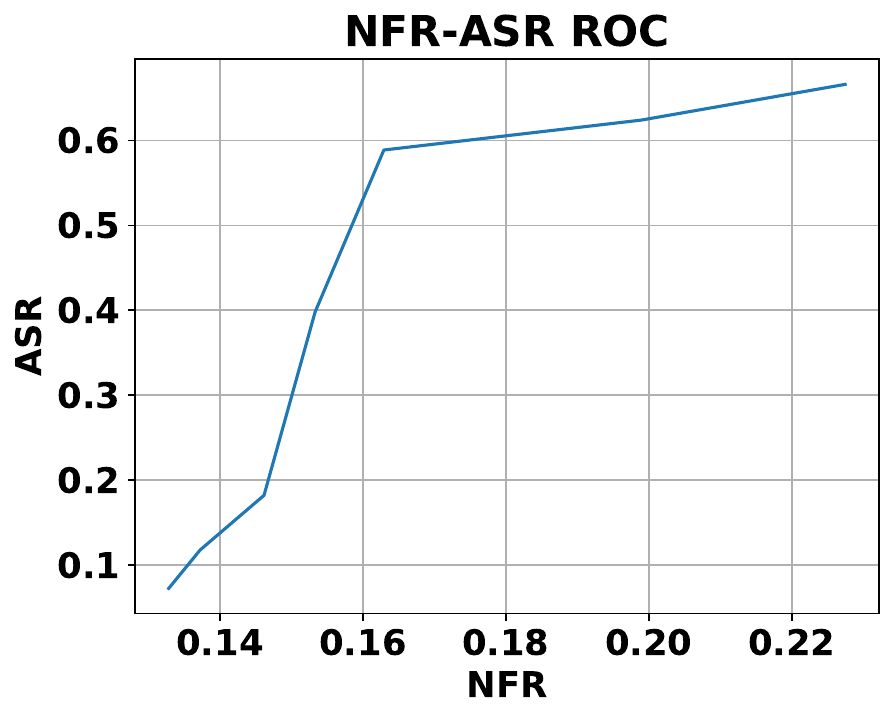}
\\
\Huge{(g)} &\Huge{(h)} &\Huge{(i)}
\end{tabular}}
\caption{With \textit{apex}, \textit{resolve}, and \textit{similar} edge perturbation and \textit{created adversary subgraphs}; attack summary when 100 domains are attacked, ASR, NFR, and ROC when less than 100 domains are attacked, in rows 1, 2, and 3, respectively.}
\label{edge_perturb_res_mydoms} 
\end{figure}

\par The \textit{resolve} edge perturbation results are shown in the second row of Fig \ref{edge_perturb_res_mydoms}. Looking at its attack's summary graph in Fig \ref{edge_perturb_res_mydoms}(d), it evades the detection of 15.6 (0.53 $\times$ 29.6) adversary domains, on average, out of the 29.6\% detectable adversary domains. The attack results in the detection of around 10.56 (0.15 $\times$ 70.4) domains, on average, from the undetectable domains. This attack is marginally successful. Nonetheless, it is less effective than feature and \textit{apex}-edge attacks. The reason is that the adversary has limited IP addresses, and it can only use these addresses. This restriction limits the effectiveness of the attack. For the cases of attacking less than 100 domains, the ASR and NFR are shown in Fig \ref{edge_perturb_res_mydoms}(e), and the ROC curve is in Fig. \ref{edge_perturb_res_mydoms}(f). Finally, the \textit{similar} edge perturbation results are shown in the last row of Fig \ref{edge_perturb_res_mydoms}. Considering the attack summary in Fig \ref{edge_perturb_res_mydoms}(g), this edge perturbation performs better than the \textit{resolve} edge (evades 20.1 domains on average (0.68 $\times$ 29.6), and causes the detection of 7.0 domains on average (0.1 $\times$ 70.4) and is less efficient than the \textit{apex} edge case. For the cases of attacking less than 100 domains, the ASR and NFR are shown in Fig \ref{edge_perturb_res_mydoms}(h) and the ROC curve is in Fig. \ref{edge_perturb_res_mydoms}(i).

\begin{figure}[htb]
\centering
 \resizebox{0.999\columnwidth}{!}{
\begin{tabular}{ccc}
\includegraphics[width=12cm]{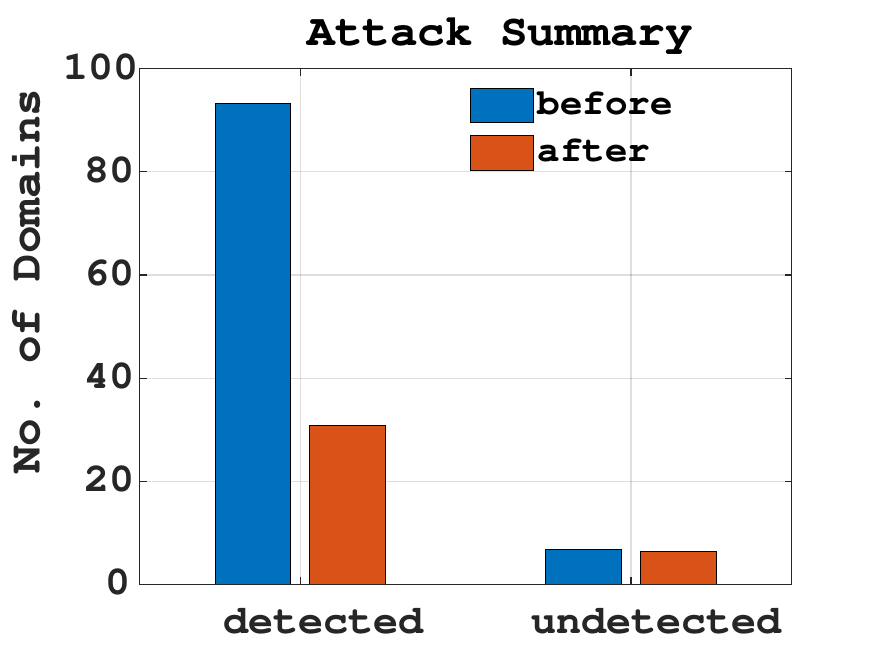}&
 \includegraphics[width=12cm]{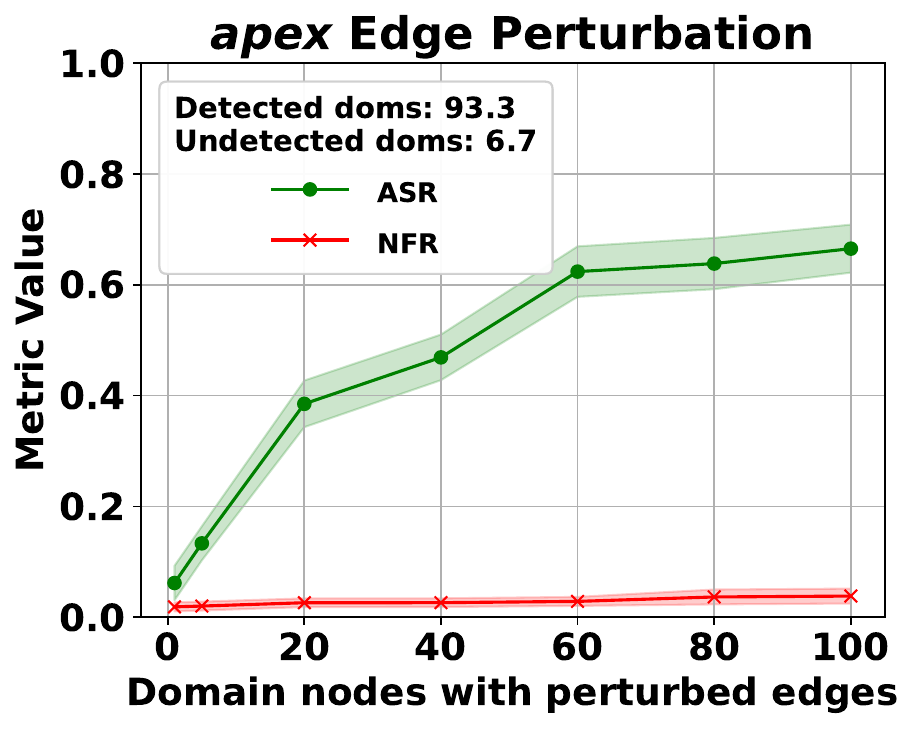}&
\includegraphics[width=12cm]{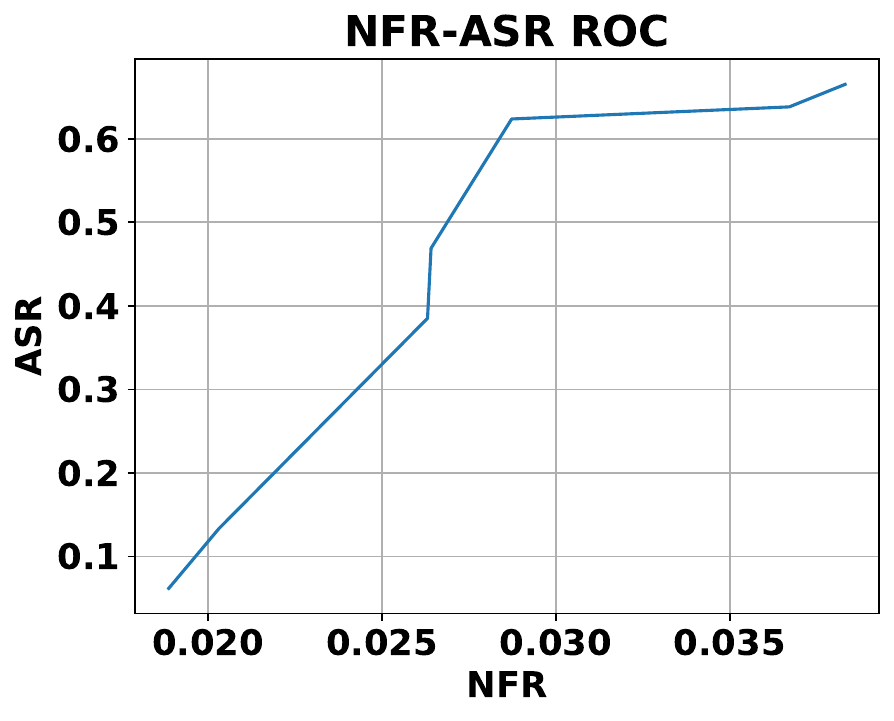}
\\
\Huge{(a)} & \Huge{(b)} & \Huge{(c)} 
\\
\includegraphics[width=12cm]{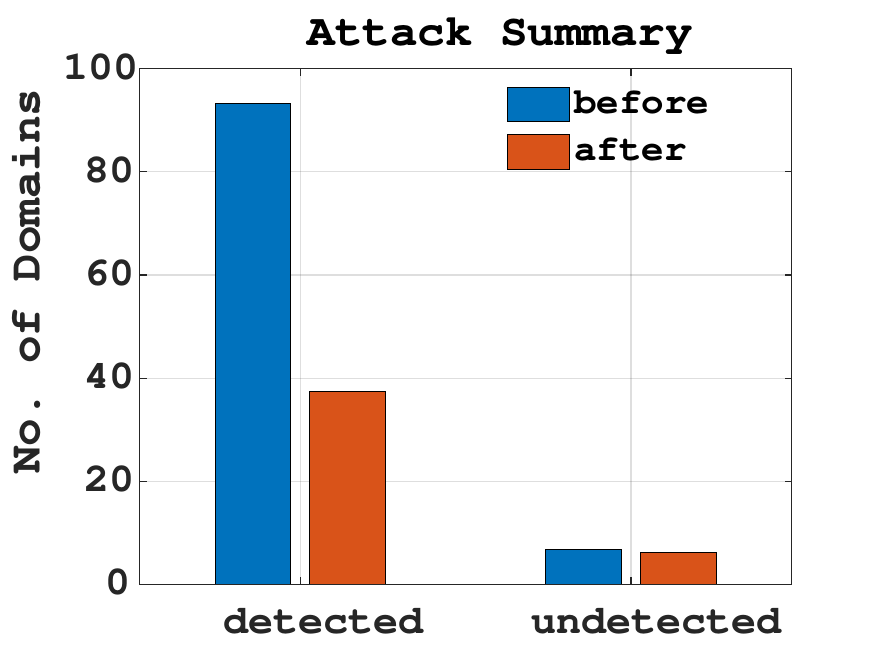}&
\includegraphics[width=12cm]{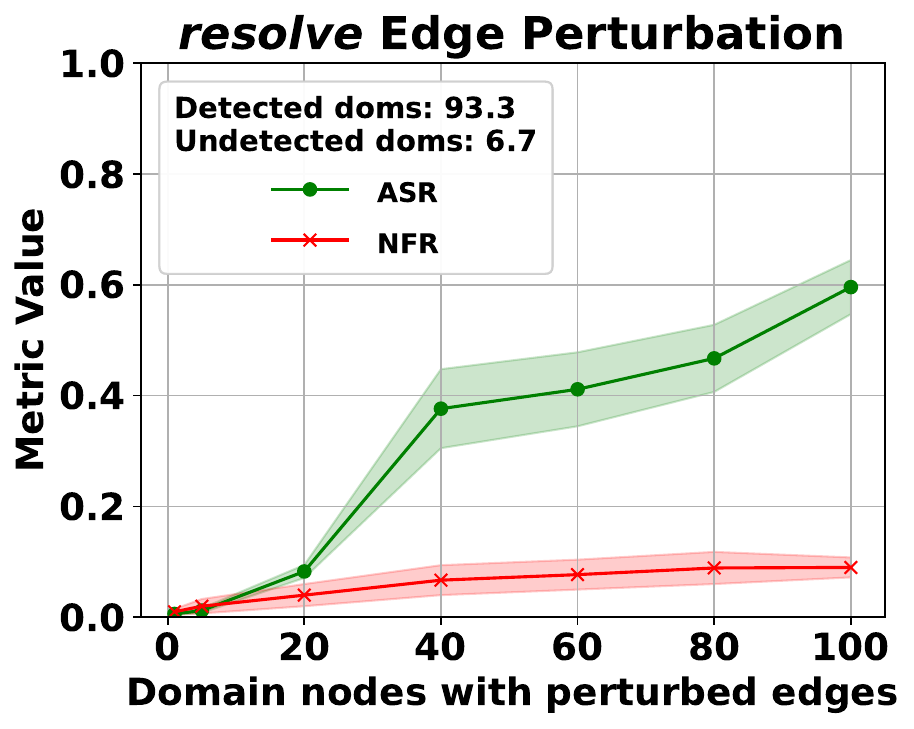}&
\includegraphics[width=12cm]{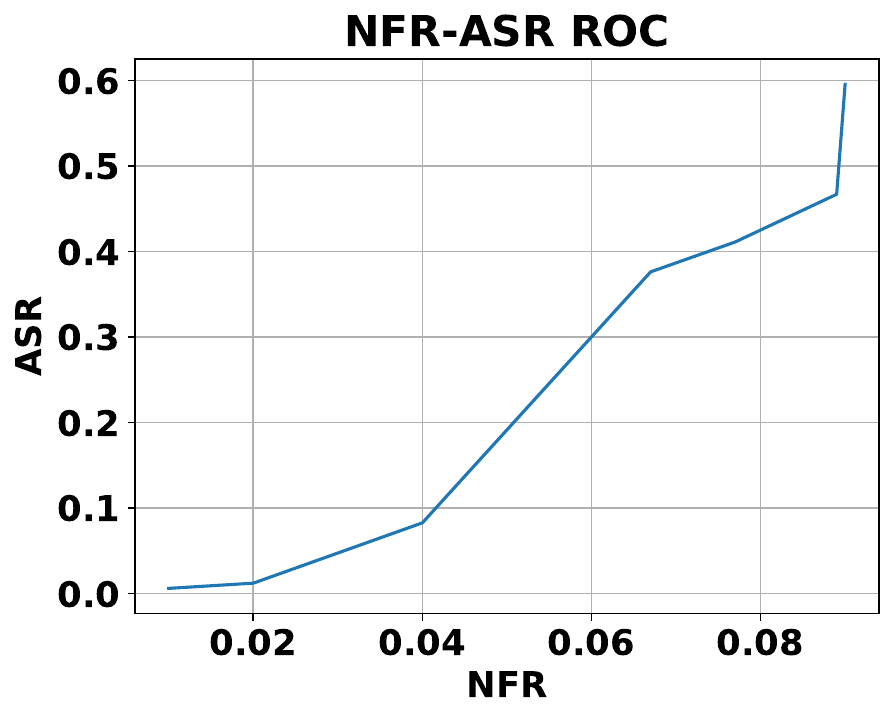}
\\
\Huge{(d)} & \Huge{(e)} & \Huge{(f)}
\\
\includegraphics[width=12cm]{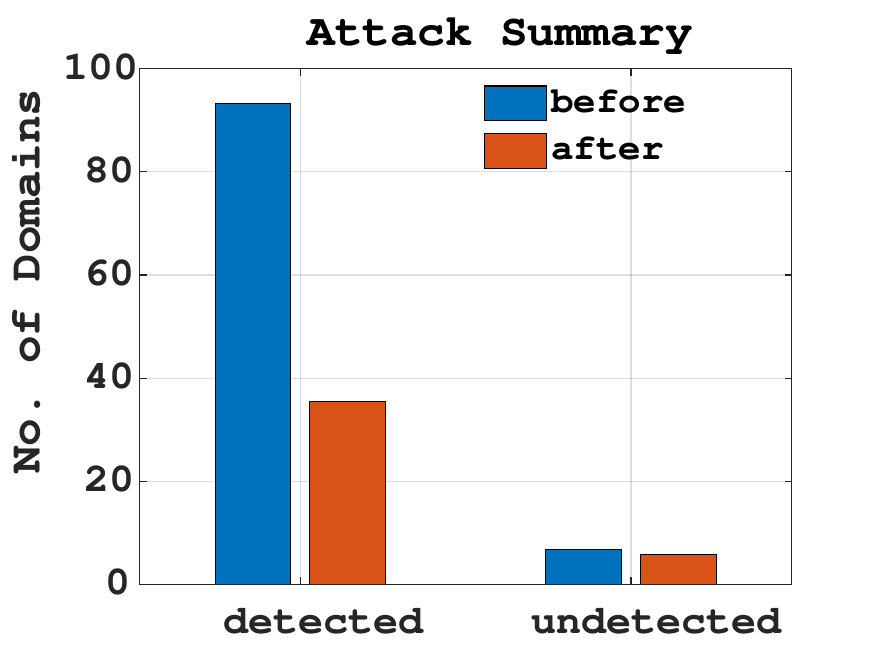}&
 \includegraphics[width=12cm]{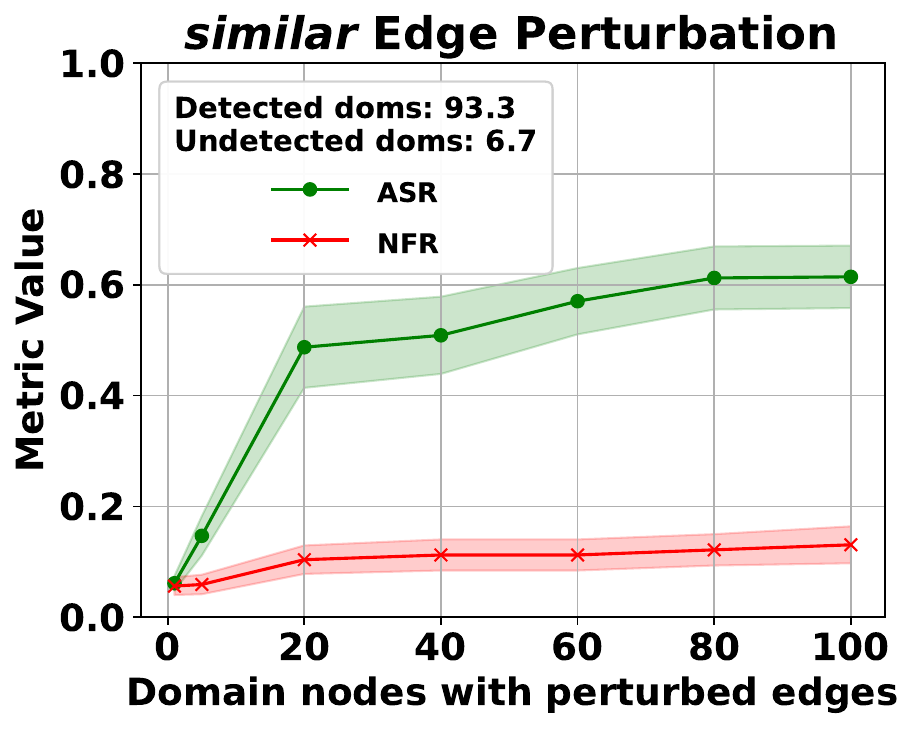}&
\includegraphics[width=12cm]{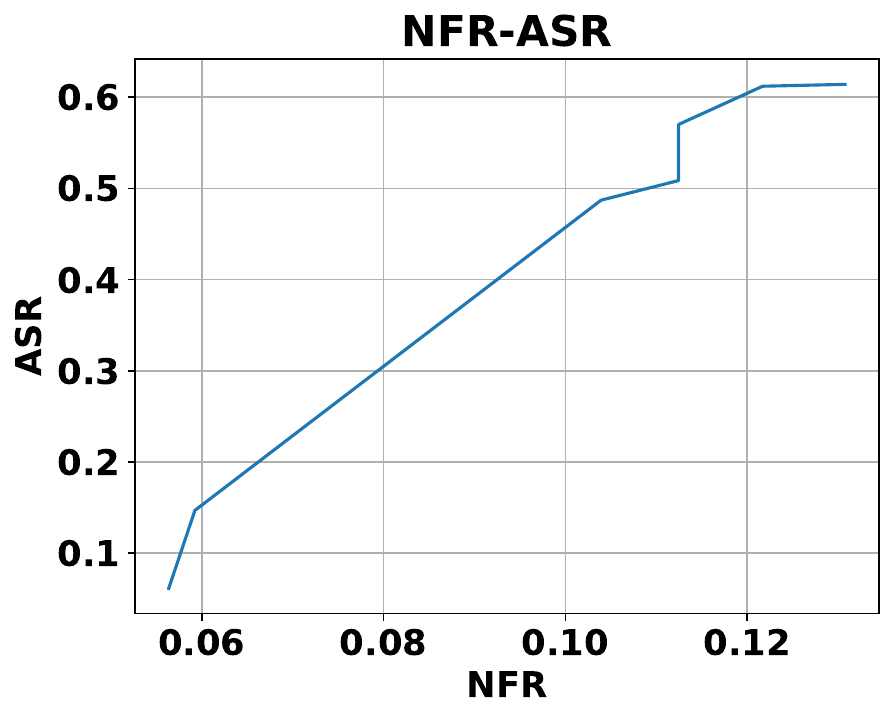}
\\
\Huge{(g)} & \Huge{(h)}& \Huge{(i)}
\end{tabular}}
\caption{With \textit{apex}, \textit{resolve}, and \textit{similar} edge perturbation and \textit{sampled adversary subgraphs}; attack summary when 100 domains are attacked, ASR, NFR, and ROC when less than 100 domains are attacked, in rows 1, 2, and 3, respectively.}
\label{edge_res} 
\end{figure}

\par Next, we repeat the previous experiment with \textit{sampled adversary subgraphs}. The results with \textit{apex} edge perturbation are in the top row of Fig. \ref{edge_res}. For this case, the attack summary is in Fig. \ref{edge_res}(a). It shows that, on average, the attack evades the detection of 62.5 (0.67 $\times$ 93.3) adversary domains out of the 93.3\% detectable adversary domains. The attack results in the detection of around 0.26 (0.04 $\times$ 6.7) domains, on average, from the undetectable domains. Thus, the attack is successful in terms of ASR and NFR. Compared to feature perturbation, this attack has a similar ASR but a much lower NFR. For the cases of attacking less than 100 domains, the ASR and NFR are shown in Fig \ref{edge_res}(b) and the ROC curve is in Fig. \ref{edge_res}(c).

\par The \textit{resolve} edge perturbation results are shown in the second row of Fig \ref{edge_res}. Looking at its attack summary graph in Fig \ref{edge_res}(d), this attack evades the detection of 55.9 (0.6 $\times$ 93.3) adversary domains, on average. The attack results in the detection of around 0.6 (0.09 $\times$ 6.7) domains, on average, from the undetectable domains. Similar to the case with created domains, the \textit{resolve} edge attack is relatively poorer than the \textit{apex} edge attack. Nonetheless, it is not that poor compared to the case of created domains. This is because the adversary has access to more IP addresses since its domains are obtained by \textit{sampled adversary subgraphs} which may \textit{resolve} to more diverse IP addresses. Finally, the \textit{similar} edge perturbation results are shown in the last row of Fig \ref{edge_res}. Considering the attack summary in Fig \ref{edge_res}(g), this attack evades 57.8 (0.62 $\times$ 29.6) domains, on average, while causing the detection of 0.87 (0.13 $\times$ 6.7) domains, on average. The performance here is slightly better than that of the \textit{resolve} edge perturbation and is less competitive than that of the \textit{apex} edge perturbation. For the cases of attacking less than 100 domains, the ASR and NFR are shown in Fig. \ref{edge_res}(h) and the ROC curve is in Fig. \ref{edge_res}(i).

\begin{figure}[htb]
\centering
 \resizebox{0.999\columnwidth}{!}{
\begin{tabular}{ccc}
\includegraphics[width=12cm]{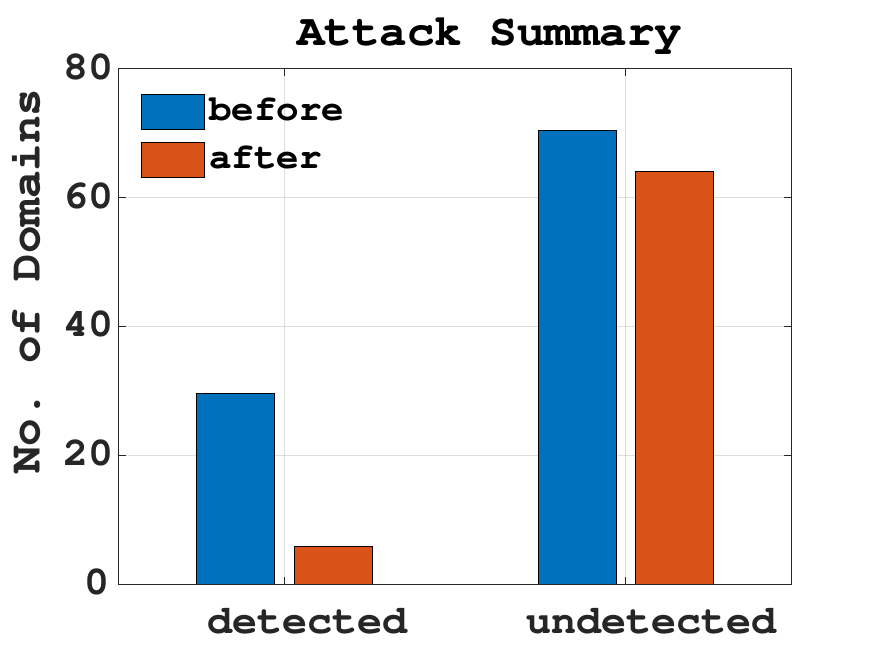}&
 \includegraphics[width=12cm]{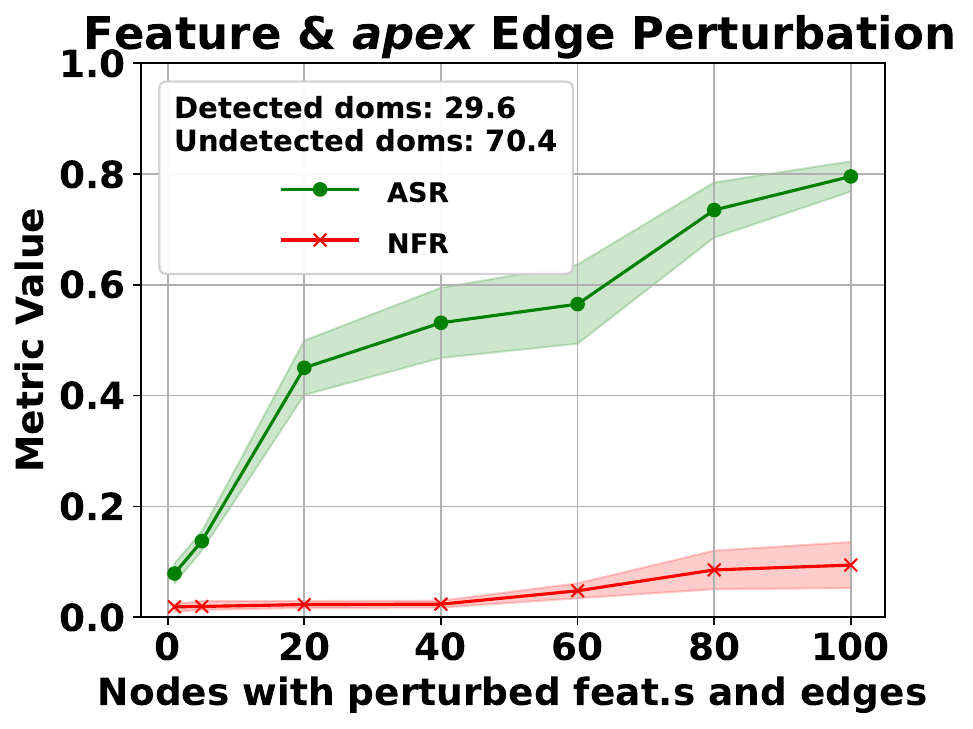}&
\includegraphics[width=12cm]{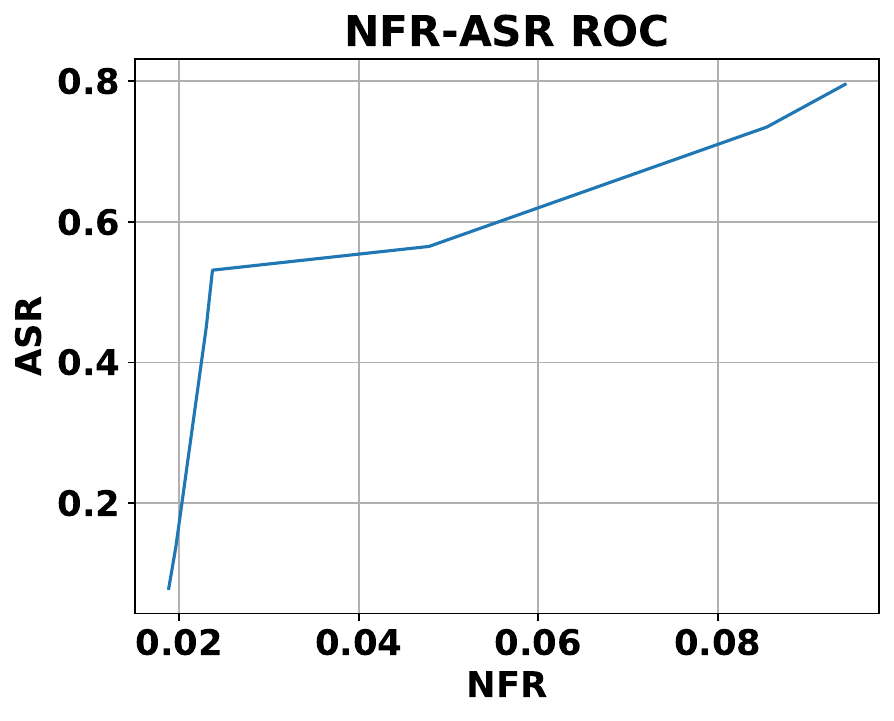}
\\
\Huge{(a)} & \Huge{(b)} & \Huge{(c)}
\\
\includegraphics[width=12cm]{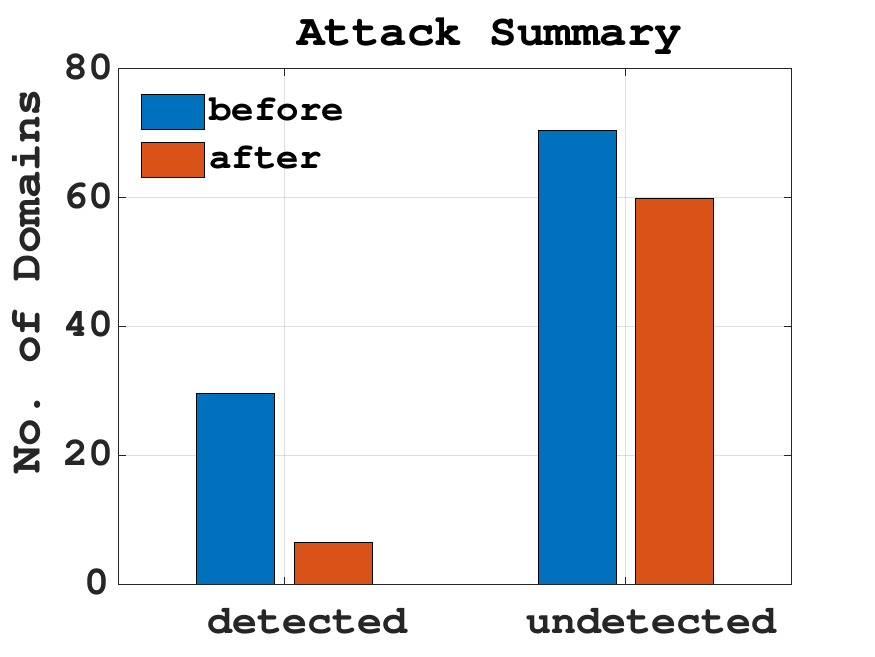}&
 \includegraphics[width=12cm]{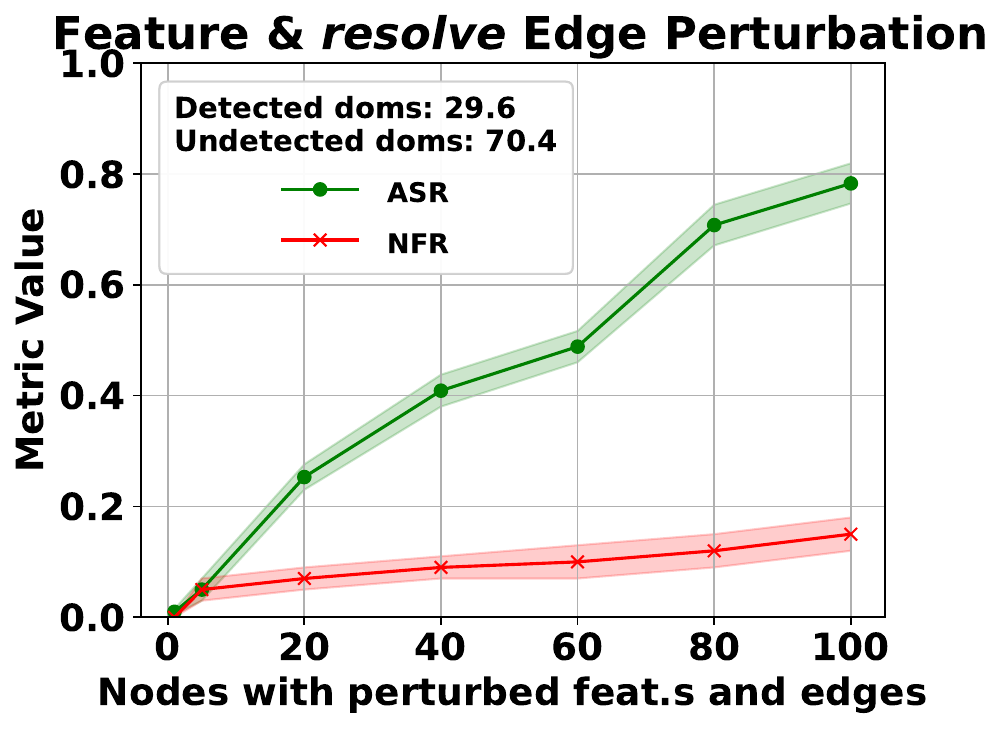}&
\includegraphics[width=12cm]{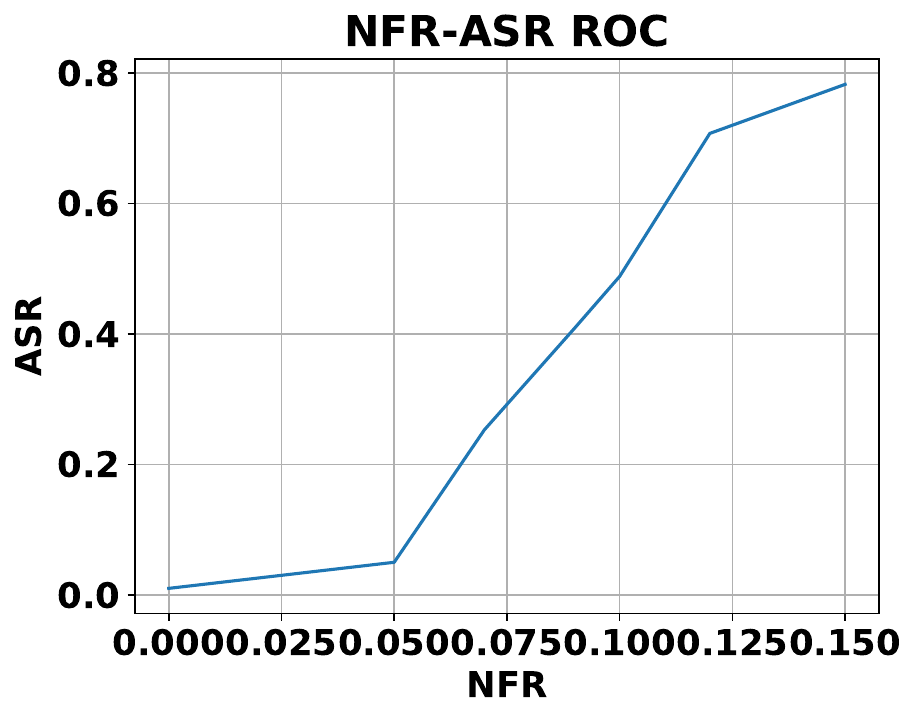}
\\
\Huge{(d)} & \Huge{(e)} & \Huge{(f)} 
\\
\includegraphics[width=12cm]{figs/attack_summay_comb3_mydoms.pdf}&
\includegraphics[width=12cm]{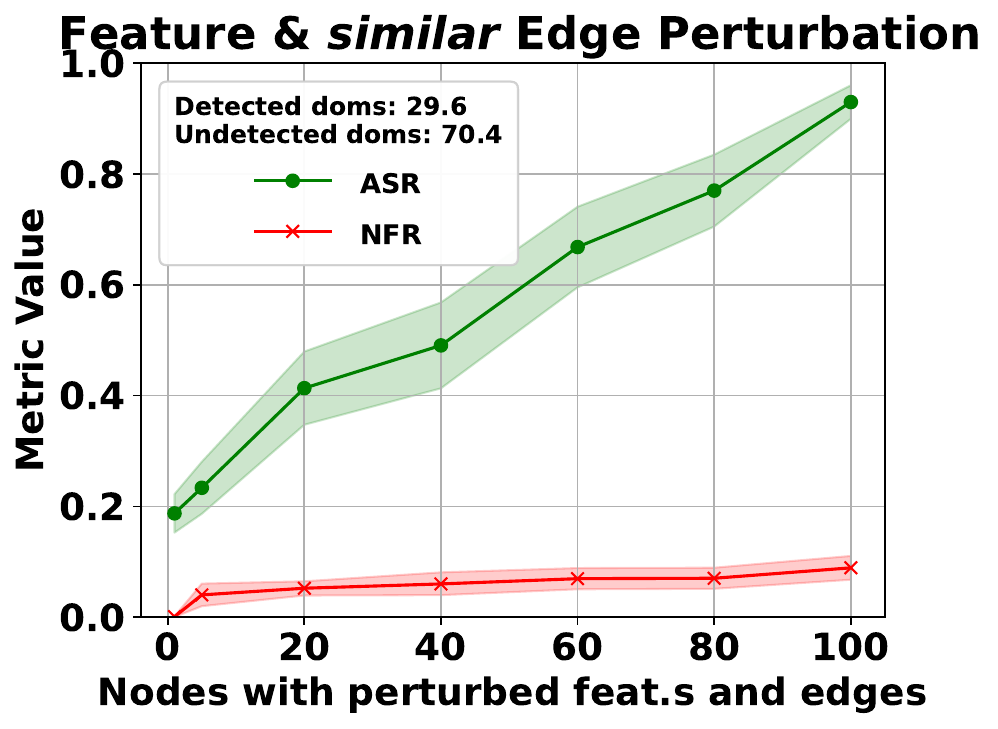}&
\includegraphics[width=12cm]{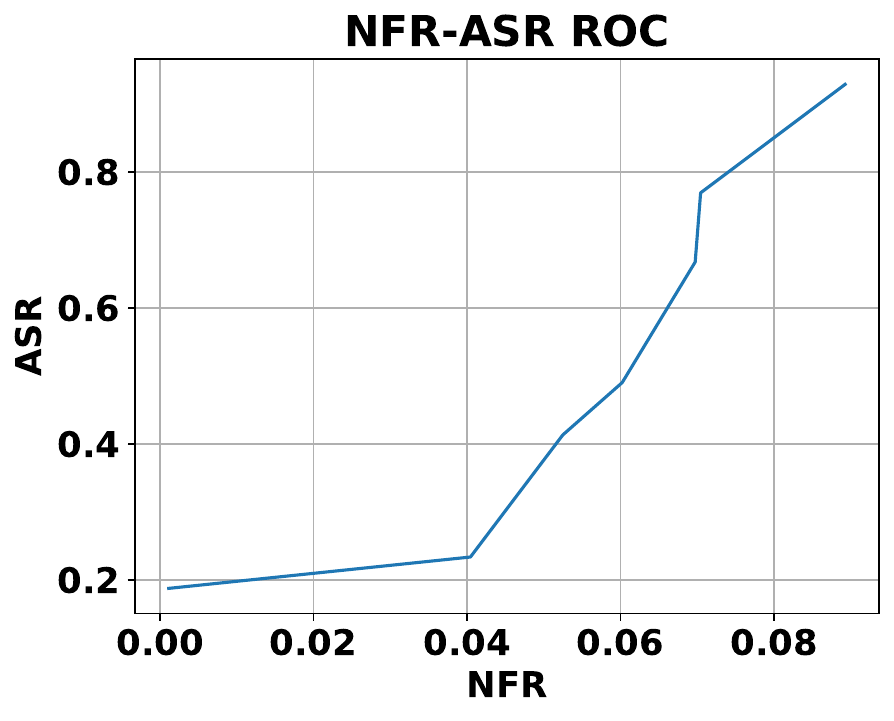}
\\
\Huge{(g)} & \Huge{(h)} & \Huge{(i)}
\end{tabular}}
\caption{With feature-\textit{apex}, feature-\textit{resolve}, and feature-\textit{similar} edge perturbation and \textit{created adversary subgraphs}; attack summary when 100 domains are attacked, ASR, NFR, and ROC when less than 100 domains are attacked, in rows 1, 2, and 3, respectively.}
\label{feature_edges_res_mydoms} 
\end{figure}

\subsubsection{Joint perturbation}
\par In this experiment, we combine feature and edge perturbations (Algorithms \ref{Algorithm1} and \ref{Algorithm2}). First, with created domains, the results with feature-\textit{apex}, feature-\textit{resolve}, and feature-\textit{similar} perturbations are shown in the three rows of Fig. \ref{feature_edges_res_mydoms}, respectively. According to the attack summary in Fig. \ref{feature_edges_res_mydoms}(a), the feature-\textit{apex} attack evades the detection of 23.67 (0.8 $\times$ 29.6) domains, on average, and causes the detection of 6.3 (0.09 $\times$ 70.4) domains, and is thus successful. Let us next consider the attack summary plot of the feature-\textit{resolve} edge in Fig. \ref{feature_edges_res_mydoms}. This attack evades 23.1 (0.78 $\times$ 29.6) domains, on average, and results in the detection of 10.6 (0.15 $\times$ 70.4) domains, on average. Thus, it is less successful compared to the previous attack. However, this attack is stronger than the case of feature and \textit{resolve} edge attacks alone. Finally, the attack summary of feature-\textit{similar} perturbation is in Fig. \ref{feature_edges_res_mydoms}(g). This attack evades the detection of 27.2 (0.92 $\times$ 29.6) domains, on average, and only causes the detection of 6.3 (0.09 $\times$ 70.4) undetectable domains, on average. This attack is the strongest among the combined attacks and is more successful than individually attacking the features or \textit{similar} edges.

\par Finally, we repeat the previous experiment with \textit{sampled adversary subgraphs}. The results for feature-\textit{apex}, feature-\textit{resolve}, and feature-\textit{similar} perturbation attacks are shown in the three rows of Fig. \ref{feat_edge_res}, respectively. The feature-\textit{apex} edge causes the evasion of 76.5 (0.82 $\times$ 93.3) domains on average while causing the detection of 0.53 (0.08 $\times$ 6.7) domains on average. The results of the other two scenarios are similar. In general, the performance with these perturbations is better than individually perturbing the features or edges. These results show the advantage of combining edge and feature perturbations. 

\par The above experiments show that MintA is successful as a framework for adversarial attacks in MDD. This is validated by its ability to evade the detection of high numbers of adversary domains while leading to the detection of fewer undetectable adversary domains. 

\begin{figure}[!tb]
\centering
 \resizebox{0.999\columnwidth}{!}{
\begin{tabular}{ccc}
\includegraphics[width=12cm]{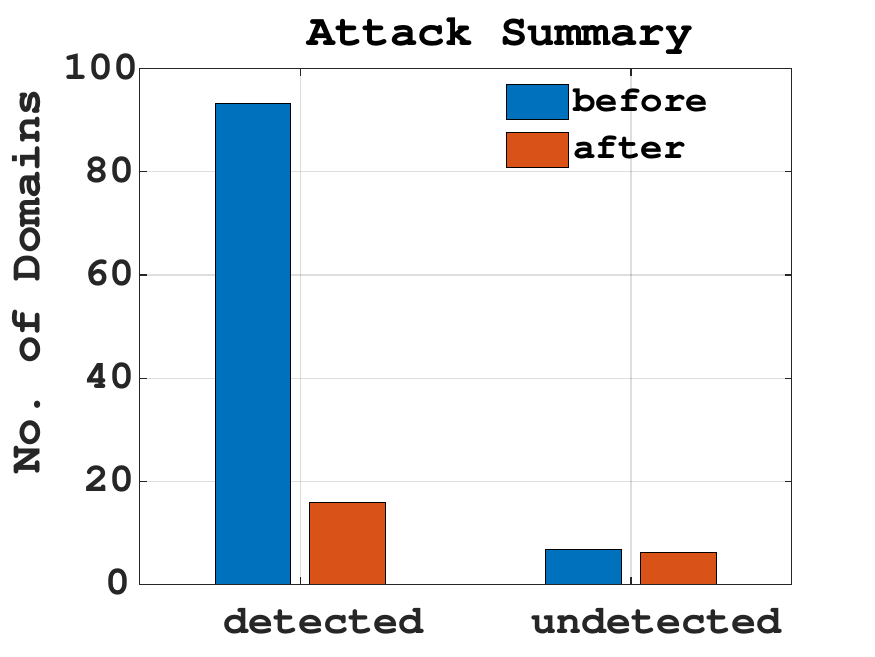}&
 \includegraphics[width=12cm]{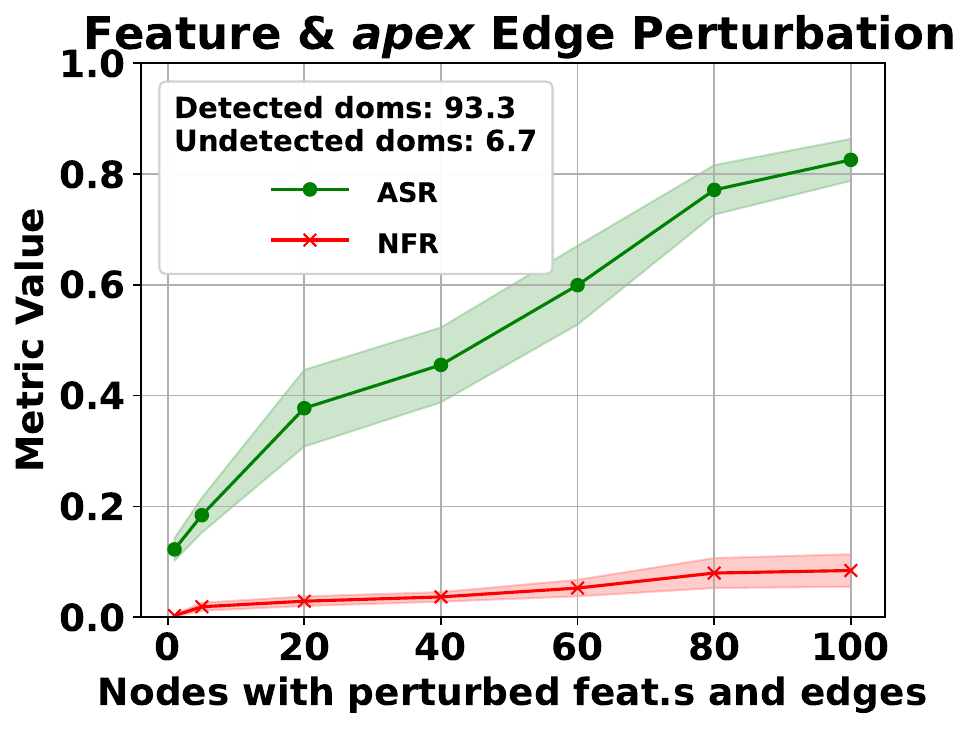}&
\includegraphics[width=12cm]{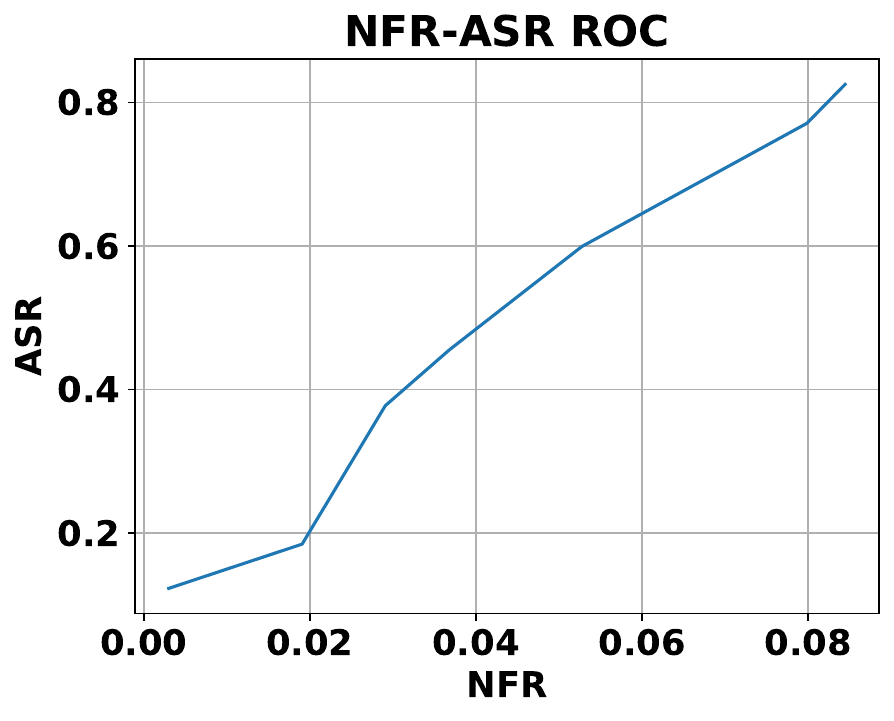}
\\
\Huge{(a)} & \Huge{(b)} & \Huge{(c)}
\\
\includegraphics[width=12cm]{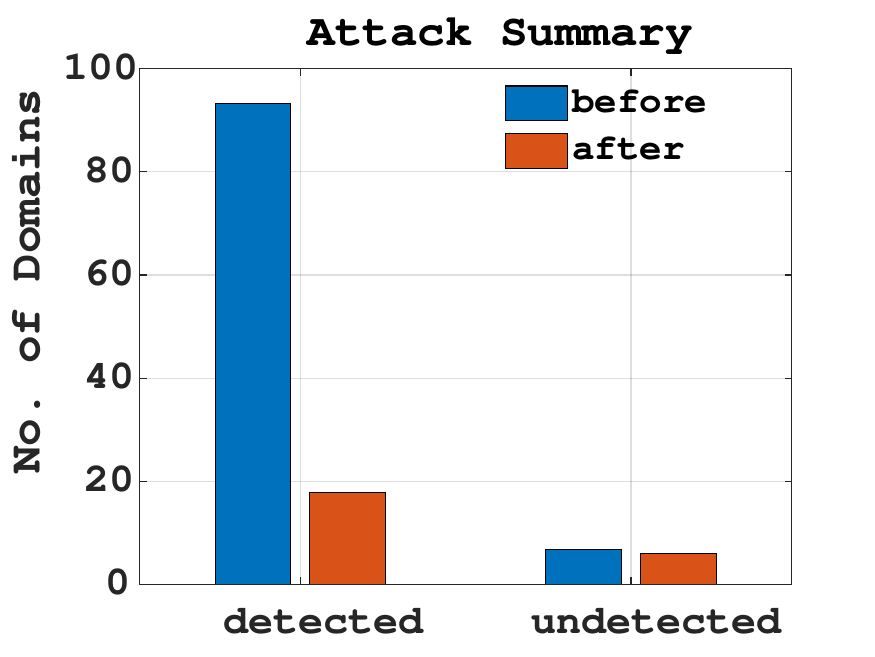}&
 \includegraphics[width=12cm]{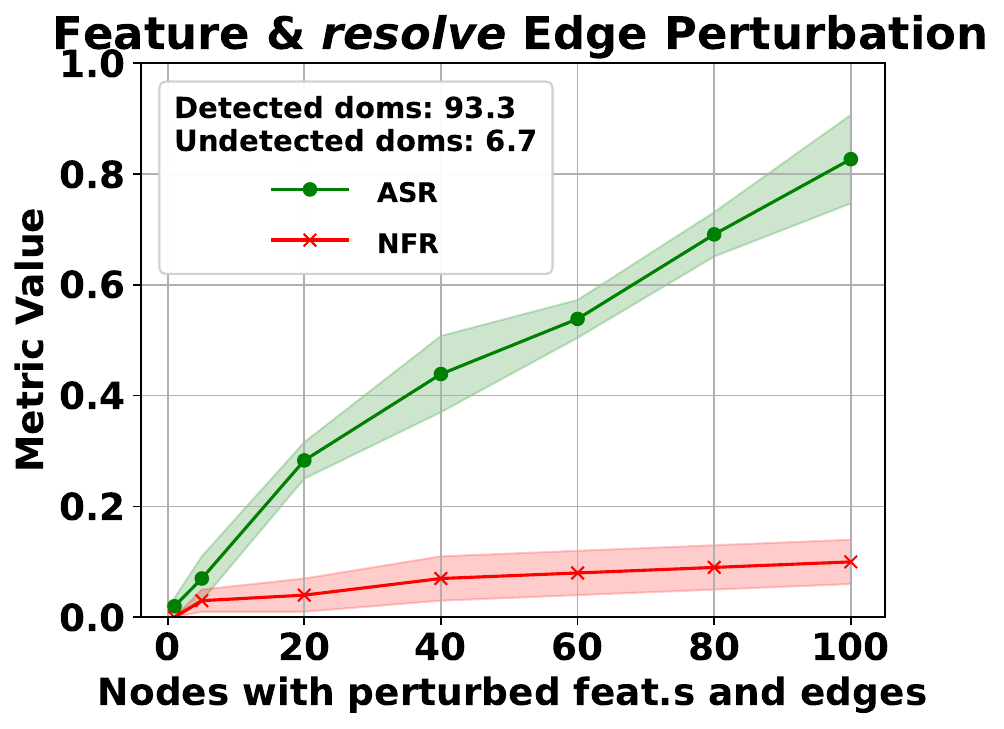}&
\includegraphics[width=12cm]{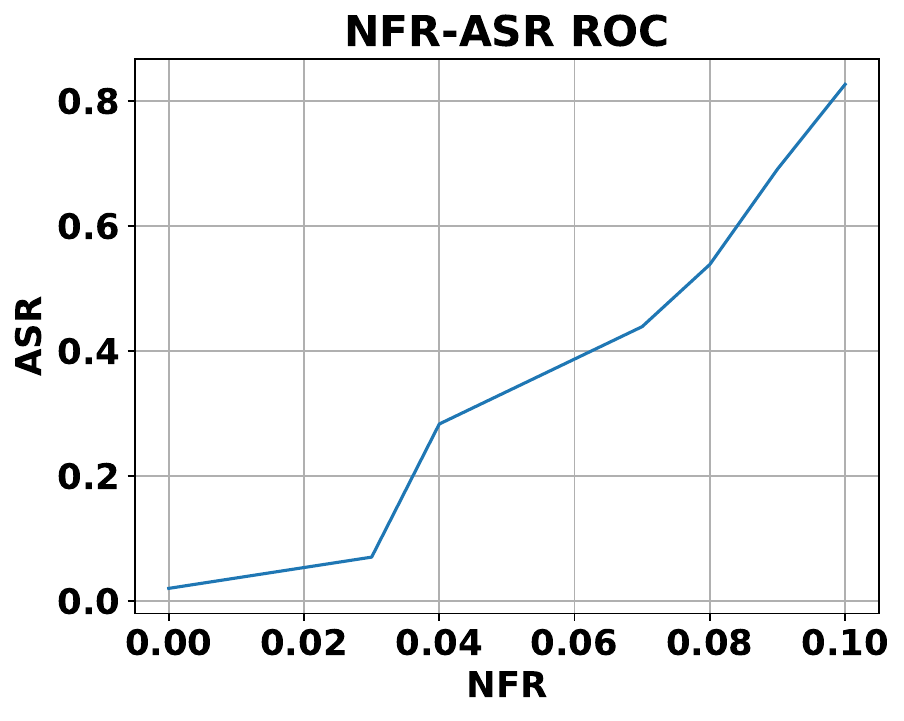}
\\
\Huge{(d)} & \Huge{(e)} & \Huge{(f)} 
\\
\includegraphics[width=12cm]{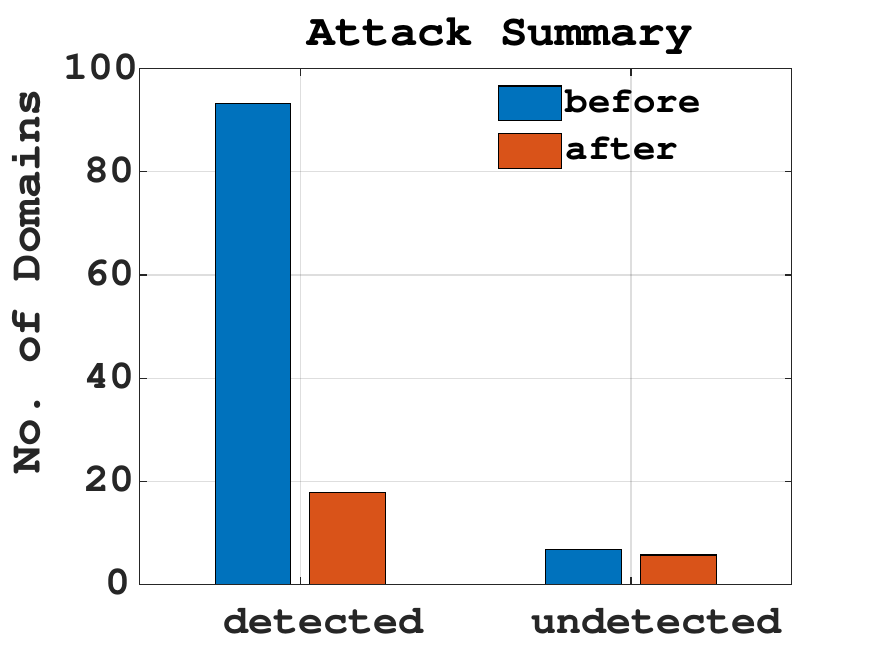}&
\includegraphics[width=12cm]{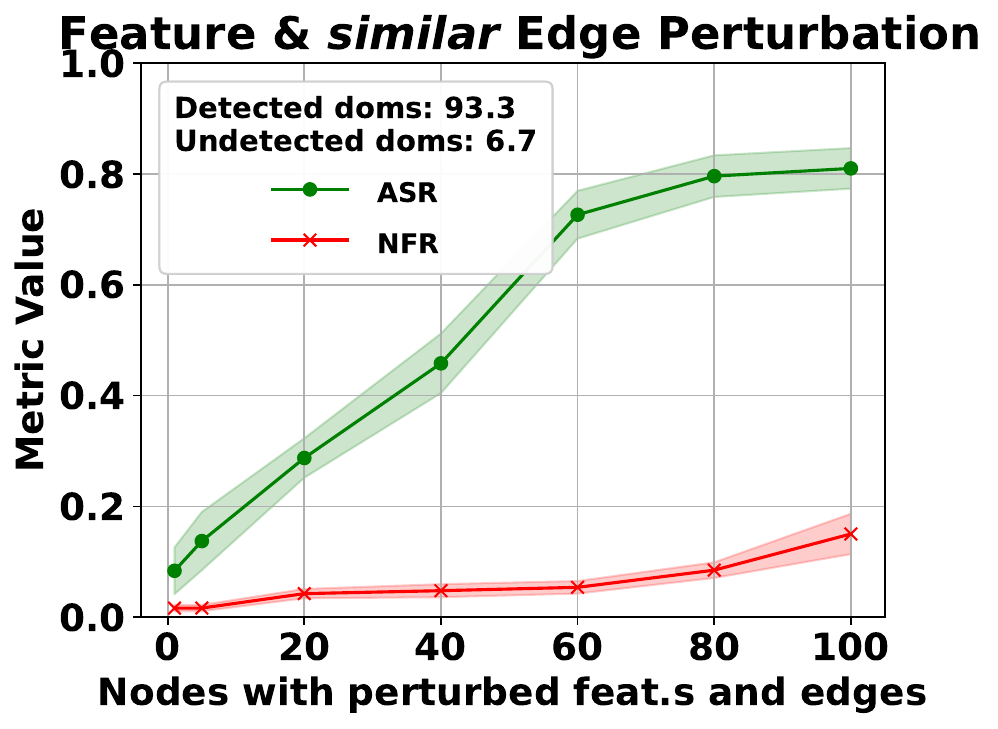}&
\includegraphics[width=12cm]{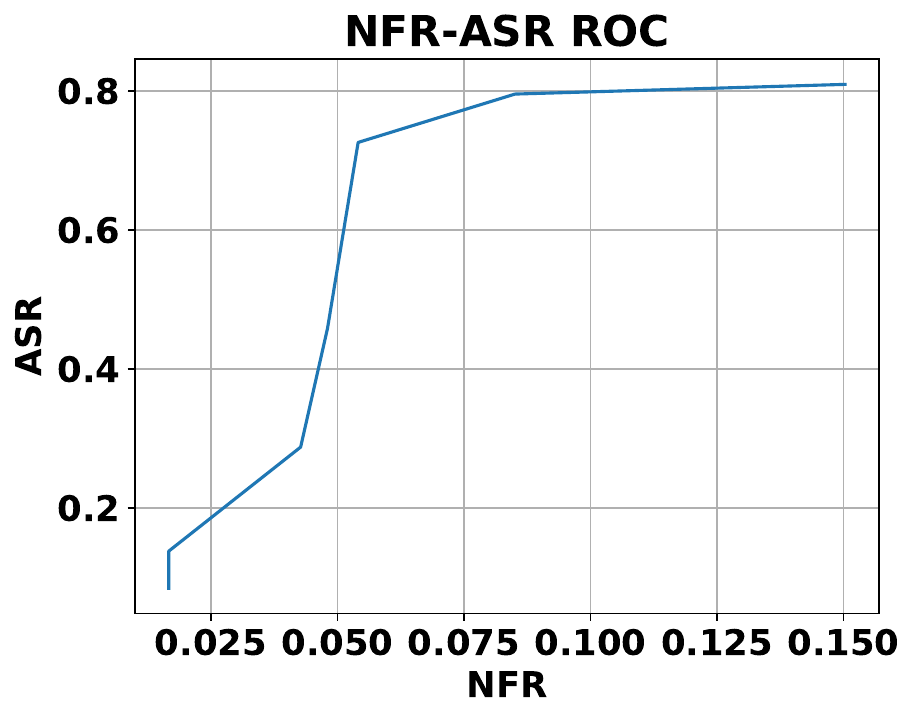}
\\
\Huge{(g)} & \Huge{(h)} &\Huge{(i)}
\end{tabular}}
\caption{With feature-\textit{apex}, feature-\textit{resolve}, and feature-\textit{similar} edge perturbation and \textit{sampled adversary subgraphs}; attack summary when 100 domains are attacked, ASR, NFR, and ROC when less than 100 domains are attacked, in rows 1, 2, and 3, respectively.}
\label{feat_edge_res} 
\end{figure}

\subsection{The impact of MintA on non-adversary nodes}
\par An adversary's subgraph resides in a larger DMG formed by the MDD entity. To address Q2: ``What is MintA's impact on non-adversary nodes?'', we present the following experiment. We consider feature perturbations for the adversary's nodes. For each number of adversary nodes with perturbed features, we monitor the impact on non-adversary nodes residing in the same DMG being direct neighbors to adversary nodes. 
Fig. \ref{feat_dos} presents the results of this experiment. As seen in the figure, both the ASR and NFR impacts of adversary node perturbation on non-adversary nodes increase when attacking more adversary nodes. Still, the attack summary plot shows that the attack results in evading the detection of 29 domains that were detectable before the attack (11\%$\times$266.2=29.38), while it causes around 7 benign domains (10\%$\times$66.7=6.67) to be detected as malicious. The results demonstrate that MintA's impact on non-adversary neighboring nodes is relatively moderate and inconspicuous. In other words, the effect is too subtle to raise suspicions of adversarial activity or compromise its stealthiness.

\begin{figure}[!t]
\centering
 \resizebox{0.999\columnwidth}{!}{
\begin{tabular}{cc}
\includegraphics[width=12cm]{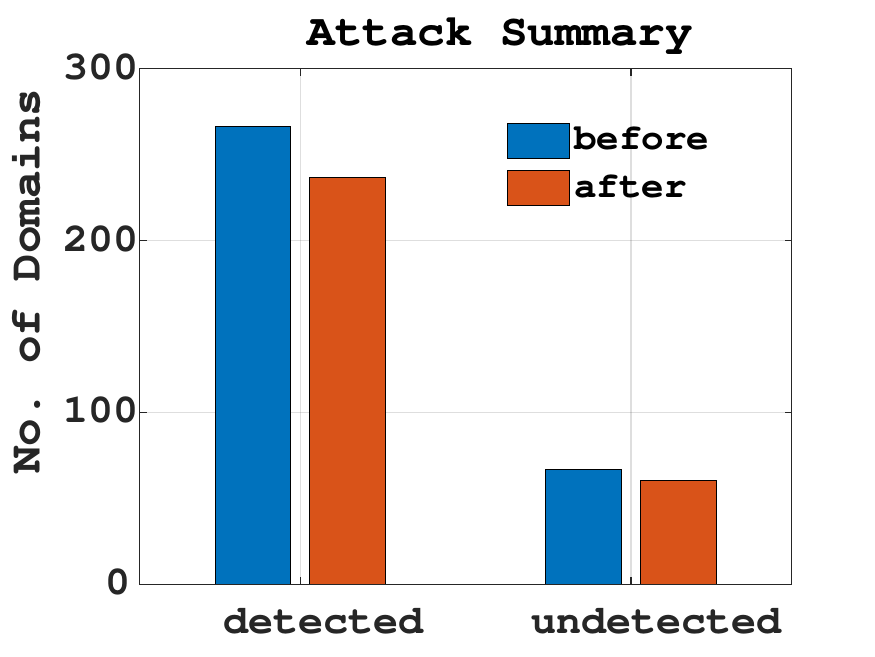}&
 \includegraphics[width=12cm]{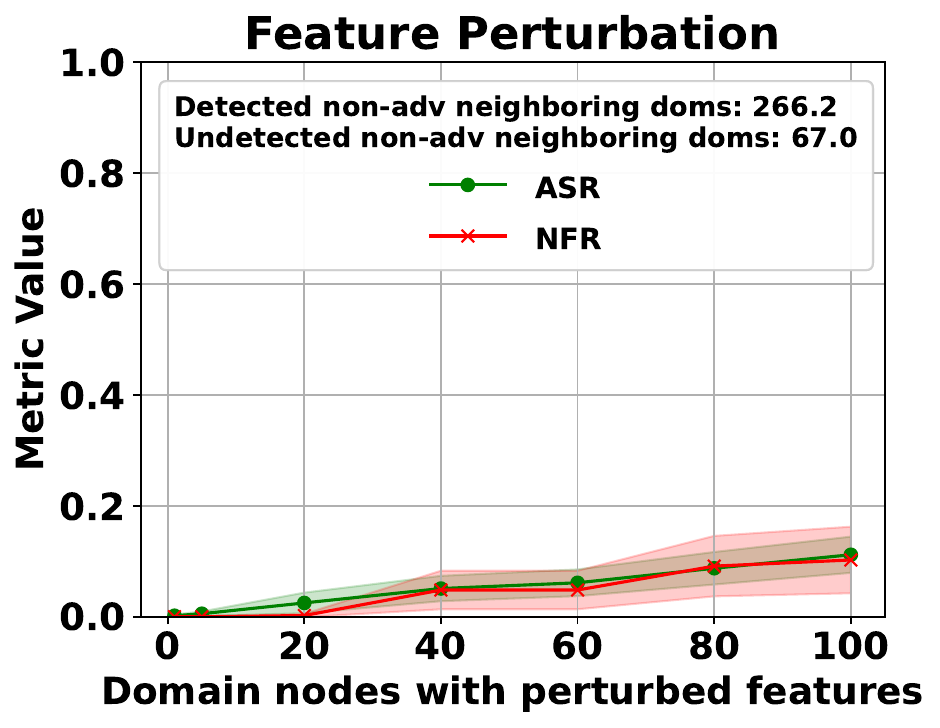}
 \\
\Huge{(a)} & \Huge{(b)}
 \\
\multicolumn{2}{c}{
\includegraphics[width=12cm]{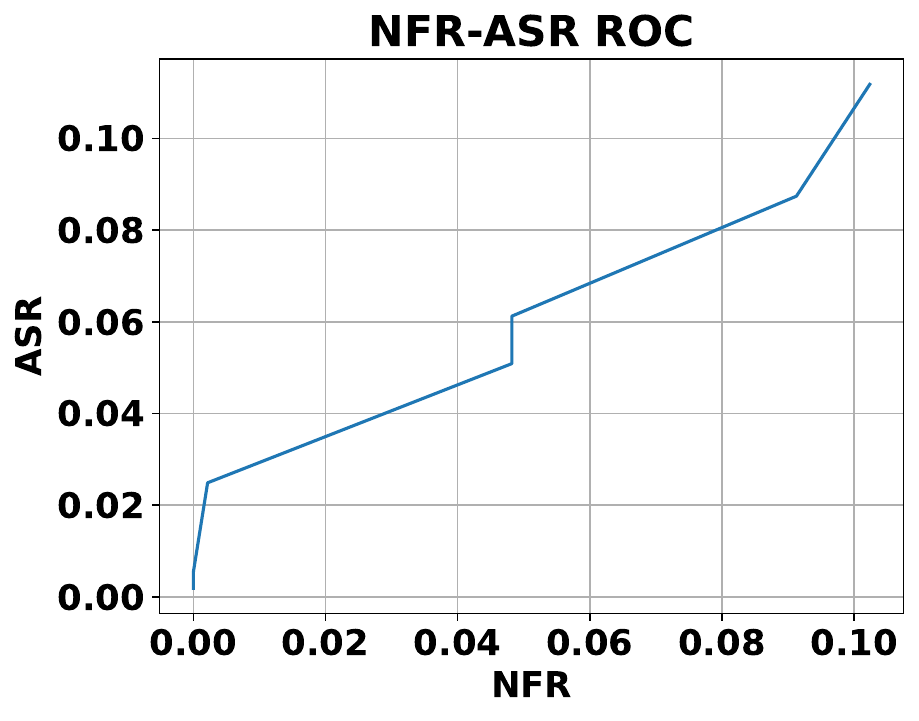}}
\\
\multicolumn{2}{c}{
 \Huge{(c)}}
\end{tabular}}
\caption{The effect of feature perturbations on non-adversary (direct) neighboring domains;
attack summary in (a), and 
ASR, NFR, and ROC in (b) and (c), respectively.}
\label{feat_dos} 
\end{figure}

\subsection{The performance with outlier detection}
\par Recent literature considers the use of simple outlier detection as a basic defense mechanism to detect adversarial examples \cite{paudice2018detection}. It is reasonable to assume that an MDD entity may apply outlier detection as a preliminary defense measure where domains identified as outliers are considered malicious, even before using the MDD model. Hence, it is important to quantify whether MintA causes the attacked nodes to appear as outliers. For this purpose, we address Q3: Can the domains perturbed by MintA be detected as outliers? through the following two experiments. In the first experiment, we sample a total of 4k nodes and randomly sample 100 of these to be the adversary nodes. Next, we perturb the features of all adversary nodes using Algorithm \ref{Algorithm1}. Then, we consider the features of nodes in the whole set as a sample set and apply outlier detection to this set. We use isolation forest \cite{liu2012isolation} as an outlier detection method. We adjust the parameters of outlier detection such that the number of outliers is around 100 nodes. To this end, we quantify how many of the detected outliers are data points (features of nodes) of the adversary nodes. In Fig. \ref{outlier_det}(a), we plot the detection histograms of the overall sample (left) and the adversary nodes (right). It is seen that only 2 of the 100 adversary nodes are detected from the overall 100 outliers in the sample set of data points. This is close to the overall percentage of outliers in the whole test set which is 2.5\% ($=$100/4k). Thus, an outlier detection method does not distinguish the nodes perturbed by MintA from other nodes.

\begin{figure}[htb]
\centering
\resizebox{0.999\columnwidth}{!}{
\begin{tabular}{cc}
 \includegraphics[width=12cm]{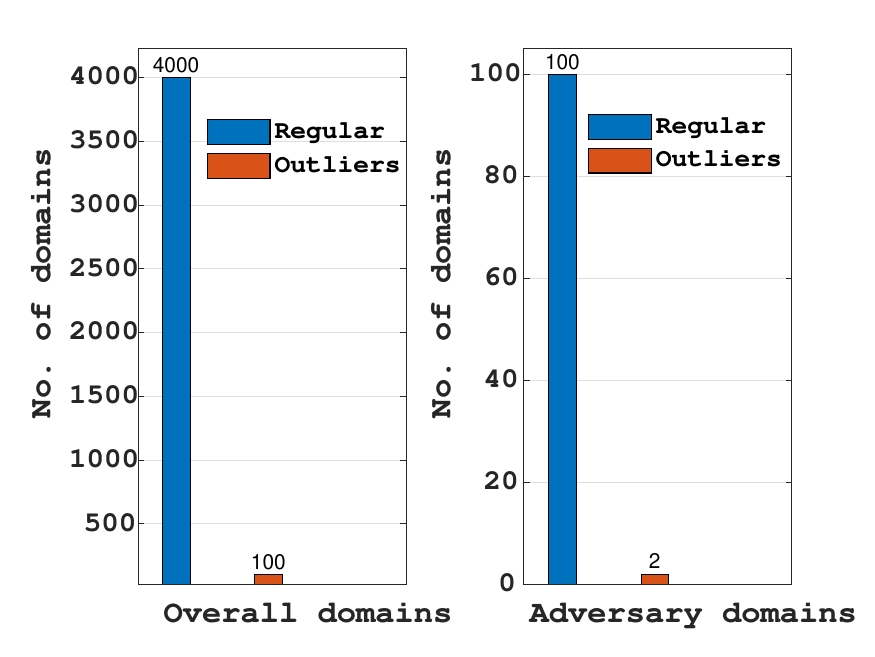}&
\includegraphics[width=12cm]{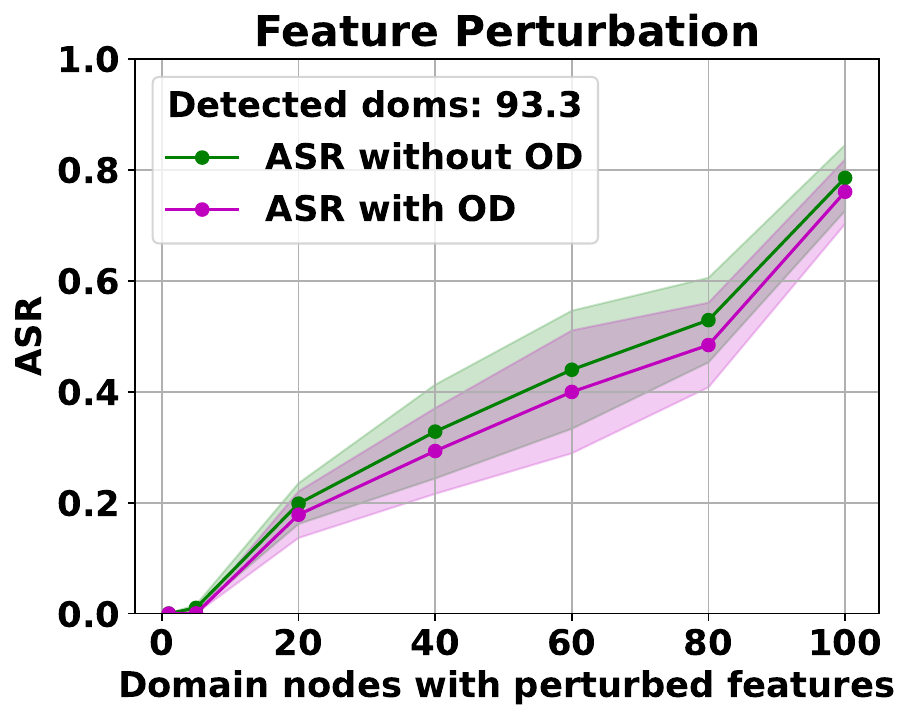}
\\
\Huge{(a)} & \Huge{(b)} 
\end{tabular}}
\caption{(a) Histogram of outliers detected when setting the number of outliers to be 100 from a total set of 4k samples. (b) The ASR of MintA with and without outlier detection.}
\label{outlier_det} 
\end{figure}

\par In the second experiment, We compare the performance of MintA against plain MDD to the case where an MDD entity employs outlier detection and identifies outlier nodes as malicious automatically before the use of the MDD model. Therefore, for this latter scenario, we subtract the ratio of outliers detected in the adversary nodes from the calculated ASR. We also adjust the outlier detection algorithm to produce about 100 outliers. Results are shown in Fig. \ref{outlier_det}(b). In this figure, it is evident that the MintA attack is only marginally affected by outlier detection.

\subsection{Performance with graph purification defense}

\par To examine the performance of MintA when the MDD entity applies graph purification defense, we raise Q4: Can the domains perturbed by MintA bypass
graph purification-based defense? by considering the case of applying the GCN-Jaccard method \cite{wu2019adversarial}. This defense method is based on dropping Jaccard-dissimilar edges in a given graph suspecting their maliciousness. Hence, any domain node in the DMG dropped by this method is considered malicious. Fig. \ref{defense_perf} shows the impact of this graph purification on feature and \textit{apex} edge perturbations in parts (a) and (b), respectively. We only show \textit{apex} perturbation since the \textit{resolve} and \textit{similar} edge perturbations perform similarly. The attack generally bypasses this defense in both scenarios. However, there is more degradation in the edge perturbation case compared to feature perturbation. This is reasonable since graph purification naturally affects the topology of the graph. Thus, it is more harmful to edge perturbations than to feature perturbations. These results validate that MintA is resilient to graph purification-based defenses. 

\begin{figure}[!htb]
\centering
\resizebox{0.999\columnwidth}{!}{
\begin{tabular}{cc}
\includegraphics[width=6cm]{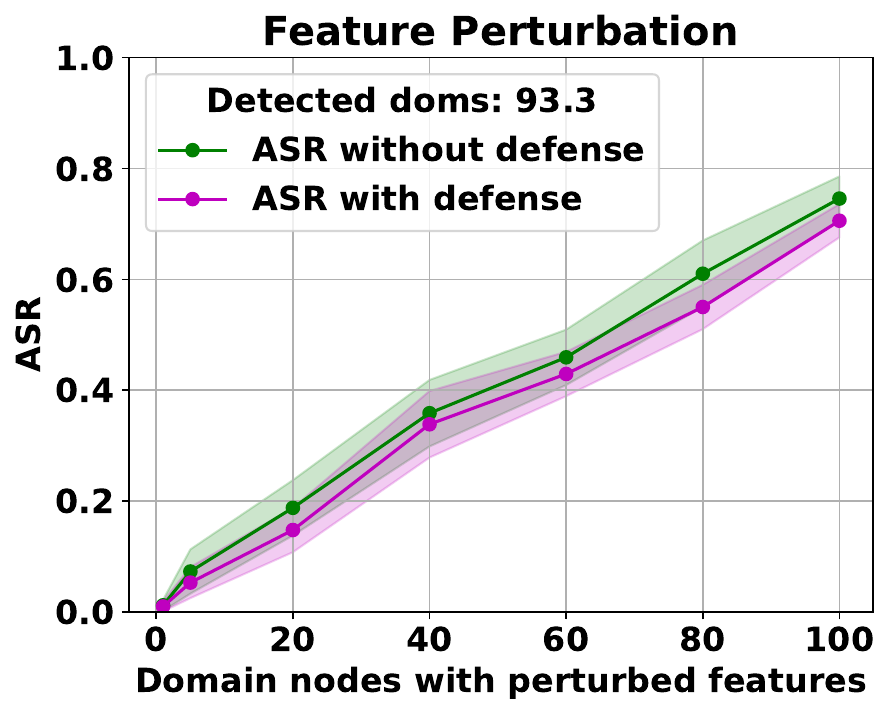}&
\includegraphics[width=6cm]{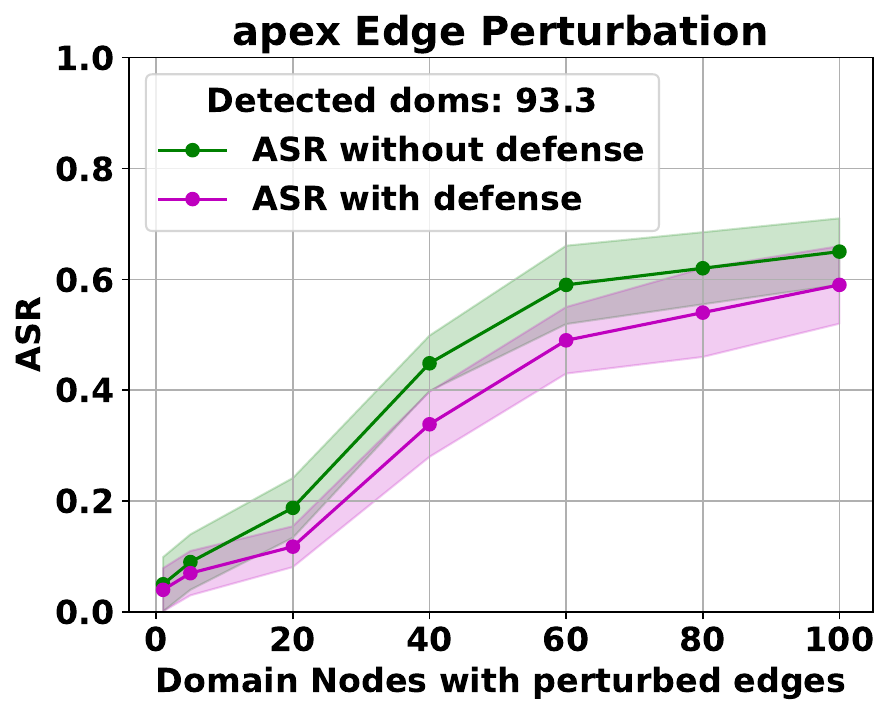}
\\
\Large{(a)}&
\Large{(b)}
\end{tabular}}
\caption{The impact of graph purification defense by the MDD entity on MintA's ASR with feature perturbation in (a) and \textit{apex} edge perturbation in (b).}
\label{defense_perf} 
\end{figure}

\subsection{Comparison with targeted adversarial attacks}
\par Here, we address Q5: What is the performance of MintA compared to a targeted adversarial attack? MintA is the only adversarial attack capable of handling multiple connected nodes, which is precisely its intended purpose. Therefore, we consider the closest baseline for comparison with MintA to be the repeated application of single-instance attacks. We have already presented an experiment on the ineffectiveness of this approach when considering the IG-ADV attack by Wu et al. \cite{wu2019adversarial} in Section \ref{Section4}.2. For completeness, we compare MintA to two examples of single-instance attack methods. It is noted that the ASR used in the experiment in Section \ref{Section4}.2 is the ratio of domains evaded from the attacked domains, whereas, in this section, we use the same ASR definition used in the other experiments, i.e., the ASR is the ratio of evaded domains to the total number of adversary's nodes (100 in our case), for consistency.

\begin{figure}[t]
\centering
\resizebox{0.999\columnwidth}{!}{
\begin{tabular}{cc}
\includegraphics[width=6cm]{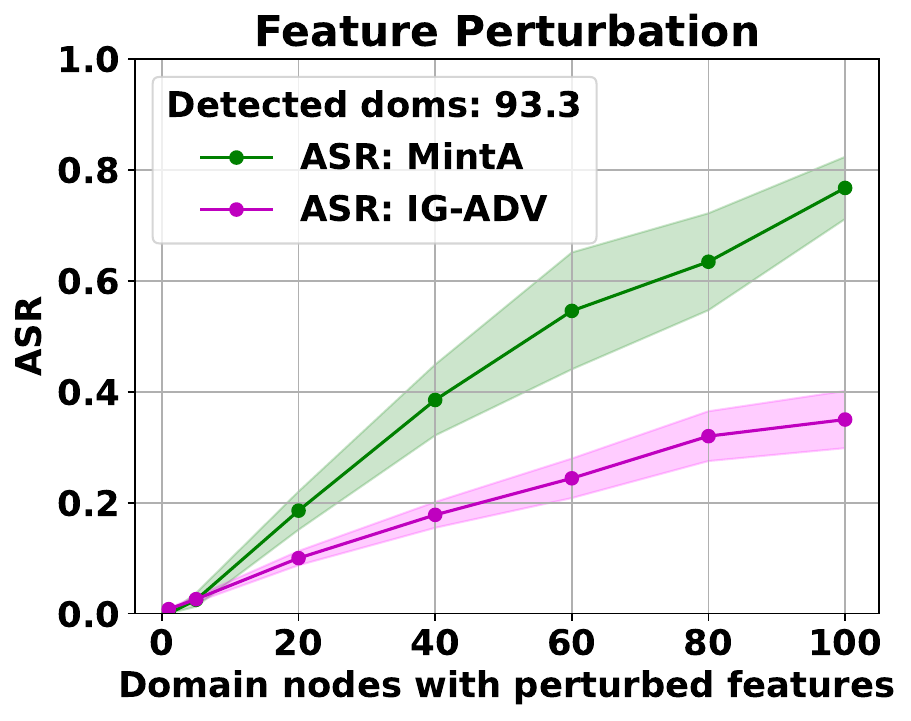}&
\includegraphics[width=6cm]{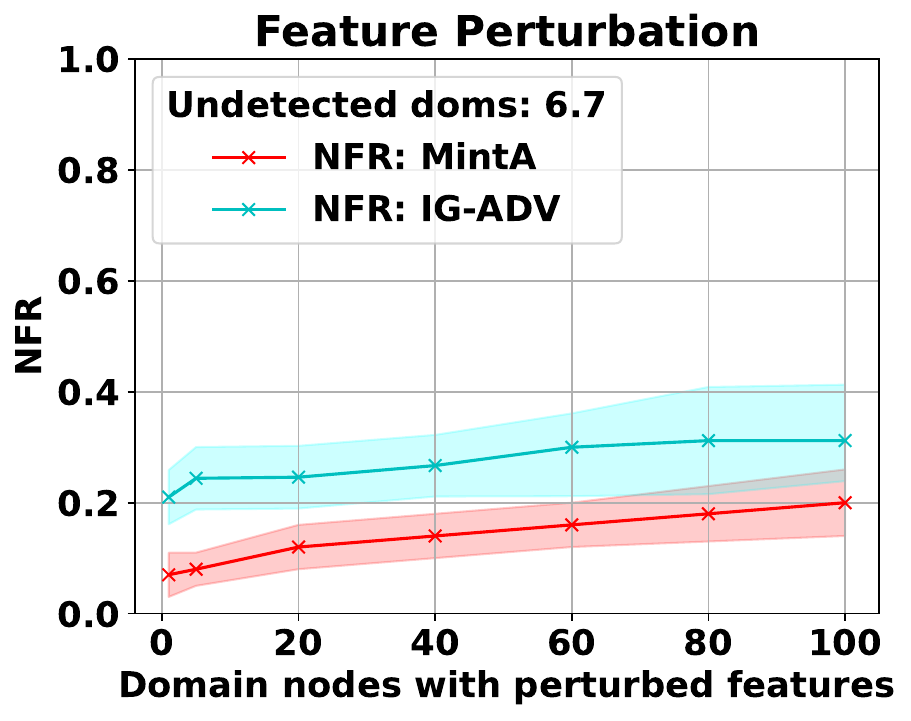}
\\
\Large{(a)}&
\Large{(b)}
\end{tabular}}
\caption{ASR and NFR comparison of MintA with sequentially applying the IG-ADV attack \cite{wu2019adversarial}.}
\label{sota_comp} 
\end{figure}

\par First, we present an experiment on feature perturbation. Fig. \ref{sota_comp} compares the ASR and NFR of MintA and IG-ADV attack \cite{wu2019adversarial} where IG-ADV is applied individually on each node. IG-ADV node attack starts to have increases in ASR with increasing the number of nodes attacked. However, as more nodes are attacked, the increase in ASR diminishes. This is due to the aggregate effect of attacking multiple nodes without coordination as discussed in Section \ref{Section4}.2. As for NFR, the NFR with IG-ADV starts with a relatively high value compared to MintA. Next, both exhibit increases in NFR as more nodes are attacked. Still, the NFR values of MintA are less than those of IG-ADV.

\par In another experiment, we compare MintA's edge perturbation to that of the projected gradient descent topology attack (PGD-ADV) by Xu et al. \cite{xu2019topology}, as a baseline attack method. Fig. \ref{sota_comp2} shows the results. This figure shows that the MintA attack is successful on the adversary nodes, whereas PGD-ADV is less effective. Also, the NFR of PGD-ADV is higher than that of MintA and the difference increases with an increased number of attacked nodes. These results show the advantage of MintA over targeted attacks focusing on a specific node, as MintA alleviates the negative effect of node perturbations on each other's evasiveness, as exhibited with these attacks. 

\begin{figure}[thb]
\centering
\resizebox{0.999\columnwidth}{!}{
\begin{tabular}{cc}
\includegraphics[width=6cm]{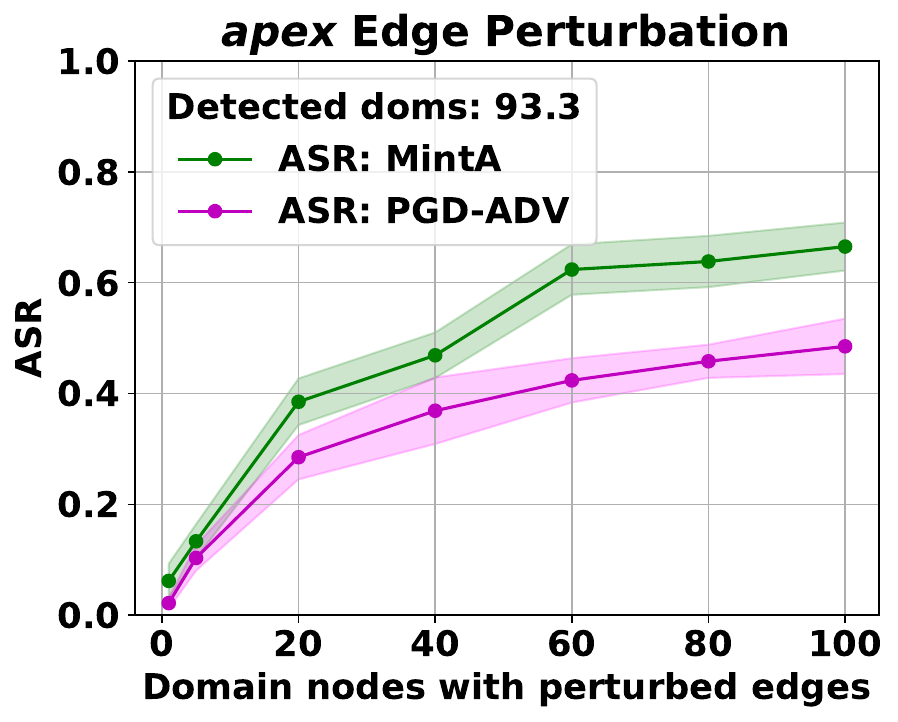}&
\includegraphics[width=6cm]{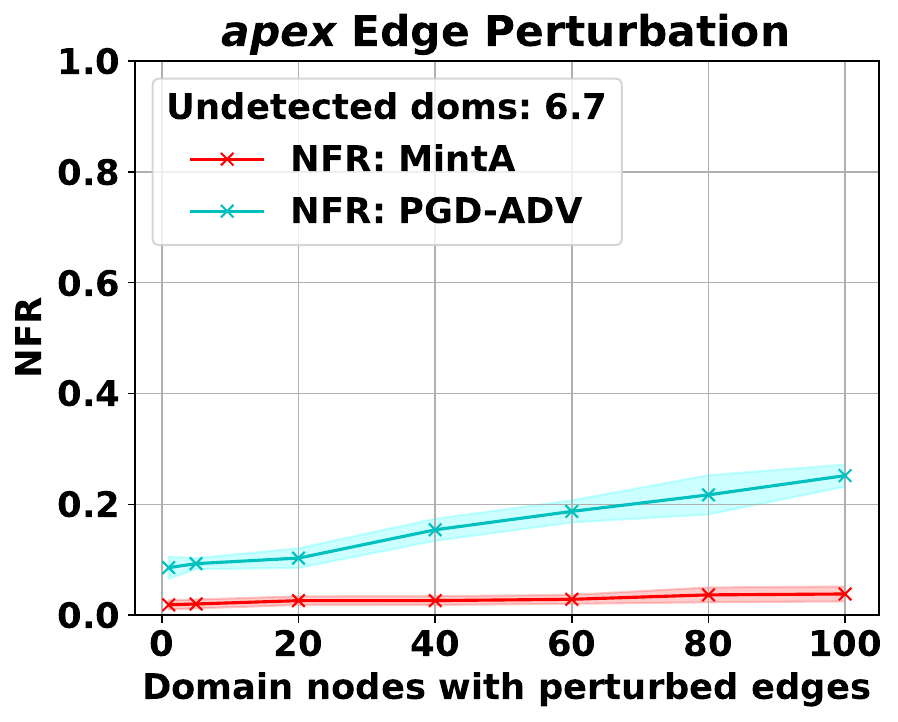}
\\
\Large{(a)}&
\Large{(b)}
\end{tabular}}
\caption{ASR and NFR comparison of MintA and sequentially applying PGD-ADV attack \cite{xu2019topology}.}
\label{sota_comp2} 
\end{figure}

\subsection{Empirical cost analysis}
\par MintA entails primarily the following cost-effective expenses:
\textbf{Preparation cost}: This involves querying the target model to label training data for the surrogate model, and in our case, we utilized 600 queries. Querying is typically facilitated through free API or website services. Surrogate training can be conducted using a standard or cloud computing platform.
\textbf{Perturbation optimization cost}: This cost depends on the number of adversary nodes and can be handled by any computer or cloud server.
\textbf{Perturbations implementation cost}: Implementing perturbations requires human effort in modifying domain name and/or IP resolution, which involves contacting domain registrants and hosting service providers. There may be potential fees depending on the subscription plans. To this end, to address Q6, which pertains to the time and memory costs of MintA, we measure the execution time (in seconds) and memory usage (in megabytes) for the processes of surrogate training and MintA optimization (Algorithm \ref{Algorithm1}). We quantify memory usage using the \textit{psutil} Python package and conduct 30 trials, varying the number of adversary nodes. The empirical costs are summarized in Table \ref{empiricalcosts}. As indicated, the time and memory costs for MintA's preparation and optimization are relatively low and can be afforded by any regular or cloud computing platform. The optimization time cost scales linearly with the number of adversary nodes, while the other costs remain relatively stable. These results demonstrate the scalability of MintA.

\begin{table}[htbp]
\centering
\caption{Average execution time (in second) for preparation ($t_{p}$) and optimization ($t_o$), and the corresponding memory usages (in Mb) ($m_p$), and ($m_o$), respectively.}
\resizebox{0.85\columnwidth}{!}{
\begin{tabular}{|c|c|c|c|c|}
\hline
No. of Adv. nodes & $t_p$ & $t_o$ & $m_p$ & $m_o$ \\ \hline
50 & 2.3 & 4.2 & 62.7 & 58.5\\ \hline
100 & 2.3 & 22.1 & 61.0 & 64.7\\ \hline
200 & 2.2 & 28.4 & 61.1 & 64.7\\ \hline
500 & 2.2 & 94.7 & 61.0 & 64.9\\ \hline
1000 & 2.2 & 284.8 & 61.1 & 64.7\\ \hline
\end{tabular}
}
\label{empiricalcosts}
\end{table}

\par \subsection{Additional experiments}
\par In addition, we explore the effects of the following factors on the attack: the percentage of adversary nodes in a DMG, the complexity of the surrogate model utilized, and white-box model access. Detailed information and results of these experiments can be found in Appendix \ref{appendixB}.

\section{Related Work}
\label{Section6}
\par \textbf{Adversarial attacks on GNNs.} The vulnerability of GNN models to adversarial attacks is well-studied \cite{xu2020adversarial}. This includes their operation in node classification \cite{ma2019attacking}, graph classification \cite{zhang2021projective}, and link prediction \cite{bhardwaj2021poisoning} in various applications. However, the vulnerability of GNNs in MDD is not studied. Only attacks on node classification can be relevant to MDD as a node-label inference operation. Based on the attack scope, attacks are either targeted, i.e., focusing on a given target node, or untargeted (availability attacks) aiming to degrade the performance of the target model. Examples of untargeted attacks include using the gradient \cite{xu2019topology} and reinforcement learning \cite{ma2019attacking} to add/remove edges. More recently, Ma et al. \cite{ma2020towards,ma2022adversarial} consider the problem of selecting the best nodes for crafting an attack in a given graph. These works address the node selection process, not optimizing the perturbation itself. On the other hand, examples of targeted attacks include reinforcement learning-based perturbation \cite{dai2018adversarial}, absolute gradient FGA \cite{chen2018fast}, integrated gradients \cite{wu2019adversarial}, and greedy perturbation selection for maximizing the loss of a surrogate model as in Nettack \cite{zugner2018adversarial}. More recently, the concept of anchor nodes has arisen. Those are important nodes in a graph such that flipping their connectivity to a given target node is sufficient to flip the model’s outcome for this node. Anchor nodes are either found by searching for them in a given graph \cite{zang2020graph}, or by creating and adding them to the graph \cite{dai2022targeted,zang2023guap}. The novelty of MintA lies in handling simultaneous evasion of multiple connected nodes in a coordinated manner. Besides, using MintA to attack other GNN operations that use node features and graph edges can be a future extension. 

\par Other works consider indirect adversarial attacks, where an adversary manipulates the neighbors of a given target node to \textit{remotely} influence it \cite{zugner2018adversarial,chen2021graphattacker,takahashi2019indirect}. Nettack introduces the concept of influencer attacks \cite{zugner2018adversarial}, while \cite{chen2021graphattacker} employs a generative approach to manipulate multiple nodes connected to the target node. Also, \cite{takahashi2019indirect} proposes an influencer attack that operates at a 2-hop distance from the target node, extending Nettack's approach. However, these works do not consider simultaneous attacks on a set of multiple connected nodes and overlook the significance of multi-instance attacks \cite{takahashi2019indirect}.

\section{Discussion}
\label{Section7}
\par \textbf{Impact of the study.} This study establishes a significant security threat against one of the most prominent MDD approaches, which has garnered attention from both academia \cite{sun2019hindom,li2021dydom,li2022heterogeneous,sun2020deepdom,zhang2021attributed} and industry \cite{nabeel2022brand,khalil2020method,nabeel2021method,tirumala2022scoring}. Moreover, it is worth noting that commercialized MDD tools like DomainTools with Maltego \cite{domaintools-maltego} explicitly mention adopting a graph-based approach but do not disclose their exact graph inference technique for proprietary reasons. In our experiments, we specifically focus on the approach presented in \cite{sun2020deepdom} as a representative case. However, it is essential to emphasize that MintA is applicable to any other GNN-based MDD method utilizing DNS logs. We demonstrate that the costs associated with the preparation, optimization, and implementation of MintA are reasonable and manageable under mild conditions. 

\par \textbf{Limitations of the study.} The primary limitations of this study arise from several factors. Firstly, the scarcity of MDD datasets is a significant constraint due to concerns regarding potential security vulnerabilities found in enterprise data available through DNS logs. Additionally, the usage of patented GNN-based approaches within the industry restricted our ability to explore a broader range of comparison options.
Another limitation stems from the absence of applicable adaptive graph purification defenses for hetGNNs, which could have enhanced the robustness of the MDD approach. Furthermore, the absence of actual adversary subgraphs necessitated the adoption of sampled and created adversary modeling approaches. Although these approaches were designed to mimic adversary behaviors, utilizing actual domains associated with a real adversary would have provided a more accurate representation of adversary subgraphs during the experiments.

\par \textbf{Possible countermeasures and future work.} Future research can explore different avenues to enhance the attack and defense methods in the context of state-of-the-art MDD approaches that leverage heterogeneous graphs and hetGNNs. One potential direction is to harness the heterogeneity of the DMGs to further improve the effectiveness of the attack.
Conversely, another important extension is to develop efficient defense mechanisms to counter MintA's vulnerability that has been demonstrated. A promising direction for defense is to integrate and complement the knowledge gained from DNS logs to bolster the robustness of MDD, given that malicious actors can access these logs. It is worth noting that, despite MintA's stealthiness against outlier detection, exploring approaches that leverage additional information from DNS logs could enhance the defense against such attacks.

\section{Conclusions}
\label{Section8}
\par In this research, we investigate the vulnerability of current Graph Neural Network (GNN)-based Malicious Domain Detection (MDD) approaches to evasive adversarial attacks during inference. We demonstrate how adversaries can practically exploit the reliance of GNN-based MDD on DNS logs to carefully modify their domain nodes, effectively altering the resulting graphs constructed by MDD entities.
To optimize these modifications for effective evasion, we propose a novel multi-instance adversarial attack called MintA. This attack is particularly suited for practical adversaries who own multiple interconnected domains, considering cost and stealthiness factors. Unlike existing adversarial attacks, MintA aims to simultaneously evade detection for as many adversary domain nodes as possible.

Our experiments reveal that MintA achieves remarkable success rates, exceeding 80\%, when targeting a state-of-the-art GNN-based MDD algorithm with real-world data. Also, we demonstrate that MintA can bypass defense methods based on outlier detection and graph purification, underscoring its effectiveness in evading detection mechanisms.

\section*{Acknowledgment}
This work is partially supported by grants NSF IIS-2041096 and NSF CNS-1935928.

\bibliography{references_security.bib}
\bibliographystyle{IEEEtran}

\appendices
\section{Analysis and Proofs}
\label{appendixA}
\subsection{Analysis of the need for a coordinated subgraph attack}
\par Let us recall the formulations of the two objective functions $F_1$ and $F_2$ in (\ref{eq2}) and (\ref{eq3}), respectively. We can optimize $\Delta \boldsymbol{X'}_i$ to maximize $F_1$ and $F_2$ using Corollary \ref{corollary}.
\begin{cor}
\label{corollary}
The quantity
$\mathrm{F}_1=\|\boldsymbol{B} \Delta \boldsymbol{X'}_i \boldsymbol{W}\|_2^2$
is maximized by maximizing the sum of inner products $\langle\Delta \boldsymbol{X'}_i, \boldsymbol{W}_k \rangle, ~\forall~k \in \{1, \ldots, K\} $ where $\boldsymbol{W} \in \mathbb{R}^{n\times K}$.
\end{cor}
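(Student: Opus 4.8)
The plan is to exploit the fact that the perturbation $\Delta \boldsymbol{X'}_i$ is localized: under the modeling convention in this section it is supported only on the row corresponding to node $i$, carrying a single feature-perturbation vector I will write as $\boldsymbol{\delta}$. Under this structure the matrix products in $F_1$ collapse to rank-one objects, and the dependence of $F_1$ on $\boldsymbol{\delta}$ decouples cleanly from the graph topology encoded in $\boldsymbol{B}$.

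First I would use the support of $\Delta \boldsymbol{X'}_i$ to rewrite the left factor. Since only row $i$ of $\Delta \boldsymbol{X'}_i$ is nonzero, the product $\boldsymbol{B}\Delta \boldsymbol{X'}_i$ merely reads off the $i$-th column of $\boldsymbol{B}$, giving the outer product
\begin{equation}
\boldsymbol{B}\Delta \boldsymbol{X'}_i = \boldsymbol{b}_i\,\boldsymbol{\delta}^{\top},
\end{equation}
where $\boldsymbol{b}_i$ denotes the $i$-th column of $\boldsymbol{B}$. Multiplying on the right by $\boldsymbol{W}$ preserves the rank-one form, $\boldsymbol{B}\Delta \boldsymbol{X'}_i \boldsymbol{W} = \boldsymbol{b}_i(\boldsymbol{\delta}^{\top}\boldsymbol{W})$, an outer product of the column vector $\boldsymbol{b}_i$ with the row vector $\boldsymbol{\delta}^{\top}\boldsymbol{W}$.

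Next I would take the squared norm. For a rank-one matrix the (Frobenius) norm factors into the product of the two vector norms, so
\begin{equation}
F_1 = \|\boldsymbol{B}\Delta \boldsymbol{X'}_i \boldsymbol{W}\|_2^2 = \|\boldsymbol{b}_i\|_2^2\,\|\boldsymbol{\delta}^{\top}\boldsymbol{W}\|_2^2.
\end{equation}
The factor $\|\boldsymbol{b}_i\|_2^2$ is fixed by the unperturbed topology and does not depend on $\boldsymbol{\delta}$, so maximizing $F_1$ over the perturbation is equivalent to maximizing $\|\boldsymbol{\delta}^{\top}\boldsymbol{W}\|_2^2$. Expanding this quantity coordinate-wise over the columns $\boldsymbol{W}_k$ of $\boldsymbol{W}$ yields $\|\boldsymbol{\delta}^{\top}\boldsymbol{W}\|_2^2 = \sum_{k=1}^{K}\langle \boldsymbol{\delta}, \boldsymbol{W}_k\rangle^2 = \sum_{k=1}^{K}\langle \Delta \boldsymbol{X'}_i, \boldsymbol{W}_k\rangle^2$, which identifies $F_1$ (up to the constant $\|\boldsymbol{b}_i\|_2^2$) with the aggregate of the inner products of the perturbation against the columns of $\boldsymbol{W}$, as claimed.

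The step I expect to require the most care is the first one: the rank-one collapse $\boldsymbol{B}\Delta \boldsymbol{X'}_i = \boldsymbol{b}_i\boldsymbol{\delta}^{\top}$ hinges on the convention that $\Delta \boldsymbol{X'}_i$ perturbs only node $i$'s row, which should be stated explicitly rather than assumed. A secondary subtlety is the reading of the phrase ``sum of inner products'' in the statement: the quantity that genuinely governs $F_1$ is the squared norm $\|\boldsymbol{\delta}^{\top}\boldsymbol{W}\|_2^2$, i.e. the sum of \emph{squared} inner products, and it is precisely this quadratic form --- whose maximizer under a norm budget is the principal eigenvector of $\boldsymbol{W}^{\top}\boldsymbol{W}$ via Theorem~\ref{theorem2} --- that the corollary is meant to furnish for the downstream derivation.
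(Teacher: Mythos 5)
Your proof is correct and takes essentially the same route as the paper's: both exploit the fact that $\Delta \boldsymbol{X'}_i$ occupies a single row, so that $\boldsymbol{B}\Delta \boldsymbol{X'}_i \boldsymbol{W}$ has entries $B_{ji}\langle \Delta \boldsymbol{X'}_i, \boldsymbol{W}_k\rangle$ --- which is exactly your outer product $\boldsymbol{b}_i\,(\boldsymbol{\delta}^{\top}\boldsymbol{W})$ written entrywise. Your rank-one packaging is in fact the tidier version, since the Frobenius factorization $F_1 = \|\boldsymbol{b}_i\|_2^2 \sum_{k}\langle \Delta \boldsymbol{X'}_i, \boldsymbol{W}_k\rangle^2$ makes explicit both the topology-dependent constant that drops out of the optimization and the fact that the governing objective is the sum of \emph{squared} inner products --- a precision that the corollary's wording (and the paper's proof, which concludes with ``sum of inner products'') glosses over.
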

\begin{proof}
\par Le us make use of the fact that $\Delta \boldsymbol{X'}_i$ only changes the feature matrix at the $i$-th row (corresponding to node $i$). Let us now quantify the effect of this change with the help of matrix form representation of the formulations of $F_1$ and $F_2$, as follows. 
\begin{multline}\label{eq14}
\mathrm{F}_1=\|\boldsymbol{B} \Delta \boldsymbol{X'}_i \boldsymbol{W}\|_2^2=
\\
\left\|\left[
\begin{array}{cccc}
B_{11} & \mkern-10mu B_{12} & \mkern-10mu \ldots & \mkern-10mu B_{1n} \\
B_{21} & \mkern-10mu B_{22} & \mkern-10mu \ldots & \mkern-10mu B_{2n} \\
\vdots & \mkern-10mu \ddots & \mkern-10mu \ldots & \mkern-10mu \vdots \\
B_{n1} & \mkern-10mu \ldots & \mkern-10mu \ldots & \mkern-10mu B_{nn}
\end{array}
\right]\left[
\begin{array}{c}
\mspace{-12mu} 0 \mspace{-12mu} \\
\mspace{-12mu} 0 \mspace{-12mu} \\
\mspace{-12mu} \Delta \boldsymbol{X'}_i \mspace{-12mu} \\
\mspace{-12mu} 0 \mspace{-12mu}
\end{array}
\right]\left[
\begin{array}{@{} ccc @{}}
\boldsymbol{W}_1 & \mspace{-12mu} \ldots & \mspace{-12mu} \boldsymbol{W}_n
\end{array}
\right] \right\|_2^2,
\end{multline}
\noindent where $B_{i,j}$ is the element in $\boldsymbol{B}$ at the $i$-th row and $j$-th column, and $\boldsymbol{W}_i$ is the $i$-th column in the model parameter matrix $\boldsymbol{W}$. With a few algebraic simplification steps, we can write.
\begin{align}\label{eq15}
F_1=\left\|\left[\begin{array}{llll}
B_{1 i} \langle \Delta \boldsymbol{X'}_i , \boldsymbol{W}_1 \rangle \ldots B_{1 i} \langle\Delta \boldsymbol{X'}_i, \boldsymbol{W}_n \rangle \\
B_{2 i} \langle\Delta \boldsymbol{X'}_i, \boldsymbol{W}_1 \rangle \ldots B_{2 i} \langle\Delta \boldsymbol{X'}_i, \boldsymbol{W}_n \rangle \\
~~~~~~~~~\vdots~~~~~~~~~~\ddots ~~~~~~~~~\vdots \\
B_{n i} \langle\Delta \boldsymbol{X'}_i, \boldsymbol{W}_1 \rangle \ldots B_{n i} \langle\Delta \boldsymbol{X'}_i, \boldsymbol{W}_n \rangle
\end{array}\right]\right\|_2^2.
\end{align}

\par It is clear now that maximizing $F_1$ requires maximizing the sum of inner products $ \langle\Delta \boldsymbol{X'}_i, \boldsymbol{W}_k \rangle, ~\forall~k$.
\end{proof}

\par Now, the effect of $\Delta \boldsymbol{X'}_i$ on the message at node $j \in \mathcal{N}(i)$ is
\begin{equation}\label{eq16}
\Delta \boldsymbol{X'}_j=\left[B_{j i} \langle\Delta \boldsymbol{X'}_i, \boldsymbol{W}_1 \rangle~\ldots \quad B_{j i} \langle\Delta \boldsymbol{X'}_i, \boldsymbol{W}_n \rangle\right].
\end{equation}
This will be added to the aggregate message at the node. $j\left(\boldsymbol{H'}_j\right)$. So, the loss at node $j$ will be changed as follows.
\begin{equation}\label{eq17}
\Delta \mathcal{L}_j\left(\boldsymbol{A'}, \boldsymbol{X'}+\Delta \boldsymbol{X'}_i ; \boldsymbol{W}, j\right)=\|\boldsymbol{B}\left[\Delta \boldsymbol{X'}_i+\boldsymbol{H'}_j\right] \boldsymbol{W}\|_2^2. 
\end{equation}

\par Following Corollary \ref{corollary} and (\ref{eq17}) for maximizing $\mathcal{L}_j ~\forall~j~\in~\mathcal{N}(i)$, or equivalently, maximizing $F_2$, this requires maximizing the inner products $ \langle\Delta \boldsymbol{X'}_i, \boldsymbol{W}_k-\boldsymbol{H'}_j \rangle,~\forall~k \in \{1, \ldots, K\},~j~\in~\mathcal{N}(i)$. It is clear that optimizing $\Delta \boldsymbol{X'}_i$ only considering $F_1$ will not meet maximizing $F_2$. This proves that adversarial perturbations applied at node $i$ are not optimized to serve the evasion of node $j$.

\subsection{The proof of Theorem \ref{theorem2}}
\begin{proof}
\par Introduce a Lagrange multiplier $\lambda$ to the formulation in (\ref{eq7}), for simplicity, let us use $\boldsymbol{X'}$ to denote $\Delta \boldsymbol{X'}_i$.
\begin{equation*}
F=\|\boldsymbol{\Phi} \boldsymbol{X'}\|^2-\lambda \sum_j \boldsymbol{X'}_j^2-\epsilon=0. 
\end{equation*}
\noindent Differentiating both sides:\\
$
\begin{aligned}
&\mspace{-2mu} \frac{\partial F}{\partial \boldsymbol{X'}_k} = \mspace{-4mu} \frac{\partial}{\partial \boldsymbol{X'}_{\mathrm{k}}} \sum_j\left(\sum_i \boldsymbol{\Phi}_{i j} \boldsymbol{X'}_i\right)^2\mspace{-4mu} -\lambda \frac{\partial}{\partial \boldsymbol{X'}_{\mathrm{k}}} \sum_j \boldsymbol{X'}_j^2 = 0,
\\
&\frac{\partial}{\partial \boldsymbol{X'}_{\mathrm{k}}} \sum_j\left(\sum_i \boldsymbol{\Phi}_{i j} \boldsymbol{X'}_i\right)^2=\lambda \frac{\partial}{\partial \boldsymbol{X'}_k} \sum_j \boldsymbol{X'}_j^2, 
\\
& \sum_j \boldsymbol{\Phi}_{k j} \sum_i \boldsymbol{\Phi}_{i j} \boldsymbol{X'}_i=\lambda \boldsymbol{X'}_k. 
\end{aligned}
$
\par A solution is the eigenvector of $\boldsymbol{\Phi}^T \boldsymbol{\Phi}$, since 
\begin{equation*}
\left\|\boldsymbol{\Phi} \boldsymbol{X'}^*\right\|^2=\lambda^2\left\|\boldsymbol{X'}^{* 2}\right\|. 
\end{equation*}
\noindent Then, we select the eigenvector having the largest eigenvalue of $\boldsymbol{\Phi}^T \boldsymbol{\Phi}$ since this will maximize the objective function.
\end{proof}

\subsection{The proof of Proposition \ref{prop1}}

\begin{proof}
\par consider an adversary node $i$, connected to $j$ adversary nodes and a non-adversary node $l$. The adversary knows $i$ and $j$ and is not aware of $l$. Let us compare the contribution of the perturbation applied at node $i$ to the message collected at node $l$ in cases where the optimization is done by maximizing the loss on node $i$, and the proposed case where the perturbation is done by maximizing a weighted average of the loss at node $i$ and its direct adversary neighbors $j \in \mathcal{N}(i)$, in a coordinated fashion.
\par For the case of perturbation optimization by maximizing the loss at node $i$, from (\ref{eq2}), and Corollary \ref{corollary} in Appendix \ref{appendixA}.1, an optimal $\Delta \boldsymbol{X'}_i$ must maximize its inner products with the columns $\boldsymbol{W}_1, \ldots, \boldsymbol{W}_n$ in the model parameter matrix $\boldsymbol{W}$. Next, we showed in Theorem \ref{theorem2} that this solution is the principal eigenvector of the matrix $\boldsymbol{W}^T\boldsymbol{W}$. This solution is denoted by $\boldsymbol{e}$ in equation (\ref{eq6}). Let us use the notation $\boldsymbol{e}(\boldsymbol{W})$ to denote the solution. The magnitude of the effect of the perturbation at $i$ received at node $l$ can be written as follows.
\begin{equation}
 \|m_{single}\|=\frac{1}{d_l} \|\boldsymbol{e}(\boldsymbol{W})\|,
\end{equation}
\noindent where $d_l$ is the degree of node $l$.
\par In the proposed MintA attack case, the magnitude of message contribution of the perturbation is.
\begin{equation}
 \|m_{proposed}\|=
 \frac{1}{d_l} \|\underset{\Delta \boldsymbol{X'}_i}{\alpha\operatorname{argmax}} ~ F_1\|
 +
 \frac{1}{d_l} \|\underset{\Delta \boldsymbol{X'}_i}{\beta \operatorname{argmax}}~F_2\|.
\end{equation}
\par This can be written as: 
\begin{equation}
 \|m_{proposed}\|=
 \|(\alpha\boldsymbol{e}(\boldsymbol{W})
 +
 \beta\sum_i\frac{1}{d_j}\boldsymbol{e}(\boldsymbol{W}-\boldsymbol{H'}_j)\|.
\end{equation}
\noindent Applying the triangle inequality,
\begin{equation}
 \|m_{proposed}\|\leq
 \alpha\|(\boldsymbol{e}(\boldsymbol{W})\|
 +
 \beta\|\sum_i\frac{1}{d_j}\boldsymbol{e}(\boldsymbol{W}-\boldsymbol{H'}_j)\|.
\end{equation}

\par Now, let us evaluate the ratio of $m_{proposed}$ to $m_{single}$.
\begin{small}
\begin{multline}\label{eq_last}
 \frac{ \|m_{proposed}\| }{ \|m_{single}\| }\leq
 \frac{\alpha\|\boldsymbol{e}(\boldsymbol{W})\|+\beta\|\sum_j\frac{1}{d_j}\boldsymbol{e}(\boldsymbol{W}-\boldsymbol{H'}_j)\|}{\|\boldsymbol{e}(\boldsymbol{W})\|}
 \\
 =
 \alpha+
 \beta \frac{\|\sum_j\frac{1}{d_j}\boldsymbol{e}(\boldsymbol{W}-\boldsymbol{H'}_j)\|}{\|\boldsymbol{e}(\boldsymbol{W})\|}. 
\end{multline}
\end{small}
\noindent The numerator in the quotient in (\ref{eq_last}) is the magnitude of the summation of multiple vectors of arbitrary directions. For these vectors to add up in magnitude, their directions need to be aligned. This means that the eigenvectors of the matrices 
$(\boldsymbol{W}-\boldsymbol{H'}_j))^{T}(\boldsymbol{W}-\boldsymbol{H'}_j))~$ need to align $\forall~j$. This requires these matrices to be scalar multiples of each other. This condition can be met only if the messages $(\boldsymbol{H'}_j)$ are equal. This requirement can not be met since different nodes can have different messages. Now, let us assume that the perturbation $\Delta \boldsymbol{X'}_i$ has the same perturbation magnitude in both cases (single node attack and the proposed MintA), and the solutions $\boldsymbol{e}$ are unit-norm eigenvectors to be scaled by the perturbation budget, and let us assume that nodes $i$ and $j$ have the same degree. Then, the maximum value of the quotient in (\ref{eq_last}) is $\alpha +\beta$ which is 1 since $\alpha$ and $\beta$ are weighted average coefficients. Thus, we can write
\begin{equation}
\|m_{proposed}\| / \|m_{single}\| <
 \alpha +\beta= 1. 
\end{equation}
Therefore, $\|m_{proposed}\|$ is strictly less than $\|m_{single}\|$.
\end{proof}

\section{Supplementary experiments and results}
\label{appendixB}

\par \textbf{Impact of the adversary subgraph's share from the DMG.} The experiments in Section \ref{Section5} consider an adversary owning 100 domains and a DMG of 4000 nodes. To further evaluate MintA, we experiment with varying numbers of adversary nodes (from 50 to 1000). For each case, the ASR and NFR are recorded when all adversary nodes are attacked, averaged over 30 trials, and presented in Fig. \ref{ASR_NFR_adv_scale}.

\begin{figure}[!bht]
\centering
\resizebox{0.46\columnwidth}{!}{
\includegraphics{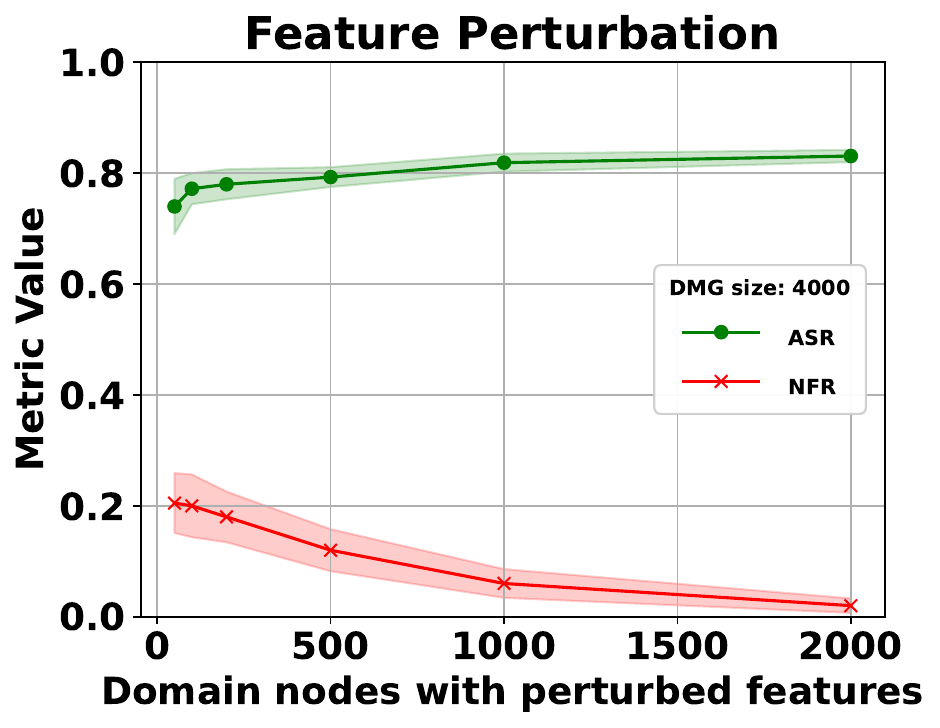}
}
\caption{Average ASR and NFR versus the adversary's share on a DMG when all adversary nodes are involved in the attack.}
\label{ASR_NFR_adv_scale} 
\end{figure}

\par From Fig. \ref{ASR_NFR_adv_scale}, in general, MintA is more successful with increasing its node share in the DMG, in terms of both ASR and NFR. However, the ASR increase and NFR decrease are not commensurate with the increase in the adversary's share in the DMG. We attribute this behavior to the impact of messages fed to the adversary's nodes from non-adversary nodes (unknown to the adversary). This impact dramatically decreases as more nodes in the DMG belong to the adversary. This means that the effect of ignoring non-adversary messages decreases. 
This will continue till the limiting case where the adversary possesses the entire DMG. However, it is more practically sound to assume an adversary with a small portion of the DMG as it is not possible to intervene with the DMG construction at the MDD entity.

\begin{figure}[!tb]
\centering
\resizebox{0.9555\columnwidth}{!}{
\begin{tabular}{cc}
\includegraphics[width=6cm]{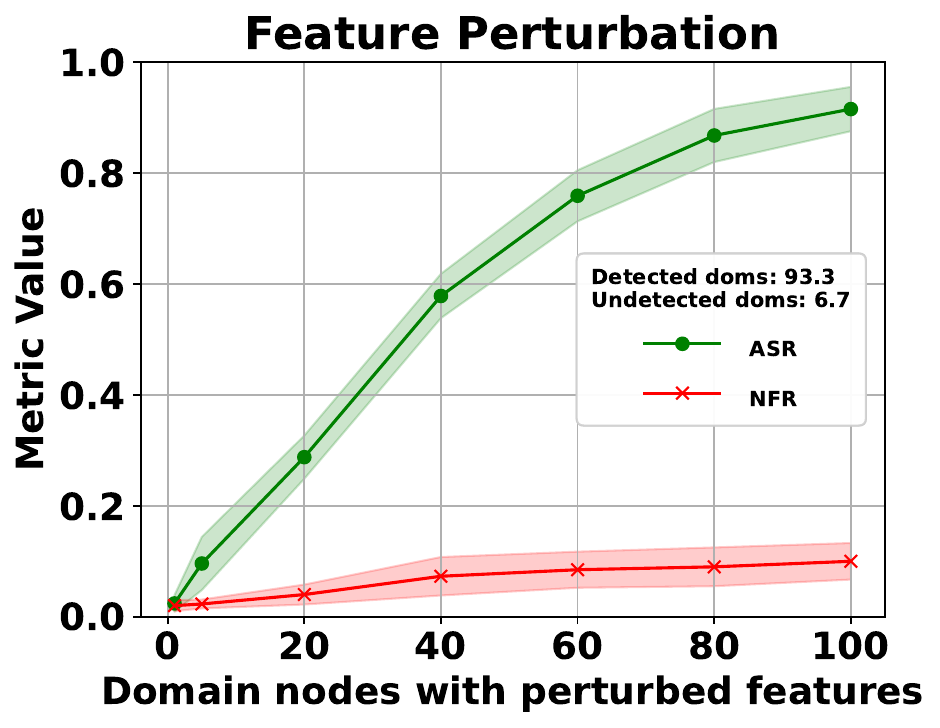}&
\includegraphics[width=6cm]{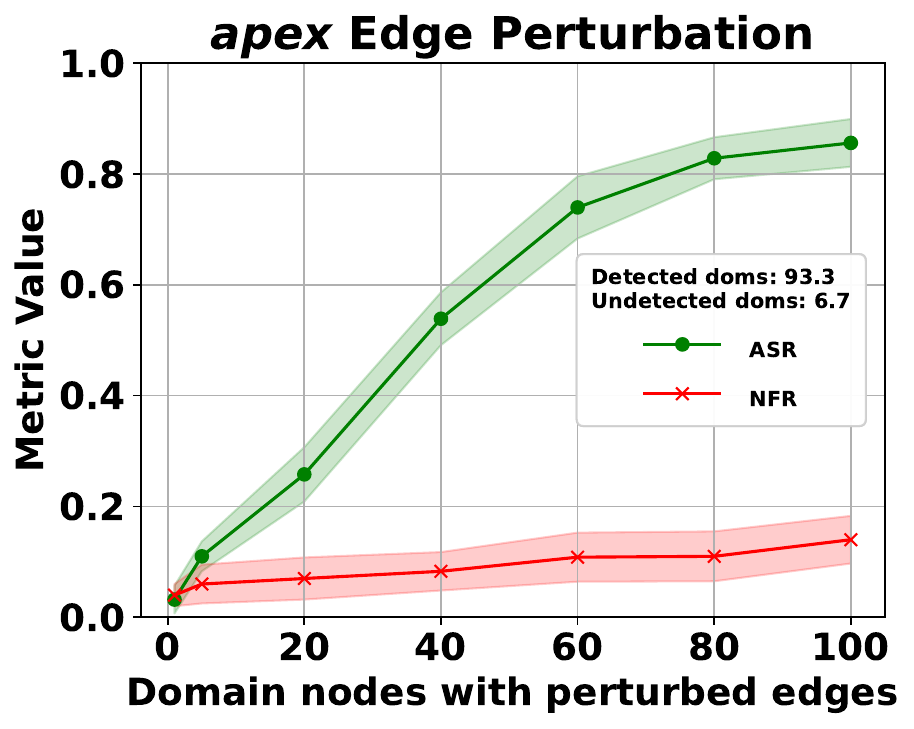}
\\
\Large{(a)}&
\Large{(b)}
\end{tabular}}
\caption{Feature and \textit{apex} edge performance with white-box model access in (a) and (b), receptively.}
\label{ASR_NFR_ras_WB} 
\end{figure}

\par \textbf{White-box model access.} In the following experiment, we evaluate MintA in a white-box setting. We examine MintA with feature and \textit{apex} edge perturbation attacks assuming access to the target MDD model parameters. Fig. \ref{ASR_NFR_ras_WB}(a) shows the performance with feature perturbation while Fig. \ref{ASR_NFR_ras_WB}(b) shows the performance with apex-edge perturbation. According to this figure, it is seen that white-box access to the model improves both attack metrics (ASR and NFR). However, the attack is still not perfect since the validity of the linearized model assumed in the formulations (and perturbation optimizations) is violated.

\par \textbf{Different surrogates.} To quantify the impact of the complexity and architecture of a surrogate model on MintA's performance, we repeat the feature perturbation attack (with a \textit{sampled adversary subgraph}) with two surrogates; a homogeneous graph attention network (GAT) \cite{velickovic2017graph} and a heterogeneous GAT (hetGAT) that uses edge-aware message passing with GAT and assumes the same edge types used by the target model. Fig. \ref{diff_surrogates} shows that both surrogates are inferior to the case of using a linearized GCN. This is because the validity of the linearized surrogate model becomes less accurate with using more sophisticated surrogates. However, the performance with the hetGAT surrogate is slightly better than that of the simple GAT one. 

\begin{figure}[htb]
\centering
\resizebox{0.9555\columnwidth}{!}{
\begin{tabular}{cc}
\includegraphics[width=6cm]{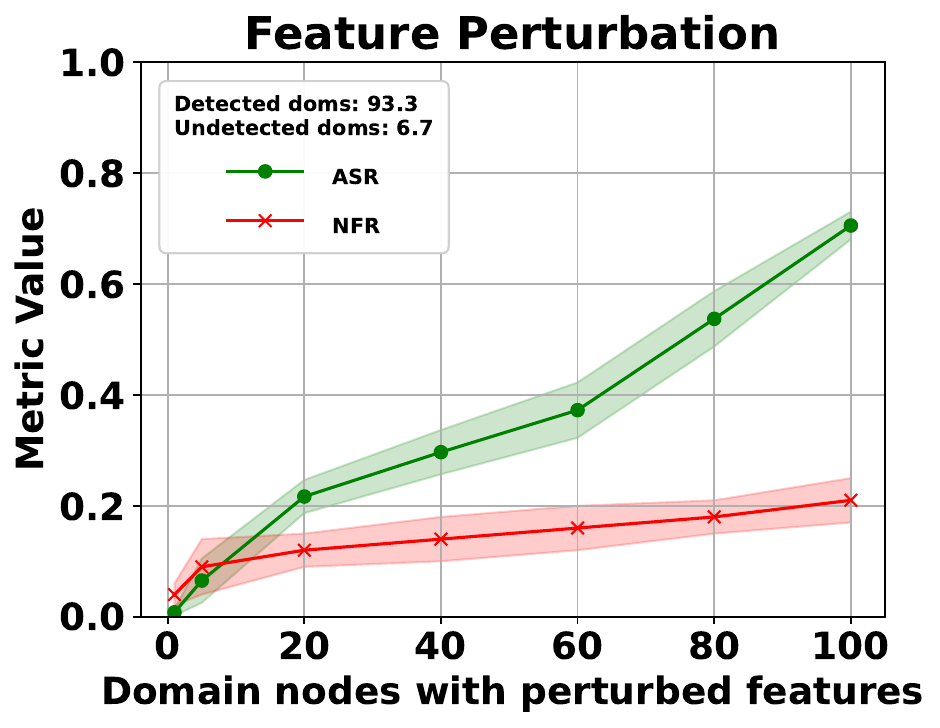}&
\includegraphics[width=6cm]{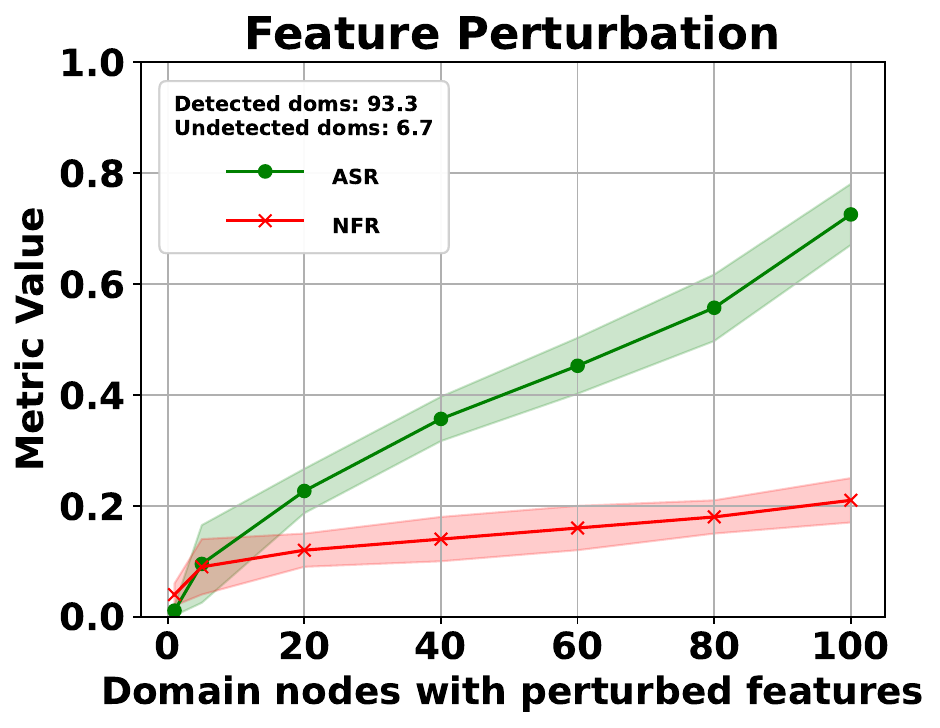}
\\
\Large{(a)}&
\Large{(b)}
\end{tabular}}
\caption{Average ASR and NFR with a GAT surrogate and a hetGAT surrogate in (a) and (b), respectively.}
\label{diff_surrogates} 
\end{figure}

\newpage 

\section{Meta-Review}

\subsection{Summary}
\par This paper proposes an adversarial evasion attack on graph-neural-network-based malicious domain detectors (MDD). The authors optimize node/edge perturbations on a surrogate MDD based on the subgraph of adversary nodes, and evaluate the effects of perturbations on the target MDD.

\subsection{Scientific Contributions}
\begin{itemize}
\item Identifies an Impactful Vulnerability
\item Provides a Valuable Step Forward in an Established Field
\item Independent Confirmation of Important Results with Limited Prior Research
\end{itemize}

\subsection{Reasons for Acceptance}
\begin{enumerate}
\item The paper demonstrates an impactful vulnerability in GNN-based MDDs. The authors propose a multi-instance adversarial attack that requires black-box access to the target MDD, and consider both outlier detection and graph purification as defense mechanisms.
\item The paper provides a valuable step forward in an established field, demonstrating that the effort to evade the detection at a certain adversary node can contradict the evasion of other adversary nodes. The authors formulate the evasion as a two-objective optimization problem that jointly maximizes the model's loss at each adversary node and its neighbors.
\item The paper designs and performs experiments under both synthetic and real-world settings to evaluate the efficacy of the proposed attack, demonstrating its feasibility.
\end{enumerate}

\subsection{Noteworthy Concerns} 
\begin{enumerate} 
\item The methodology for the use of sampled and created adversary models lacks clarity, which is exacerbated by imprecise language and complicated evaluation setup. For example, the target malicious domain detection (MDD) model demonstrates inadequate performance in "created adversary" settings, where the detection rate for malicious domains is ~30\%, suggesting that the adversary has already achieved significant success with over 70\% of their domains being misclassified as benign, without even conducting any form of attack. Furthermore, the ''created adversary'' experiments appear to conflate the ground truth with own labeling. Reviewers believe that a discussion on threats to the validity of the experimental results could be beneficial, which can clarify the methodological choices of the evaluation.
\item Threat model diverges from prior work (i.e., [30]) without motivation. Specifically, the assumptions to preserve the graph structure and feature statistics, which ensure that perturbations are unnoticeable, are violated. It is thus unclear whether IG adversarial attack can serve as an appropriate baseline given the different assumptions.
\item The paper does not discuss the relationship between the quality of the surrogate and attack success rate, e.g., how many queries the adversary should use to train their MDD.
\end{enumerate}

\section{Response to the Meta-Review} 

\par Below is a summary of our responses to the remaining concerns in the meta-review.

\begin{itemize}
\item Concern 1: During our experiments, we aim to model the adversary's intervention in the DMGs using what we refer to as the ``adversary's subgraph''. However, since actual adversary domain nodes are unavailable, we resort to two different modeling approaches. In the first approach which we call the \textit{sampled adversary modeling}, we exclusively sample connected nodes from the ground-truth malicious domains in the test set. This allows us to simulate the behavior of an adversary without actual adversary domain data. In the second approach, which we call the \textit{created adversary modeling}, we manually register domains to simulate the creation of adversary domains. We carefully adhere to practical constraints while creating these domains to provide a realistic representation of an adversary's behavior. Further details on these modeling approaches, including the motivations behind their usage and their limitations, can be found in the last two paragraphs of Subsection \ref{Section5}.1 (\textit{The setup and dataset}). Additionally, in Section 7 (\textit{Discussion}), we address the limitations of our study, including the absence of datasets containing domains of actual adversaries. These limitations are explained in detail in the second paragraph of that section.

\item Concern 2: MintA maintains its stealthiness by manipulating only a small fraction, specifically 5 out of 45 elements, in the feature vector while preserving the overall statistics of the node features. In the experiment showcased in Subsection \ref{Section5}.4 (\textit{The performance with outlier detection}), MintA demonstrates its ability to evade outlier detection, as it does not manipulate statistical features, which are crucial for isolation forest-based detection methods. In comparison to existing single-instance attack methods like IG-ADV \cite{wu2019adversarial} and PGD-ADV \cite{xu2019topology}, which excel at attacking individual nodes but struggle against multiple connected nodes, MintA is specifically designed for multi-instance attacks. It outperforms these baselines when applied to connected nodes, as illustrated in Fig. 16 and Fig. 17. Further details explaining the ineffectiveness of IG-ADV with multiple-connected nodes are provided in Footnote 5. The rationale behind selecting IG-ADV and PGD-ADV as the closest baselines for comparison is discussed in the first paragraph of Subsection 5.6 (\textit{Comparison with targeted adversarial attacks}).

\item Concern 3: To facilitate ease of use and enable the analysis of input perturbations, we employ a simplified linearized 2-hop GCN as a surrogate model. During the surrogate model training, we query the target model for 600 domain nodes. It is essential to note that the training data for the surrogate is entirely distinct from the evaluation datasets to avoid any data leakage. The success of the attack using this surrogate model is an indicator of potential transferability to the target model. Although we evaluated more complex surrogate models, they were found to be inferior to the linearized GCN in terms of performance. To provide clarity on the surrogate model's quality, we include a clarification in the second paragraph of Section 3 (\textit{Threat Model}). Additionally, experiments on the two additional surrogates are presented in Fig. 20 and the last paragraph of Appendix \ref{appendixB}.
\end{itemize}

\end{document}